\newtheorem{theorem}{Theorem}
\newtheorem{corollary}[theorem]{Corollary}
\newtheorem{lemma}[theorem]{Lemma}
\newtheorem{observation}[theorem]{Observation}
\newtheorem{proposition}[theorem]{Proposition}
\newtheorem{claim}[theorem]{Claim}
\newtheorem{fact}[theorem]{Fact}
\newcommand{\setcond}[2]{\left\{#1\: \middle|\: #2\right\}}
\newcommand{\real}{\mathbb{R}}
\newcommand{\naturals}{\mathbb{N}}
\newcommand{\ip}[2]{\langle #1, #2 \rangle}
\newcommand{\mc}[1]{\mathcal{#1}}
\newcommand{\composedwith}{\circ}
\newcommand{\composition}{\odot}
\newcommand{\crit}{{\rm crit}}
\newcommand{\witstar}[1]{\mc{W}^{*}_{#1}}
\newcommand{\wit}[1]{\mc{W}_{#1}}
\newcommand{\alimf}[2]{{#1}^{\lim}(#2)}
\title{Composition limits and separating examples for some Boolean function complexity measures}
\author{Justin Gilmer\thanks{Supported by NSF  grant CCF 083727}\\
Department of Mathematics\\
Rutgers University\\
Piscataway, NJ, USA.\\
\texttt{jmgilmer@math.rutgers.edu}
\and
Michael Saks\thanks{Supported by NSF grants CCF-083727 and CCF-1218711 }\\
Department of Mathematics\\
Rutgers University\\
Piscataway, NJ, USA.\\
\texttt{saks@math.rutgers.edu}
\and
Srikanth Srinivasan\thanks{Work partially done as a Postdoctoral researcher at DIMACS, Rutgers University.}\\
Department of Mathematics\\
IIT Bombay\\
Mumbai, India.\\
\texttt{srikanth@math.iitb.ac.in}
}
\begin{document}

\maketitle

\begin{abstract}
Block sensitivity ($bs(f)$), certificate complexity ($C(f)$) and fractional certificate complexity ($C^*(f)$) are three fundamental
combinatorial measures of complexity of a boolean function $f$. It has
long been  known that $bs(f) \leq C^{\ast}(f) \leq C(f) =O(bs(f)^2)$. We provide an infinite family of examples
for which $C(f)$ grows quadratically in $C^{\ast}(f)$ (and also $bs(f)$) giving optimal separations
between these measures.  Previously the biggest separation known
was $C(f)=C^{\ast}(f)^{\log_{4.5}5}$.  We also give a family of examples for which
$C^{\ast}(f)=\Omega(bs(f)^{3/2})$.

These examples are obtained by composing boolean functions in various ways.  Here the composition $f \circ g$ of
$f$ with $g$ is obtained by substituting for each
variable of $f$ a copy of $g$ on disjoint sets of variables.  To construct and
analyse these examples we systematically investigate the behaviour under function composition
of these measures
and also the sensitivity measure $s(f)$.  The measures $s(f)$, $C(f)$ and $C^{\ast}(f)$
behave nicely under composition: they are submultiplicative (where measure $m$ is
submultiplicative if $m(f \circ g) \leq m(f)m(g)$)
with equality holding under some fairly general conditions.  The measure $bs(f)$ is qualitatively different:
it is not submultiplicative.  This qualitative difference was not noticed in the previous literature and we correct some errors that appeared in previous papers.
We define the composition limit of a measure $m$ at function $f$, $m^{\lim}(f)$ to be the
limit as $k$ grows of $m(f^{(k)})^{1/k}$, where $f^{(k)}$ is the iterated composition of $f$ with itself
$k$-times. For any function $f$ we show that $bs^{\lim}(f) = (C^*)^{\lim}(f)$ and characterize $s^{\lim}(f), (C^*)^{\lim}(f)$, and $C^{\lim}(f)$ in terms of the largest eigenvalue of a certain set of $2\times 2$ matrices associated with $f$. 
\end{abstract}


%

\section{Introduction}

\subsection{Measures, critical exponents and iterated limits}

There is a large class of complexity measures for boolean functions
that seek to quantify, for each
function $f$, 
the amount of knowledge about individual variables needed to evaluate  $f$. These
include decision tree complexity and
its randomized and quantum variants, (Fourier) degree, certificate complexity, sensitivity,
and block sensitivity.    The value of such a  measure is at most the number
of variables.  
There is a long line of research aimed at bounding one such measure in terms of another.  For 
measures $a$ and $b$ let us write $a \leq_r b$ if there are constants $C_1,C_2$
such that for every total boolean function $f$, $a(f) \leq C_1 b(f)^r+C_2$. 
For example, the decision tree complexity of $f$, $D(f)$, is at least its degree $deg(f)$ and thus
$deg \leq_1 D$. It is also known \cite{midrijanis} that  $D \leq_3 deg$.
We say that
$a$ is {\em polynomially bounded} by $b$ if $a \leq_r b$ for some $r>0$ and that $a$ and $b$ are
{\em polynomially equivalent} if each  is polynomially bounded by the other.  
The measures mentioned above, with the notable
exception of sensitivity, are known to be polynomially equivalent.  

For a function $f$, the decision tree complexity, degree, certificate complexity, block sensitivity
and sensitivity of $f$ are denoted, respectively, $D(f)$, $deg(f)$, $C(f)$, $bs(f)$ and $s(f)$.  We also
define the fractional certificate complexity $C^{\ast}(f)$ (which is within constant factors of the
randomized certificate complexity defined in \cite{aaronsonqcc}; see Appendix \ref{sec:CstarvsRC}).  
These measures
are defined in Section \ref{sec:preliminaries}; definitions of others may be found
in the survey \cite{buhrmandewolf}.

If measure $a$ is polynomially bounded by $b$ we define the {\em critical exponent
for $b$ relative to  $a$}, $\crit(a,b)$, to be the infimum $r$ such that $a \leq_r b$, which (essentially) gives
the tightest possible upper bound of $a$ as a power of $b$.  In \cite{senssurvey},
there is a table giving the best known upper and lower bounds for the critical exponents
for all pairs from degree, deterministic query complexity, certificate complexity and block sensitivity.
 For example it is known that $\crit(D,C)=2$, while for $crit(D,deg)$ the best bounds known
are  $\log_3(6) \leq \crit(D, deg) \leq 3$. Typically,  
lower bounds on $\crit(a,b)$ (implicitly)  use the following fact:

\begin{proposition}
\label{prop:crit LB}
Let $(f_k:k \geq 1)$ be a sequence of boolean functions for which $b(f_k)$ tends to infinity.
If $\log a(f_k)/\log b(f_k)$ tends to a limit $s$ then $\crit(a,b) \geq s$.  More generally,
\[
\crit(a,b) \geq \lim \inf \frac{\log a(f_k)}{\log b(f_k)}.
\]
\end{proposition}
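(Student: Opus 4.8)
The plan is to unwind the definition $\crit(a,b) = \inf\{r : a \leq_r b\}$ and to show that every exponent $r$ witnessing $a \leq_r b$ must satisfy $r \geq \liminf_k \frac{\log a(f_k)}{\log b(f_k)}$; taking the infimum over all such $r$ then yields the proposition. Note that the first (limit) assertion is the special case of the second in which the limit exists, so it suffices to prove the $\liminf$ statement.

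First I would fix an arbitrary $r > 0$ with $a \leq_r b$, together with constants $C_1, C_2$ such that $a(f) \leq C_1 b(f)^r + C_2$ for every total boolean $f$. Specializing to $f = f_k$ and using that $b(f_k) \to \infty$, for all sufficiently large $k$ the additive term gets absorbed, $C_1 b(f_k)^r + C_2 \leq 2 C_1 b(f_k)^r$, so $\log a(f_k) \leq \log(2C_1) + r\log b(f_k)$. Since $\log b(f_k)$ is positive for large $k$, dividing through gives $\frac{\log a(f_k)}{\log b(f_k)} \leq r + \frac{\log(2C_1)}{\log b(f_k)}$, and letting $k \to \infty$ the last term vanishes, so $\liminf_k \frac{\log a(f_k)}{\log b(f_k)} \leq r$. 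As this holds for every valid $r$, and $\crit(a,b)$ is by definition the infimum of exactly these $r$, we conclude $\crit(a,b) \geq \liminf_k \frac{\log a(f_k)}{\log b(f_k)}$. (If instead $a$ is bounded by an absolute constant then the ratio tends to $0$ and there is nothing to prove, since the measures in question are nonnegative.)

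There is no genuine obstacle here; this is a short limiting argument, and the only points needing a little care are: the additive constant $C_2$ in the definition of $\leq_r$ is harmless precisely because $b(f_k)^r \to \infty$; $\log b(f_k)$ is eventually positive, so dividing by it preserves the inequality; and one should tacitly use $a(f_k) \geq 1$ (or simply note that $\log a(f_k) \leq 0$ would only strengthen the bound) so that the logarithms are well behaved.
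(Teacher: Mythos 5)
Your argument is correct, and it is exactly the routine limiting argument the paper has in mind: the authors explicitly write "The proof of this proposition is routine" and give no proof, so there is nothing to compare against beyond confirming the standard approach. Your handling of the additive constant $C_2$ via $b(f_k)\to\infty$, the division by the eventually-positive $\log b(f_k)$, and the final infimum over witnessing exponents $r$ is the intended argument; the only (harmless) case you leave implicit is when no $r$ witnesses $a\le_r b$, in which case $\crit(a,b)=+\infty$ and the inequality is vacuous.
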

The proof of this proposition is routine.  Useful lower bounds on $\crit(a,b)$ are obtained  by carefully
selecting the sequence $(f_k)$.  One approach to choosing the sequence is to select some $f_1$ and
define $f_k$ to be the {\em $k$th
iterated composition} of $f_1$, which is defined as follows.
If $f$ and $g$ are boolean functions, respectively, on $n$ and $m$
variables then  $f \composedwith g$ is defined on $nm$ variables split into $n$ blocks of $m$ variables
and is obtained by evaluating $g$ on each block, and then evaluating $f$ on the sequence of $n$ outputs.  
The $k$th iterated composition of $f$ is defined inductively by $f^{(1)}=f$ and 
$f^{(k)}=f \composedwith f^{(k-1)}$ for $k \geq 2$.  We say that a complexity measure $a$ is {\em multiplicative with respect
to function $f$} if $a(f^{(k)}) = a(f)^k$ for all $k \geq 1$.  We say that $a$ is {\em multiplicative} if
for any two functions $f$ and $g$ we have $a(f \composedwith g)=a(f)a(g)$; this condition implies
immediately that $a$ is multiplicative with respect to every function $f$.   As a direct consequence of Proposition
\ref{prop:crit LB} we have:

\begin{proposition}
\label{prop:crit LB 2}
If $a$ and $b$ are complexity measures that are each multiplicative with respect to the function $f$
then $\crit(a,b) \geq \log a(f)/\log b(f)$.
\end{proposition}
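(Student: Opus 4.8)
The plan is to instantiate Proposition~\ref{prop:crit LB} with the sequence of iterated compositions $f_k \defeq f^{(k)}$. Since $a$ and $b$ are each multiplicative with respect to $f$, we have $a(f_k) = a(f^{(k)}) = a(f)^k$ and $b(f_k) = b(f^{(k)}) = b(f)^k$ for every $k \geq 1$, so both quantities are completely under control along this sequence.

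First I would dispose of the degenerate case $b(f) \leq 1$. If $b(f) \leq 1$ then $\log a(f)/\log b(f)$ is either undefined or the asserted inequality is vacuous; moreover, if in addition $a(f) \geq 2$, then $a(f_k) = a(f)^k \to \infty$ while $b(f_k) \leq 1$, so $a$ is not polynomially bounded by $b$ and $\crit(a,b)$ is not even defined. Hence we may assume $b(f) \geq 2$.

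Now $b(f_k) = b(f)^k$ tends to infinity as $k \to \infty$, so $(f_k)$ satisfies the hypothesis of Proposition~\ref{prop:crit LB}. Furthermore, for every $k \geq 1$,
\[
\frac{\log a(f_k)}{\log b(f_k)} = \frac{k \log a(f)}{k \log b(f)} = \frac{\log a(f)}{\log b(f)},
\]
so this sequence is constant and in particular converges to $s \defeq \log a(f)/\log b(f)$. Applying Proposition~\ref{prop:crit LB} gives $\crit(a,b) \geq s = \log a(f)/\log b(f)$, which is exactly the claim.

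Since every step is immediate once the right sequence is chosen, there is no substantive obstacle here; this is genuinely a ``direct consequence'' of Proposition~\ref{prop:crit LB}. The only point meriting a moment's care is noticing that one needs $b(f) \geq 2$ (rather than merely $b(f) \geq 1$) to guarantee $b(f_k) \to \infty$, together with the observation that the excluded case is degenerate in the sense described above.
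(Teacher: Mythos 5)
Your argument is exactly the intended one: the paper states Proposition~\ref{prop:crit LB 2} as ``a direct consequence'' of Proposition~\ref{prop:crit LB}, obtained by taking $f_k = f^{(k)}$ and using multiplicativity to make the ratio $\log a(f_k)/\log b(f_k)$ constant, and you spell out precisely those steps. One very minor nitpick: to ensure $b(f_k)=b(f)^k\to\infty$ you only need $b(f)>1$, not $b(f)\geq 2$ (these coincide for integer-valued measures but not, say, for $C^{\ast}$); this does not affect the correctness of the argument.
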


For example, the lower bound $\crit(D,deg) \geq \log_3 6$ is obtained by applying Proposition
\ref{prop:crit LB 2} to a specific six variable boolean function $f$ having
$deg(f)=3$ and $D(f)=6$ using the easy fact that the measures $deg$ and $D$ are multiplicative.

For non-multiplicative measures $a$, $b$
one may be able to identify specific functions $f$ such that $a$ and $b$ are each
multiplicative on  $f$ which is enough to use Proposition
\ref{prop:crit LB 2}. 
While $s,C,C^{\ast}$ are not multiplicative,  each is multiplicative on functions $f$ that
satisfy $m_0(f)=m_1(f)$ (see Section \ref{sec:complexmeasures} for definitions).  
However, this fails for block sensitivity, and this
failure is responsible for some errors in the literature.  
In \cite{aaronsonqcc} it was proposed that a six variable function $f$ given by Paterson (see \cite{bublitz}) could
be used
to obtain a lower bound on the critical exponent of block sensitivity relative to
certificate complexity.  The function $f$ has block sensitivity $4$ and certificate complexity $5$
and in fact satisfies $bs_x(f)=4$  and $C_x(f)=5$ for all inputs $x$.  This was used to deduce
that both block sensitivity and certificate complexity are multiplicative
on  $f$, and therefore $\crit(C,bs) \leq \log_4 5$.  It turns out, however, that
block sensitivity is not multiplicative with respect to $f$. In this case, $bs(f^{(m)})^{1/m}$ tends to $4.5$ rather than $4$
and so the resulting lower bound on $crit(C^{\ast},bs)$
is $\log_{4.5}(5)$ rather than $\log_4 5$.

Proposition \ref{prop:crit LB 2} can be extended to the case that $a$ and $b$ are not
necessarily multiplicative on $f$.  Given any 
measure $a$
we can define a new 
measure $a^{\lim}$, called the {\em composition limit of $a$},
 where $a^{\lim}(f)=
\lim\inf a(f^{(k)})^{1/k}$.  If $a$ is multiplicative on $f$ then
$a^{\lim}(f)=a(f)$.   Applying Proposition \ref{prop:crit LB} yields the following
extension of Proposition \ref{prop:crit LB 2}:

\begin{proposition}
\label{prop:crit LB 3}
Suppose $a$ and $b$ are complexity measures and $f$ is a boolean function for which $b^{\lim}(f)>1$.
Then $\crit(a,b) \geq \log a^{\lim}(f)/\log b^{\lim}(f)$.
\end{proposition}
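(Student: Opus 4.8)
The plan is to deduce the bound from Proposition~\ref{prop:crit LB} by applying it to a well-chosen sequence of iterated compositions of $f$. Write $\alpha_k \defeq a(f^{(k)})^{1/k}$ and $\beta_k \defeq b(f^{(k)})^{1/k}$, so that by definition $a^{\lim}(f) = \liminf_k \alpha_k$ and $b^{\lim}(f) = \liminf_k \beta_k$. We may assume $f$ is non-constant (otherwise $b^{\lim}(f) \le 1$, contrary to hypothesis), so that every $a(f^{(k)})$ and $b(f^{(k)})$ is at least $1$; moreover, since a complexity measure is bounded by the number of variables and $f^{(k)}$ has $n^k$ of them (with $n$ the number of variables of $f$), we have $1 \le \alpha_k \le n$ for all $k$. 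Finally, $b^{\lim}(f) > 1$ forces $\beta_k > 1 + \varepsilon$ for some fixed $\varepsilon > 0$ and all large $k$, so $b(f^{(k)}) = \beta_k^{\,k} \to \infty$.

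One might hope to apply Proposition~\ref{prop:crit LB} directly to the sequence $(f^{(k)})_{k\ge1}$, but this only gives $\crit(a,b) \ge \liminf_k \big(\log\alpha_k/\log\beta_k\big)$, and in general the $\liminf$ of a ratio can be strictly smaller than the ratio of the $\liminf$s, so this need not be at least $\log(\liminf_k\alpha_k)/\log(\liminf_k\beta_k)$. The fix is to first pass to a subsequence $(k_j)_{j\ge1}$ along which $\beta_{k_j}\to\liminf_k\beta_k = b^{\lim}(f)$, and then apply Proposition~\ref{prop:crit LB} to the functions $\big(f^{(k_j)}\big)_{j\ge1}$ --- legitimate since $b(f^{(k_j)})\to\infty$. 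This yields
\[
\crit(a,b) \;\ge\; \liminf_{j} \frac{\log a(f^{(k_j)})}{\log b(f^{(k_j)})} \;=\; \liminf_{j} \frac{\log\alpha_{k_j}}{\log\beta_{k_j}}.
\]

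It remains to bound this $\liminf$ below by $\log a^{\lim}(f)/\log b^{\lim}(f)$. Along the subsequence the denominator converges, $\log\beta_{k_j}\to\log b^{\lim}(f)$, a \emph{positive} constant --- this is the only place the hypothesis $b^{\lim}(f)>1$ is used. The numerators satisfy $\log\alpha_{k_j}\ge0$ and, being a subsequence of $(\log\alpha_k)$, have $\liminf_j\log\alpha_{k_j} \ge \liminf_k\log\alpha_k = \log a^{\lim}(f)$. A routine $\liminf$ estimate --- using that we are dividing a nonnegative sequence by one converging to a positive limit --- then gives $\liminf_j\big(\log\alpha_{k_j}/\log\beta_{k_j}\big) \ge (\liminf_j\log\alpha_{k_j})/\log b^{\lim}(f) \ge \log a^{\lim}(f)/\log b^{\lim}(f)$, as required. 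The one genuinely non-mechanical point is the choice of subsequence above --- arranging that it is the denominator, not the ratio, that is driven to its $\liminf$ --- so that is the step I would take most care over; everything else is routine manipulation of limits. (One can also bypass Proposition~\ref{prop:crit LB} entirely: starting from $a(g)\le C_1 b(g)^r + C_2$ with $g=f^{(k)}$, use $b(f^{(k)})\to\infty$ to absorb the additive constant, take $k$-th roots, and take $\liminf$ to obtain $a^{\lim}(f)\le (b^{\lim}(f))^r$ directly; optimising over admissible $r$ then gives the claim, with the same subsequence subtlety resurfacing at the $\liminf$ step.)
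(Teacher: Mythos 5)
Your proof is correct, and it follows the approach the paper intends: the paper states Proposition~\ref{prop:crit LB 3} as an application of Proposition~\ref{prop:crit LB} without giving the details, and your argument supplies exactly those details. You correctly identify the one nontrivial point — that one cannot apply Proposition~\ref{prop:crit LB} to the full sequence $(f^{(k)})$, since $\liminf$ of a ratio need not dominate the ratio of $\liminf$s — and the fix of passing to a subsequence along which the denominator $\beta_{k_j}$ converges to $b^{\lim}(f)$, while the numerator's $\liminf$ can only increase, is the right one. The estimate $\liminf_j (x_j/y_j)\ge(\liminf_j x_j)/L$ when $x_j\ge0$ and $y_j\to L>0$ is valid, and the hypothesis $b^{\lim}(f)>1$ is used exactly where it must be, both to ensure $b(f^{(k_j)})\to\infty$ (so Proposition~\ref{prop:crit LB} applies) and to make the limiting denominator $\log b^{\lim}(f)$ strictly positive. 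This is a complete and careful write-up of an argument the paper leaves implicit.
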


To apply this proposition, we need to analyse $a^{\lim}(f)$ and $b^{\lim}(f)$.


\subsection{The contributions of this paper}

In this paper we analyse the behaviour of certificate complexity, fractional certificate complexity,
 sensitivity and block
sensitivity under composition.  This enables us to give characterizations
of the composition limits  $s^{\lim}$, $C^{\lim}$, $ (C^*)^{\lim}$ and $bs^{\lim}$. We also obtain new lower bounds
on $\crit(C,bs)$, $\crit(C,C^{\ast})$ and  $\crit(C^{\ast},bs)$; in the first two cases the new lower
bounds are tight.
  
  The paper is organized as follows.

\begin{itemize} 
\item In Section \ref{sec:preliminaries} we give various definitions and technical preliminaries. We introduce a new notion of an assemblage, which provides a common abstraction for the objects underlying the measures $s(f), C(f), bs(f)$, and $C^*(f)$.
\item 
In Section \ref{sec:Main}, we characterize the composition limit of $s(f)$, $C(f)$ and $C^{\ast}(f)$.   For $m \in \{s(f),C(f),C^{\ast}(f)\}$, 
we always have  $\min\{m_0(f),m_1(f)\} \leq m^{\lim}(f) \leq m(f)$.  
We express the composition limit
as the minimum over a certain family of 2 by 2 matrices (determined by the function $f$ and the complexity
measure) of the largest eigenvalue of the matrix.      
\item
In Section \ref{sec:blocktensor}, we consider the composition limit of block sensitivity.  
We prove Theorem \ref{thm_bs_tensor}
which says that for any boolean function $f$, the composition limit of $bs(f)$ is equal to 
the composition limit of  $C^{\ast}(f)$.   
\item In Section \ref{sec:scottremark}, we discuss the previously mentioned example from \cite{bublitz} and correct the analysis
of $bs^{\lim}(f)$.
\item In Section \ref{sec:examples}, we give improved separations between block sensitivity and  fractional
block sensitivity, and between block sensitivity and certificate complexity. 
We present two distinct examples that give the tight lower bounds $\crit(C,C^{\ast}) \geq 2$ and $\crit(C,bs) \geq 2$ and 
an example that shows $\crit(C^*, bs) \geq 3/2$.
\item In Appendix \ref{sec:CstarvsRC} we prove that fractional certificate complexity is within a constant factor of the
randomized certificate complexity defined in \cite{aaronsonqcc}.
\end{itemize}

Independently, Tal (\cite{taleccc,talitcs}) proved results that have some overlap with our work. He showed that $bs(f)$ is not submultiplicative and proved that $(C^*)^{lim}(f) = bs^{lim}(f)$. He also observed the submultiplicativity of the measures $C(f), C^*(f),$ and $s(f)$. Finally, he showed a lower bound on $\crit(C,C^*)$ of $\log(26)/\log(17)$, which we improve here to the optimal constant $2$.

%
%

\section{Preliminaries} \label{sec:preliminaries}

\subsection{Combinatorial objects over an index set $I$}
\label{sec:objects over I}
Let $I$ be an arbitrary finite set, called the {\em index} set.  We will be considering a large number of mathematical objects built relative to $I$.

\begin{itemize}
\item 
A map from $I$ to the nonnegative reals is called a {\em weight function over $I$}.  
 A weight function is said to be {\em $[0,1]$-valued} (respectively {\em  integral}, {\em  boolean})
 if all weights lie in $[0,1]$ (respectively $\mathbb{Z}$,
$\{0,1\}$).  A boolean weight function $w$ corresponds naturally to the subset $w^{-1}(1)$.
For any weight function $w$ over $I$ and $J \subseteq I$ we write $w(J)$ for $\sum_{j \in J} w(j)$ and $|w|$ for $w(I)$.
\item A {\em weight function family over $I$} is a set of weight functions over $I$.  The family is $[0,1]$-valued (respectively integral,
boolean) if weight functions in the family have this property.  A boolean weight function family corresponds in the obvious way to a
collection of subsets (hypergraph) on $I$.   We will use the terms hypergraph and boolean weight function family
interchangeably.
\item
A {\em boolean assignment over $I$}  or, simply, an {\em assignment}  is a map from $I$ to $\{0,1\}$.
\item 
 A {\em boolean function over $I$} is a map from assignments over $I$ to $\{0,1\}$.
\end{itemize}

We now introduce
a few non-standard notions:

\begin{itemize}
\item A {\em selector} is a function on domain $\{0,1\}$.  We typically denote selectors by
vector notation  $\vec{\alpha}=(\alpha^0,\alpha^1)$.
\item An {\em assignment selector} is a selector $\vec{\alpha}=(\alpha^0,\alpha^1)$
where $\alpha^0$ and $\alpha^1$ are boolean assignments over $I$.  
\item An assignment selector $\vec{\alpha}$ is {\em $f$-compatible} for a boolean function $f$ provided that $f(\alpha^0)=0$
and $f(\alpha^1)=1$.
\item A {\em weight function selector} is a selector $\vec{w} = (w^0,w^1)$ where $w^0$ and $w^1$ are weight functions over $I$.
\end{itemize}

\subsection{Packing and covering in hypergraphs}
\label{sec:PandC}

In the previous section we introduced both hypergraphs over $I$ and weight functions over $I$.  We will also
need to consider weight functions whose domain is $\mc{H}$ (rather than $I$).
For a hypergraph $\mc{H}$ on $I$, we have the following (fairly standard) definitions:

\begin{itemize}
\item  For a weight function $w$ on $I$, a {\em fractional $w$-packing} of $\mc{H}$ is a weight function $\lambda$ on $\mc{H}$
with the property that for each $i \in I$ the sum of $\lambda(E)$ over all $E$ containing $i$ is at most $w(i)$.
If we omit the word {\em fractional} then  $\lambda$ is assumed to be integer valued.   
Given $M\in\naturals$, an \emph{$M$-fold packing} of $\mc{H}$ is an integral $w$-packing for the constant weight function $w(\cdot) \equiv M$.
Thus a 1-fold packing corresponds to a collection of pairwise disjoint edges of $\mc{H}$, and is called
simply a {\em packing}.  
The weight of a (fractional) packing $\lambda$, denoted $|\lambda|$ is the sum of $\lambda(E)$ over all $E \in \mc{H}$.
\item A {\em fractional hitting set} for $\mc{H}$ is a weight function $\beta$ on $I$ satisfying
$\beta(E) \geq 1$ for all $E \in \mc{H}$.  If ``fractional''  is omitted then $\beta$ is assumed to be boolean and so
corresponds to a subset $S$ of $I$ that meets every edge. 
\item $\nu(\mc{H})$, $\nu^w(\mc{H})$, $\nu^M(\mc{H})$ and $\nu^{\ast}(\mc{H})$ denote the maximum weight (size) of a packing of $\mc{H}$, the maximum size of a $w$-packing of $\mc{H}$,
the maximum size of an $M$-fold packing of $\mc{H}$, and the maximum weight of a fractional packing of $\mc{H}$ respectively.
\item $\tau(\mc{H})$ and $\tau^{\ast}(\mc{H})$ denote the size of the smallest hitting set of $\mc{H}$ and the minimum weight of a fractional hitting set for $\mc{H}$ respectively.
\end{itemize}

For a hypergraph $\mc{H}$ we denote by $\partial\mc{H}$ the hypergraph consisting of those edges of $\mc{H}$ that are minimal under inclusion.   It is not hard to see that all of the definitions above for a hypergraph $\mc{H}$ only
depend on $\partial{\mc{H}}$.

The following chain of relations always holds:

\[
\nu(\mc{H}) \leq \nu^{\ast}(\mc{H}) = \tau^{\ast}(\mc{H}) \leq \tau(\mc{H}).
\]
where the inequalities are immediate consequences of the definitions and the equality
follows from the duality theorem of linear programming.

It is also known (see \cite[Chapter 1]{fgt}) that $\tau^\ast(\mc{H}) = \lim_{M\rightarrow\infty} \tau^M(\mc{H})/M = \sup_{M} \tau^M(\mc{H})/M$.

\subsection{Assemblages}

An {\em assemblage $\mc{A}$ over $I$} is a map which associates each 
function-assignment pair $(f,x)$ to a family  $\mc{A}_x(f)$ of weight functions over $I$
that is compact (when viewed as a subset of $\mathbb{R}^I$).
An important special case is when all of the weight functions are $\{0,1\}$-valued, in which case $\mc{A}_x(f)$ can be
viewed as a hypergraph.  Some important examples of assemblages are:

\begin{itemize}
\item {\em The block assemblage $\mc{B}$}.  A {\em block} of $f$ at $x$ is a subset $B$ of $I$ such
that $f(x \oplus B) \neq f(x)$ where $x \oplus B$ is obtained by complementing the bits of $x$
in the positions indexed by $B$. For the block assemblage $\mc{B}$, $\mc{B}_x(f)$ is equal
to the set of blocks of $f$ at $x$.
\item {\em The minblock assemblage $\partial\mc{B}$.}  A {\em min-block} of $f$ at $x$
is a block which is minimal under containment (but not necessarily minimum size). We define $\partial\mc{B}_x(f)$ to be the set of
min-blocks of $f$ at $x$.  
\item {\em The witness assemblage $\wit{}{}$.}  A {\em witness} $w$ of $f$ at $x$ is a hitting
set for $\mc{B}_x(f)$ (equivalently, for $\partial\mc{B}_x(f)$).
For the witness assemblage $\wit{}{}$, $\wit{x}(f)$ is the set of witnesses of $f$ at $x$. Note we view witnesses as boolean valued weight functions over the index set of $f$.
\item {\em The fractional witness assemblage $\witstar{}{}$.} A {\em fractional witness} $w$ of $f$
at $x$ is a fractional hitting set for $\mc{B}_x(f)$, and $\witstar{x}(f)$ is the set of all
fractional witnesses for $f$ at $x$. 
Thus a weight function $w$ on $I$ belongs to  $\mc{W}^*_x(f)$ if and only if: 
  \begin{itemize}
  \item $0 \leq w(i) \leq 1$ for each $i \in I$
  \item $w(B) \geq  1$ for each $B \in \partial\mc{B}_x(f)$.
  \end{itemize}

\item {\em The sensitivity assemblage $\Psi$.}  A {\em sensitive index} for $f$ at $x$
is an index $i$ such that $\{i\}$ is a block.  For the sensitivity assemblage $\Psi$, $\Psi_x(f)$ consists of a single boolean
weight function which is 1 on the set of indices that are sensitive for $f$ at $x$ and 0 otherwise.
\end{itemize}

\subsection{Local complexity measures} 
\label{sec:complexmeasures}

 A {\em local complexity
measure} $m$ depends on a function $f$ and an input $x$ to the function.  The value
is written $m_x(f)$ and is read as the $m$-complexity of $f$ at $x$. Given such a local complexity measure
we define: 
\begin{eqnarray*}
m_0(f) & =\ & \max \{m_x(f):x \in f^{-1}(0)\} \\
m_1(f) & = &\max \{m_x(f):x \in f^{-1}(1)\} \\
\vec{m}(f)& = &(m_0(f),m_1(f)) \\
m(f) & = &\max\{m_0(f),m_1(f)\}\\
m^{\lim}(f) &= &\liminf_{k \rightarrow \infty} m(f^{(k)})^{1/k}\\
\end{eqnarray*}

 The measure $m(f)$ is said to be {\em induced by a local complexity measure}. 
 Each of the following (standard) combinatorial measures of complexity of $f$, i.e., {\em certificate complexity}, {\em fractional certificate complexity}, {\em sensitivity} and {\em block sensitivity}, are induced by local complexity measures. The corresponding local measures are defined in the following subsections.

Let $m$ be induced by a local complexity  measure. For a function $f$ and an
$f$-compatible selector $\vec{\alpha}$, we define $\vec{m}_{\vec{\alpha}}(f):=\left(m_{\alpha^0}(f),m_{\alpha^1}(f)\right)$.
Note that $\vec{m}(f) \geq \vec{m}_{\alpha}(f)$  (coordinate-wise) with equality
if and only if $\alpha^0$ maximizes $m_x(f)$ over all $x \in f^{-1}(0)$ and
$\alpha^1$ maximizes $m_x(f)$ over  $x \in f^{-1}(1)$.  In this case we say
that $\vec{\alpha}$ is {\em an $m$-optimal selector for $f$}.

\subsection{Assemblage-based measures}\label{sec:assembmeasures}

Associated to any assemblage  $\mc{A}$ is a local complexity measure $m=m[\mc{A}]$
where $m_x(f)$ is equal to the minimum of $|w|$ over all $w \in \mc{A}_x(f)$.
We say that this complexity measure is {\em induced by assemblage $\mc{A}$}.
In this way we define the following local complexity measures:

\begin{itemize}
\item The certificate complexity of $f$ at $x$, $C_x(f)$ is the minimum of $|w|$ over $w
\in \wit{x}(f)$.
\item The fractional certificate complexity of $f$ at $x$, $C^{\ast}(f)$,  is the minimum of $|w|$
over $w \in \witstar{x}(f)$.
\item The sensitivity of $f$ at $x$, $s_x(f)$, is the number of sensitive indices of $f$ at $x$
which is (trivially) the size of the set in $\Psi_x(f)$.
\end{itemize}

Fix an assemblage $\mc{A}$ with associated local complexity measure $m$ and a boolean function $f$. Let $\vec{\alpha} = (\alpha^0,\alpha^1)$ be an $f$-compatible assignment selector and let $\vec{w}=(w^0,w^1)$ be a weight function selector. We say that $(\vec{\alpha},\vec{w})$ form an {\em $(f,\mc{A})$-compatible pair} if $w^0 \in \mc{A}_{\alpha^0}(f)$ and $w^1 \in \mc{A}_{\alpha^1}(f)$. For such a compatible pair, we say $\vec{w}$ is {\em $\vec{\alpha}$-compatible}.

If $(\vec{\alpha},\vec{w})$ form an $(f,\mc{A})$-compatible pair, $|w^0| = m_{\alpha^0}(f)$, and $|w^1| = m_{\alpha^1}(f)$, then we say $\vec{w}$ is an {\em $m$-optimal selector for $f$ at $\vec{\alpha}$}.

\subsection{Block sensitivity and its variants}
\label{sec:bs_def}

Next we define some local complexity measures related to packings of blocks:

\begin{itemize}
\item $bs_x(f)$, the {\em block sensitivity of $f$ at $x$}, is $\nu(\mc{B}_x(f))$, the size of the maximum packing of blocks. 
\item $bs^\ast_x(f)$, {\em the fractional block sensitivity of $f$ at $x$}, is $\nu^{\ast}(\mc{B}_x(f))$, the weight
of the maximum fractional packing of blocks.
\item $bs^M_x(f)$, {\em the $M$-fold block sensitivity of $f$ at $x$}, where $M$ is a positive integer,
is $\nu^M(\mc{B}_x(f))$, the weight of the maximum $M$-fold packing of blocks.
\item $bs^w_x(f)$, {\em the $w$-block sensitivity of $f$ at $x$}, where $w$ is a weight function on $I$,
is $\nu^w(\mc{B}_x(f))$, the $w$-block sensitivity of $f$ at $x$.
\end{itemize}

Applying the general inequalities for hypergraph parameters (mentioned in Section \ref{sec:PandC}) we have:
\[ s_x(f) \leq bs_x(f) \leq bs^\ast_x(f) = C^\ast_x(f) \leq C_x(f).
\] 

Also, we have 
\[bs^\ast(f) = \lim_{M\rightarrow\infty} bs^M(f)/M = \sup_M bs^M(f)/M.\]

\subsection{Compositions}
\label{sec:comp}

We will need to define the  composition of various  objects over an index set. For this purpose, it is convenient to represent an index set as the set of leaves of a rooted tree.

We define an {\em indexed tree} to be a rooted tree $T$ with labelled edges such that for each internal node
the edges  to its children 
have distinct labels.  
 For a node $v$, we write $C(v)=C_T(v)$ for the set of children of $v$ and
$I(v)=I_T(v)$ for the set of labels on the edges from $v$ to $C(v)$.
It follows that, for any node $v$, the sequence of edge labels along the path from the
root to $v$  uniquely identifies $v$, and we identify $v$ we this sequence.  Thus the root
of the tree is the empty sequence $\Lambda$, and for each internal node $v$, the children of $v$
are nodes of the form $vs$ where $s \in I(v)$. We write $L(T)$ for the set of leaves of $T$,
and $Int(T)$ for the set of internal nodes (non-leaves) of $T$.  
The set $L(T)$ is the index set associated with $T$.
In what follows we switch freely between the notion of index set and indexed tree. We also restrict attention to trees of uniform depth, that is where all leaves are at the same distance from the root.

We now define compositions of indexed trees.  
If $T$ is an indexed tree and $(T_v:v \in L(T))$ is a family of trees indexed by $L(T)$, then the composition
 $T_{\circ}=T(T_v:v \in L(T))$ is the indexed tree obtained 
by identifying each leaf $v$ of $T$ with the root of $T_v$.
The index set $L(T_{\circ})$ associated with $T_{\circ}$ 
is the set of all strings of the form $vw$ where $v$ is a leaf of $T$ and $w$ is a leaf of $T_v$.

Every tree $T$ can be constructed as a composition of the star from the root,
with the collection of subtrees rooted at the children of the root.  By applying this decomposition recursively,
we can build up every tree   from the collection of stars corresponding
to each internal vertex.

In the special case that all $T_v$ are the same tree $T'$, we say the composition
is {\em uniform} and write it as $T \composedwith T'$.  This composition
clearly forms an associative operation on indexed trees so that the notation
$T_1 \composedwith \cdots \composedwith T_k$ is well defined.  The leaf set
of $T_1 \circ \cdots \circ T_k$ consists of sequences $v_1,\ldots,v_k$
where $v_i$ is a leaf of $T_i$. Such trees may be thought of as representations of product sets $I_1 \times I_2 \times \cdots \times I_k$.
If all $T_i$ are the same tree $T$, we write this composition as $T^{(k)}$, which is the
{\em $k$-wise iterated composition of $T$}.

\subsubsection{Compositions of various objects}

With the framework of indexed trees, we now define notions
of compositions for various types of objects over index sets.
For an appropriate object type $\tau$ the form of the composition
is the same.  Every object of type $\tau$ is defined with respect to an index set,
and the index set is represented as the leaf set of a tree.  For simplicity
we say that object $\omega$ is defined over $T$ if its index set is $L(T)$.

Let $T$ be a tree, and let $(T_v:v \in L(T))$
be a family of trees indexed by the leaves of $T$.  We defined the composition of $T$ with $(T_v:v \in L(T))$
to be the tree $T_{\circ}$ obtained by identifying the roots of each $T_v$ with the leaf $v$ of $T$. Recall that $L(T_{\circ})$
consists of pairs $vw$ where $v \in L(T)$ and $w \in L(T_v)$.

Let $\tau$ be some type of object (such as hypergraph) over an index set.  
Suppose
that $\omega$ is an object of type $\tau$ over the index set $L(T)$ and for each $v \in L(T)$ let  $\omega_v$
be an object of type $\tau$ over the index set $L(T_v)$.     For certain types
$\tau$ we define a composition $\omega_{\circ}=\omega(\omega_v:v \in L(T))$ over $L(T_{\circ})$.
 Our composition operation for $\tau$ will
combine these objects into an object $\omega_{\circ}$ over index set $L(T_{\circ})$.

Here are compositions for some basic object types:

\begin{itemize}
\item {\em Weight functions.}  If $\tau$ is the class of weight functions, $w$ is a weight function on $T$,
and for each $v \in L(T)$, $w_v$ is a weight function on $L(T_v)$, then the composition 
$w_{\circ}=w(w_v:v \in L(T))$  is the weight function on $L(T_{\circ})$ 
where $\omega_{\circ}(vw):=\omega(v)\omega_v(w)$.
\item {\em Subsets.}
By associating a subset of a set $I$ with the weight function given by its characteristic function, the composition of weight functions gives a notion of composition of subsets. 
\item {\em Weight function families}.  Let $\Omega$ be a family of weight functions on $L(T)$
and for each $v \in L(T)$ let $\Omega_v$ be a family of weight functions over $L(T_v)$.
Then $\Omega_{\circ}=\Omega(\Omega_v:v \in L(T))$ is the weight function family
on $L(T_{\circ})$ consisting of all compositions $w(w_v:v \in L(T))$ where $w \in \Omega$
and for each $v \in L(T)$, $w_v \in \Omega_v$.
\item {\em Hypergraphs}.
  By viewing a hypergraph as a set of boolean weight functions, the notion
of composition of weight function families specializes to a notion of composition of hypergraphs.
\item {\em Boolean functions.} Let $f$ be a boolean function over $L(T)$ and, for each $v \in L(T)$, let
$f_v$ be a boolean function over $L(T_v)$.  Then the composition $f_{\circ}=f(f_v:v \in L(T))$
is the boolean function defined over $L(T_{\circ})$ whose value on a boolean assignment over $L(T_{\circ})$ is
computed by defining $b_v$ for $v \in L(T)$ to be $f_v$ evaluated on the subset of inputs corresponding to $L(T_v)$
and then evaluating $f$ on assignment $(b_v:v \in L(T))$.
\end{itemize}

The notions of uniform and iterated compositions are defined in the natural way.
If $T=T_1 \composedwith \cdots \composedwith T_k$ is a uniform composition of trees
and for each $i \in [k]$, $\Omega_i$ is an object of type $\tau$ over $L(T_i)$
then $\Omega_1 \composedwith \cdots \composedwith \Omega_k$ is an object of type $\tau$
over $T$.  It is easy to verify that for the various compositions we define 
that the operation $\composedwith$ is associative so that the uniform composition is well-defined.
We write $\Omega^{(k)}$ for the $k$-wise iterated composition of $\Omega$.

Given an arbitrary indexed tree $T$,
a {\em $T$-ensemble} of objects of type $\tau$ is an indexed family $\omega_T=(\omega_v:v \in Int(T))$ where $\omega_v$ is an object of type $\tau$ over the index set $I(v)$.   We define the composition of $\omega_T$, denoted $\composition \omega_T$
inductively:  For a null tree (consisting of only a root $\Lambda$ so that $L(T)=\{\Lambda\}$), there is a null object of type $\tau$.  For weight functions, the null object is the weight function mapping $\Lambda$ to 1, and for boolean functions the null
object is the (univariate) identity function.  For a non-null tree $T$, $\composition \Omega_T$ is given by
$\Omega_{\Lambda}(\Omega_v:v \in C(\Lambda))$, which is the composition of the object associated with the root
with the collection of objects associated with the children of the root.  Unwinding this recursion gives the
following alternative description of the compositions of $T$-ensembles for various objects:

\begin{itemize}
\item {\em Weight functions.}  Let $w_T=(w_v:w \in Int(T))$ be a $T$-ensemble of weight functions (so that $w_v$ is a weight function on $I(v)$).  We can view $w_v$ as assigning a weight to each edge coming out of $v$. 
Then the composition $\composition w_T$
assigns a weight to each leaf $l$ which is given by the product of the weights on the edges along
the path from the root to $l$. 
\item {\em Subsets.}
By associating a subset of a set $I$ with the weight function given by its characteristic function, the composition of weight functions gives a notion of composition of subsets. 
\item {\em Weight function families}.  Let $\Omega_T=(\Omega_v:v \in Int(T))$ be a $T$-ensemble of weight
function families.  Thus, for each $v$, $\Omega_v$ is a set of weight functions  over  $I(v)$. The composition 
$\composition \Omega_T$ is the set of all weight functions of the form $\composition w_T$ where $w_T$ is a weight function ensemble
satisfying $w_v \in \Omega_v$ for each $v$.
\item {\em Hypergraphs}.
  Let $\mc{H}_T=(\mc{H}_v:v \in Int(T)\}$ be a $T$-ensemble of hypergraphs.  Thus
for each $v$, $\mc{H}_v$ is a hypergraph on $I(v)$, which can be viewed as a family of boolean-valued
weight functions on $I(v)$.  The composition 
$\composition \mc{H}_T$  is obtained by specializing the composition of weight function families, and is a  hypergraph on 
$L(T)$
\item {\em Boolean functions.} Let $f_T=(f_v:v \in Int(T))$ be a $T$-ensemble of
boolean functions.  The composition  $\composition f_T$  is the function over $L(T)$ obtained
by viewing $T$ as a circuit and each vertex $v$ as a gate which computes the function $f_v$.
\item {\em Assignment selectors.}  This is described more easily in the context of the next subsection, so we present it there.
\end{itemize}

\subsection{Boolean labelings of trees, and compositions of assignment selectors}
\label{Boolean labelings}

A {\em boolean $T$-labeling} for an indexed tree $T$ is a mapping $b_T=(b(v):v \in T)$
that assigns a bit to each vertex of $T$.  Given a boolean $T$-labeling $b_T$ we define $b_v$, for an internal
node $v$, to be the  labeling $b_T$ restricted to the children of $v$.  Note the difference
between the notation $b(v)$, which is a single bit, and $b_v$, which is an assignment to $C(v)$.
Thus, any boolean $T$-labeling $b_T$ 
induces an {\em assignment $T$-ensemble} $(b_v : v \in Int(T))$.   Also, the leaf assignment
determined by $b_T$ is the boolean assignment to the leaves obtained by restricting $b_T$ to $L(T)$.

Boolean $T$-labelings will arise for us in two ways: 
\begin{itemize}
\item (Bottom-up labelings) 
If $f_T =(f_v:v \in Int(T))$ is a $T$-ensemble of boolean functions and $\alpha \in \{0,1\}^{L(T)}$
is a boolean assignment to the leaves of $T$ then viewing $f_T$ as a circuit with node $v$
being a gate computing $f_v$, then the evaluation of $f_T$ on input $\alpha$, denoted $f_T(\alpha)$,
is a boolean $T$-labeling $(b(v):v \in T))$ where the label $b(v)$ for a node $v$ is defined from the leaves upward
as follows: if $v$ is a leaf then $b(v)=\alpha(v)$ and if $v \in Int(T)$, then having defined $b(w)$ for each child $w$ of $v$
we have $b(v)=f_v(b_v)$, where as above $b_v$ is the assignment to $C(v)$ determined by $b$.
We call the resulting boolean $T$-labeling
the {\em evaluation  labeling induced by $f_T$ and $\alpha$.}
  
\item (Top-down labelings)  
If $c \in \{0,1\}$ and  $\vec{\alpha}_T=(\vec{\alpha}_v:v \in Int(T))$ is a $T$-ensemble of
assignment selectors, then $c$ and $\vec{\alpha}_T$ induce a boolean $T$-labeling in the following way.
Label the root by $b(r)=c$.  Now starting from the root apply the following procedure: Having labeled an internal
node $v$ by $b(v)$, use the bit $b(v)$ to select the assignment $\alpha_v^{b(v)}$ from the assignment
selector $\vec{\alpha}_v$, and then label the children of $v$ according to $\alpha_v^{b(v)}$.
We call this the boolean $T$-labeling induced by root label $c$ and $\vec{\alpha}_T$.
Note that by definition the assignment $T$-ensemble associated to $b$, $(b_v:v \in Int(T))$ is
given by $b_v=\alpha_v^{b(v)}$.

If we don't specify a bit $c$, then the $T$-ensemble of assignment selectors $\vec{\alpha}_T$
defines  a boolean $T$-labeling selector $\vec{b}_T=(b^0_T,b^1_T)$,
where for $c \in \{0,1\}$, $b^c_T$ is the boolean $T$-labeling induced by $c$ and $\vec{\alpha}_T$.
We call this the boolean $T$-labeling selector induced by $\vec{\alpha}_T$.
\end{itemize}

The top-down construction  implicitly provides a natural notion of composition
of assignment selectors:

{\em Composition of assignment selectors.}
Let $\vec{\alpha}_T=(\vec{\alpha}_v:v \in Int(T))$ be a $T$-ensemble of assignment selectors.
Let $\vec{b}_T$ be the boolean $T$-labeling selector induced by $\vec{\alpha}_T$.
If we restrict each of the labelings $\vec{b}^1_T$ and $\vec{b}^0_T$ to $L(T)$
we get an assignment selector over
 $L(T)$.  This assignment selector is defined to be the composition of the ensemble
$\vec{\alpha}_T$ and is denoted $\vec{\alpha}_{\circ}=(\alpha^0_{\circ},\alpha^1_{\circ})=\composition \vec{\alpha}_T$.

As with compositions of other objects, we specialize composition of assignment selectors to the case
of uniform compositions, and denote a uniform composition of assignment
selectors by $\vec{\alpha}_1 \composedwith \cdots \composedwith \vec{\alpha}_k$.

Observe that the bottom-up labelings and top-down labelings fit together in the following way.
Suppose $f_T$ is a $T$-ensemble of boolean functions and $\vec{\alpha}_T$ is a $T$-ensemble of assignment selectors. 
 Suppose that
for each vertex $v \in Int(T)$, $\vec{\alpha}_v$ is $f_v$-compatible, which we defined earlier to mean
$f_v(\alpha^0_v)=0$ and $f_v(\alpha^1_v)=1$.    In this case we say
that the ensemble $\vec{\alpha}_T$ is $f_T$-compatible. 

\begin{proposition}
\label{T-labeling}
Let $f_T$ be a $T$-ensemble of boolean functions and $\vec{\alpha}_T$ be a $T$-ensemble of assignment selectors.
Let $\vec{b}_T$ be the boolean $T$-labeling selector induced (top-down) by $\vec{\alpha}_T$ and let $\vec{\alpha}_{\circ}$
be the composition  $\composition \vec{\alpha}_T$ (which was defined to be the restriction of  $\vec{b}_T$ to $L(T)$). 
If for each $v \in Int(T)$, $\vec{\alpha}_v$ is $f_v$ compatible with $f_v$ then:

\begin{itemize}
\item For $c \in \{0,1\}$, The (bottom-up) labeling induced by $f_T$ and $\alpha_{\circ}^c$ is $b_T^c$.
\item The composed assignment selector $\composition \vec{\alpha}_T$ is  $F$-compatible, where $F=\composition f_T$ 
is the composition of $f_T$.
\end{itemize} 
\end{proposition}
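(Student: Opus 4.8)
The plan is to prove the first bullet by an induction that runs upward through the tree, and then to read off the second bullet as an immediate corollary. Fix $c\in\{0,1\}$ and set $\alpha\defeq\alpha^c_\circ$, the leaf assignment obtained by restricting the top-down labeling $b^c_T$ to $L(T)$ (this is exactly how $\composition\vec\alpha_T$ was defined). Let $\tilde b_T=(\tilde b(v):v\in T)$ be the bottom-up evaluation labeling induced by $f_T$ and $\alpha$. The goal of the first bullet is the identity $\tilde b_T=b^c_T$, and I would prove it by induction on the height of a node $v$ (its distance to the leaves below it, which is well defined since $T$ has uniform depth).

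For the base case, $v\in L(T)$: by definition of the evaluation labeling, $\tilde b(v)=\alpha(v)$, and by definition of $\alpha=\alpha^c_\circ$ as the restriction of $b^c_T$ to the leaves, $\alpha(v)=b^c_T(v)$. For the inductive step, let $v\in Int(T)$ and assume $\tilde b(w)=b^c_T(w)$ for every child $w$ of $v$; restricting to $C(v)$, the assignment $\tilde b_v$ to the children of $v$ determined by $\tilde b$ agrees with the corresponding restriction $b^c_{T,v}$ of $b^c_T$. Now the top-down construction says that once $v$ has label $b^c_T(v)$, its children are labeled by $\alpha^{b^c_T(v)}_v$, i.e. $b^c_{T,v}=\alpha^{b^c_T(v)}_v$; and the bottom-up rule at an internal node says $\tilde b(v)=f_v(\tilde b_v)$. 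Chaining these, $\tilde b(v)=f_v(\alpha^{b^c_T(v)}_v)$, and here is the one place $f_v$-compatibility of $\vec\alpha_v$ enters: $f_v(\alpha^0_v)=0$ and $f_v(\alpha^1_v)=1$, so $f_v(\alpha^d_v)=d$ for both $d$, and in particular $\tilde b(v)=b^c_T(v)$. This finishes the induction and hence the first bullet.

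The second bullet is then immediate: by definition $F=\composition f_T$ assigns to a leaf assignment $\alpha$ the root label of the evaluation labeling of $f_T$ on $\alpha$, so applying the first bullet with $\alpha=\alpha^c_\circ$ gives $F(\alpha^c_\circ)=b^c_T(\Lambda)=c$ (the last equality by the top-down construction, which sets the root label to $c$). Thus $F(\alpha^0_\circ)=0$ and $F(\alpha^1_\circ)=1$, which is exactly $F$-compatibility of $\composition\vec\alpha_T$. I do not expect any real obstacle here — the content is entirely in matching up the two labelings node by node — and the only thing to be careful about is choosing the induction parameter so that the comparison can be made locally and invoking $f_v$-compatibility precisely at the node where the bottom-up pass must recover the bit that the top-down pass used to branch.
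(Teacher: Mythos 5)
Your proof is correct and takes essentially the same route as the paper: induct node-by-node from the leaves up (the paper phrases the induction as on the size of the subtree rooted at $v$, yours on the height of $v$, but these are interchangeable), chain the top-down rule $b_v = \alpha_v^{b(v)}$ with the bottom-up rule $\tilde b(v)=f_v(\tilde b_v)$, and invoke $f_v$-compatibility to recover $\tilde b(v)=b^c_T(v)$. The second bullet is read off at the root exactly as in the paper.
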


\begin{proof}
For $c \in \{0,1\}$, let $a_T^c$ be the labeling induced by $f_T$ and 
$\alpha_{\circ}^c$. We prove that, for all $v \in T$, $a_T^c(v)=b_T^c(v)$. We proceed by induction
on the size of the subtree rooted at $v$.     
For $v \in L(T)$ we have $a_T^c(v)=\alpha^c_{\circ}(v)=b_T^c(v)$.  For $v \in Int(T)$, by the induction hypothesis,
we have $a_T^c(w)=b_T^c(w)$ for all children $w$ of $v$, equivalently, we have $a_v=b_v$.  Now by the definition
of $a_T$ we have $a(v)=f_v(a_v)=f_v(b_v)$.  On the other hand, by the definition of $b_T$
we have that $b_v$ is equal to $\alpha_v^{b(v)}$, and since $\vec{\alpha}_v$ is compatible with $f_v$
this implies $f(b_v)=b(v)$ and so $b(v)=a(v)$, as required.

For the second part, the composed assignment $\vec{\alpha}_{\circ}$  is (by definition)  equal to $\vec{b}_T$ restricted to
$L(T)$, and by the first part this is the same as $\vec{a}_T$ restricted to $L(T)$.  For each $c \in \{0,1\}$,
the value of $f_T$ at $a_T^c$ is the value of the root in the bottom-up labeling (viewing $T$ as a circuit
with gates $(f_v:v \in Int(T))$), and this is the label given to the root by $a_T^c=b_T^c$
which is equal to $c$ by the definition of the top-down labeling $b_T^c$

\end{proof}

\section{The growth of various complexity measures under iterated composition} \label{sec:Main}

Our goal in this section is to understand how $m(f^{(k)})$ relates to $m(f)$ for various complexity measures. In Section \ref{subsec:analysing} we provide a high level discussion of how to analyse $m(f^{(k)})$. To do so we will initially state, without proofs, the lemmas which lead to the main result. We will then prove the main theorem modulo these lemmas. Finally, in Section \ref{sec:details}, we will provide all the remaining proofs and definitions which were left out.

\subsection{Analysing $m(f^k)$} \label{subsec:analysing}
Fix an assemblage $\mc{A}$ and let $m$ be the associated complexity measure.  Informally, $m(f)$ is high
if there is a hard input $\alpha$, which means that
 every weight function in $\mc{A}_{\alpha}(f)$ has high total weight.

Let's start with the most general  function composition, where $F$ is the composition  of a $T$-ensemble 
$f_T = (f_v: v\in Int(T)\})$  for an arbitrary tree $T$ of uniform depth.   Fix an input $\alpha$ to the leaves of $T$ and let $(b(v) : v \in T)$ be the evaluation labeling of $f_T$ on $\alpha$ and $(b_v : v \in Int(T))$ be the corresponding assignment ensemble. Recall that, for each $v \in Int(T)$, we have $b(v)=f_v(b_v)$.

To determine $m_{\alpha}(F)$ we want to determine the minimum weight of a weight function
$w \in \mc{A}_{\alpha}(F)$.  In trying to analyze this minimum, it is natural to look at
weight functions which are representable as $T$-compositions 
of weight functions as follows: For each internal node $v$ of $T$
select a weight function $w_v$ that belongs to the set of weight functions $\mc{A}_{b_v}(f_v)$, and
take $w$ to be the composition of the  $T$-ensemble  $(w_v:v \in Int(T))$.  
We say that
such an ensemble is {\em compatible with $(f_v:v \in Int(T))$ and $\alpha$}.
Our hope is that the minimum weight of a weight function in $\mc{A}_{\alpha}(F)$
is attained by such a composition.
 It is easy to show that this is  true provided that the assemblage $\mc{A}$
satisfies the following two properties: 
\begin{enumerate}
\item For any weight function ensemble that is compatible
with $(f_v:v \in Int(T))$ and an assignment $\alpha$, its composition belongs $\mc{A}_{\alpha}(F)$.
\item If $w$ is any weight function in $\mc{A}_{\alpha}(F)$ then there is a weight function 
ensemble $(w_v:v \in Int(T))$ that is compatible with $(f_v:v \in Int(T))$ and $\alpha$
whose composition has total weight  less than that of $w$.
\end{enumerate}

We call an assemblage {\em well behaved} if it satisfies (1) and (2).
Summarizing the above, we have:

\begin{proposition}
\label{prop:well-behaved}
Let $m$ be a complexity measure associated to a well-behaved assemblage.  Let $(f_v:v \in Int(T))$ be a $T$-ensemble of functions
with composition $F$ and let $\alpha$ be an input to $F$.  Then

\[
m_{\alpha}(F) = \min |w|,
\]
where $w$ ranges over all compositions of weight function ensembles $(w_v:v \in Int(T))$ that are
compatible with $(f_v:v \in Int(T))$ and $\alpha$.
\end{proposition}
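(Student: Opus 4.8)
The plan is to read the identity off directly from the two conditions in the definition of a well-behaved assemblage; essentially all of the combinatorial content has been packaged into conditions (1) and (2), so no induction on the structure of $T$ is needed. Write $W_\alpha$ for the set of compositions $w=\composition(w_v:v\in Int(T))$ of weight-function ensembles $(w_v:v\in Int(T))$ that are compatible with $(f_v:v\in Int(T))$ and $\alpha$; the goal is to show $m_\alpha(F)=\min\{|w|:w\in W_\alpha\}$. By the definition of an assemblage-based measure, $m_\alpha(F)=\min\{|w|:w\in\mc{A}_\alpha(F)\}$, and this minimum is attained since $\mc{A}_\alpha(F)$ is compact and $w\mapsto|w|=w(L(T))$ is continuous. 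One should also record that $W_\alpha$ is nonempty (each $\mc{A}_{b_v}(f_v)$ is nonempty for the assemblages of interest) and that the minimum over $W_\alpha$ is attained, because $W_\alpha$ is the image of the compact product $\prod_{v\in Int(T)}\mc{A}_{b_v}(f_v)$ under the continuous composition map.

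For the inequality $m_\alpha(F)\le\min\{|w|:w\in W_\alpha\}$, property (1) says exactly that $W_\alpha\subseteq\mc{A}_\alpha(F)$; since the minimum defining $m_\alpha(F)$ ranges over a superset of $W_\alpha$, it is at most $\min\{|w|:w\in W_\alpha\}$. For the reverse inequality, pick $w^\ast\in\mc{A}_\alpha(F)$ with $|w^\ast|=m_\alpha(F)$ and apply property (2) to $w^\ast$: this yields an ensemble $(w_v:v\in Int(T))$ compatible with $(f_v)$ and $\alpha$ whose composition $w'$ satisfies $|w'|\le|w^\ast|=m_\alpha(F)$. (Here I read the ``less than'' in (2) as ``at most'', which is what is needed and what must be intended, since applying (2) to a minimum-weight element of $W_\alpha$ would otherwise contradict its minimality.) As $w'\in W_\alpha$, we conclude $\min\{|w|:w\in W_\alpha\}\le|w'|\le m_\alpha(F)$. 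Combining the two inequalities gives the identity.

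So the proposition itself is a one-line consequence of well-behavedness; the genuine work — which I expect to be the real obstacle and which the surrounding text defers to Section~\ref{sec:details} — is verifying that the assemblages $\wit{}{}$, $\witstar{}{}$, $\Psi$, and the relevant block assemblages actually satisfy (1) and (2). Property (1) is normally a straightforward ``glue together local objects'' argument. Property (2) is the substantive direction: given a global weight function in $\mc{A}_\alpha(F)$ one must partition or restrict it into one local weight function per internal node, each feasible for $f_v$ at the induced input $b_v$, without increasing total weight; for certificate complexity this is a decomposition of a minimal certificate, and for the fractional measures it leans on the packing/covering and LP-duality facts recalled in Section~\ref{sec:PandC}.
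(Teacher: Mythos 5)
Your proof is correct and takes essentially the same approach as the paper, which presents this proposition as an immediate consequence of the two defining conditions of a well-behaved assemblage (``Summarizing the above, we have:'') without a separate argument. Your observations about compactness guaranteeing attainment of minima and about reading ``less than'' in condition (2) as ``at most'' are both accurate and worth recording, and you correctly identify that the substantive work is deferred to the verification (Lemma~\ref{assemblage comp}) that the specific assemblages satisfy conditions (1) and (2).
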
  

In section \ref{sec:details}, we will prove

\begin{restatable}{lemma}{assemblagecomp}
\label{assemblage comp}
Each of the assemblages $\partial\mc{B}$, $\wit{}$, $\witstar{}$ and $\Psi$ are well-behaved.
\end{restatable}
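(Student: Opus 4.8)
The plan is to verify properties (1) and (2) of well-behavedness for each of the four assemblages $\partial\mc{B}$, $\wit{}$, $\witstar{}$, and $\Psi$, and the key observation that makes this manageable is that it suffices to treat the case of a depth-$2$ composition $F = f(f_v : v \in L(T))$ (equivalently $F = f(g_1,\dots,g_n)$) and then bootstrap to arbitrary uniform-depth trees $T$ by induction on the depth, using associativity of composition. So first I would set up the depth-$2$ reduction: fix $\alpha$, let $(b(v):v\in T)$ be the evaluation labeling, so $b(\Lambda) = f((b(v):v\in L(T)))$ and each leaf-block $L(T_v)$ gets value $b(v)$; a compatible ensemble picks $w_\Lambda \in \mc{A}_{(b(v))}(f)$ at the root and $w_v \in \mc{A}_{\alpha|_{L(T_v)}}(f_v)$ at each child, and its composition is the weight function on $L(T_\circ)$ assigning $vw \mapsto w_\Lambda(v)\cdot w_v(w)$. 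The core task is thus to understand how blocks, witnesses, fractional witnesses, and sensitive coordinates of $F$ at $\alpha$ decompose in terms of those of the constituent functions at the induced inputs $b$ and $\alpha|_{L(T_v)}$.

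For each assemblage I would prove the two directions separately. For property (1), I need: if $w_\Lambda$ is (say) a block of $f$ at $(b(v))$ and each $w_v$ is a block of $f_v$ at $\alpha|_{L(T_v)}$, then the composition is a block (or min-block) of $F$ at $\alpha$ — this is essentially the observation that flipping the coordinates in $\bigcup_{v : w_\Lambda(v)=1} (\text{block } w_v \text{ inside } L(T_v))$ flips the output of $g_v$ on exactly those $v$ in $w_\Lambda$, hence flips $f$, hence flips $F$; minimality under inclusion is preserved because both $w_\Lambda$ and the $w_v$ are minimal. The analogous statements: a hitting set of $\partial\mc{B}_{(b(v))}(f)$ composed with hitting sets of the $\partial\mc{B}_{\alpha|_{L(T_v)}}(f_v)$ is a hitting set (resp.\ fractional hitting set, using that the product of a fractional hitting set at the root with fractional hitting sets at the leaves still has weight $\geq 1$ on every min-block of $F$); and for $\Psi$, a coordinate $vw$ is sensitive for $F$ at $\alpha$ iff $w$ is sensitive for $f_v$ at $\alpha|_{L(T_v)}$ and $v$ is sensitive for $f$ at $(b(v))$, which is exactly the composed-subset description, so $\Psi$ composes on the nose.

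For property (2), given an arbitrary $w \in \mc{A}_\alpha(F)$ I must produce a compatible ensemble whose composition has weight at most $|w|$. For $\partial\mc{B}$: take a min-block $B$ of $F$ at $\alpha$; for each $v$ let $B_v = B \cap L(T_v)$; since flipping $B$ flips $F$, the set $S = \{v : g_v(\alpha|_{L(T_v)} \oplus B_v) \neq b(v)\}$ is a block of $f$ at $(b(v))$, and for $v \in S$, $B_v$ contains a min-block of $f_v$ at $\alpha|_{L(T_v)}$ — take $w_\Lambda$ to be a min-block of $f$ inside $S$ and $w_v$ a min-block of $f_v$ inside $B_v$ for $v \in S$ (arbitrary otherwise) — the composition is contained in $B$, so has weight $\leq |B|$. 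For $\wit{}$ and $\witstar{}$ the argument is dual/direct: given a (fractional) witness $w$ of $F$ at $\alpha$, I claim $w$ restricted to each $L(T_v)$ is already a (fractional) witness of $f_v$ at $\alpha|_{L(T_v)}$ on those $v$ that matter, and projecting down gives a witness of $f$ at the root; the subtlety is choosing, for the non-essential $v$, minimal-weight witnesses so nothing is wasted. For $\Psi$ there is nothing to do since $\Psi_x$ is a single weight function. I expect the main obstacle to be property (2) for the fractional witness assemblage $\witstar{}$: unlike the integral case one cannot simply "restrict and project" a fractional hitting set, and one has to argue — probably via LP duality / the relation $\tau^* = \nu^*$ together with how fractional packings of blocks compose — that an optimal fractional witness of $F$ can be assumed to have product form, or at least that its weight dominates that of some product-form fractional witness. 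Once the depth-$2$ statements are in hand, the induction on tree depth is routine, finishing the lemma.
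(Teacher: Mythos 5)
Your high-level structure matches the paper's: reduce to the depth-$2$ case by induction on tree depth, analyze each of the four assemblages separately, observe that min-blocks of a composition are exactly compositions of min-blocks (you essentially restate the paper's Claim~B), and note that $\Psi$ is trivial because the sensitive set composes on the nose. Your treatment of $\partial\mc{B}$ and $\Psi$ is sound.

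The gap is in the witness assemblages, and you have in fact put your finger on where it lies but not filled it. For $\wit{}$ and $\witstar{}$ the paper's engine is what it calls Proposition~A: given an \emph{arbitrary} fractional hitting set $h_T$ of the composed hypergraph $\Omega_T = \Omega_r(\Omega_v : v\in C(r))$, one can explicitly build a decomposition $h_r(h_v : v)$ that is pointwise $\le h_T$ with each piece still a fractional hitting set. The construction is direct, not duality-based: set $h_r(v) := \min\bigl(1,\ \min_{w\in\Omega_v}\sum_{l\in L(T_v)} w(l)\,h_T(l)\bigr)$, and for $v$ with $h_r(v)\neq 0$ set $h_v(l) := h_T(l)/h_r(v)$, else $h_v\equiv 1$. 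The nontrivial step is then verifying that this $h_r$ is itself a fractional hitting set of $\Omega_r$, which is done by a short inequality chain using $\min(1,a)+\min(1,b)\ge\min(1,a+b)$. Your proposal replaces this with the informal idea ``restrict and project,'' which only becomes a correct claim once you specify the projection in exactly this way (the set $\{v : h_T|_{L(T_v)}$ hits $\Omega_v\}$ in the boolean case), and in the fractional case you explicitly punt to LP duality via $\tau^*=\nu^*$ without carrying it out. That is the genuinely missing idea: the paper does not invoke duality here, and the required statement (decomposing an arbitrary, not merely optimal, fractional hitting set so that the product is weight-dominated) does not fall out of $\tau^*=\nu^*$ without further work. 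Also note that property~(2) as you state it only asks for a product ensemble of total weight at most $|w|$, but the paper proves the stronger pointwise domination, which is what makes the induction on depth clean; if you only track total weight, the inductive step for general trees needs extra care.

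So: same skeleton, correct handling of $\partial\mc{B}$ and $\Psi$, but the central technical lemma about decomposing (fractional) hitting sets is only gestured at, and the route you suggest for it diverges from (and is less explicit than) the paper's direct construction.
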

This implies that Proposition \ref{prop:well-behaved} can be applied to certificate complexity,
fractional certificate complexity and sensitivity.  In the remaining discussion we assume
that the assemblage $\mc{A}$  is well-behaved.

At this point, we restrict attention to 
 $F$ which are uniform compositions
$F=f_1 \composedwith \cdots \composedwith f_k$ where $f_i$ is a boolean function
over the index set $L(T_i)$ where $T_i$ is an indexed star.  To understand $m(F)$
we want to identify an input $\alpha$ that maximizes $m_{\alpha}(F)$.  Actually we'll try
to identify an assignment selector $\vec{\alpha}=(\alpha^0, \alpha^1)$ for $F$ that is $m$-optimal for $F$,
which (as defined in Section \ref{sec:complexmeasures}) means that
$\alpha^0$ maximizes $m_{\alpha}(F)$
over $\alpha \in F^{-1}(0)$ and $\alpha^1$ maximizes $m_{\alpha}(F)$ over $\alpha \in F^{-1}(1)$.
It is natural to speculate that we can obtain such an assignment selector $\vec{\alpha}$
as a composition of assignment selectors
$\vec{\alpha}_1 \composedwith \cdots \composedwith \vec{\alpha}_k$ where $\vec{\alpha}_i$ is an assignment
selector for $f_i$.  This indeed turns out to be the case, as is stated in the second part of the following lemma:

\begin{restatable}{lemma}{uniformcomp}
\label{lem:uniform comp}
Let $f_1,\ldots,f_k$ be a sequence of boolean functions and let $F$ be their composition.  

\begin{itemize}
\item  
If $\vec{\alpha}_1,\ldots,\vec{\alpha}_k$ are assignment selectors, where $\vec{\alpha}_i$ is $f_i$-compatible, then $\vec{\alpha} := \vec{\alpha}_1 \composedwith \cdots \composedwith \vec{\alpha}_k$ is an $F$-compatible assignment selector.
\item 
There are assignment selectors $\vec{\alpha}_1,\ldots,\vec{\alpha}_k$, where $\vec{\alpha}_i$ is $f_i$-compatible, such that $\vec{\alpha}:= \vec{\alpha_1} \composedwith \cdots \composedwith \vec{\alpha_k}$ is an $m$-optimal selector for $F$.
\end{itemize}
\end{restatable}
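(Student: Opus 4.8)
The plan is to prove the two parts essentially as consequences of Proposition \ref{T-labeling} and Proposition \ref{prop:well-behaved}, after reducing uniform composition of $k$ functions to the iterated two-step composition. First I would set up the tree picture: each $f_i$ lives on the star $T_i$, and $F = \composition f_T$ where $f_T$ is the $T$-ensemble assigning $f_i$ to every internal node at depth $i$ in $T = T_1 \composedwith \cdots \composedwith T_k$. Likewise, given $f_i$-compatible selectors $\vec{\alpha}_i$, form the $T$-ensemble $\vec{\alpha}_T$ assigning $\vec{\alpha}_i$ to every depth-$i$ internal node. Since each $\vec{\alpha}_i$ is $f_i$-compatible, $\vec{\alpha}_T$ is $f_T$-compatible, so Proposition \ref{T-labeling} applies directly: $\composition \vec{\alpha}_T$ is $F$-compatible. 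And by definition (Section \ref{sec:comp}) $\composition \vec{\alpha}_T$ is exactly the uniform composition $\vec{\alpha}_1 \composedwith \cdots \composedwith \vec{\alpha}_k$. This gives the first bullet immediately; the only thing to check carefully is that the uniform-composition notation matches the $T$-ensemble composition, which is a matter of unwinding the definitions and associativity.

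For the second bullet I would argue by induction on $k$, using well-behavedness of $\mc{A}$ (Lemma \ref{assemblage comp} guarantees this for the relevant assemblages, so Proposition \ref{prop:well-behaved} is available). The base case $k=1$ is trivial: take $\vec{\alpha}_1$ to be any $m$-optimal selector for $f_1$, which exists since $f_1^{-1}(0)$ and $f_1^{-1}(1)$ are finite. For the inductive step, write $F = f_1 \composedwith G$ where $G = f_2 \composedwith \cdots \composedwith f_k$. By Proposition \ref{prop:well-behaved}, $m_\alpha(F)$ for any input $\alpha$ equals the minimum weight of a composition of an ensemble compatible with $(f_1, G)$ and $\alpha$ — concretely, we label the root by $b(\Lambda) = F(\alpha)$, the children of the root by the induced assignment, pick $w_\Lambda \in \mc{A}_{b_\Lambda}(f_1)$ and, for each child $v$, $w_v \in \mc{A}_{b_v}(G)$, and compose. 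To maximize over $\alpha \in F^{-1}(c)$, I want to choose the top labeling to be the hardest input for $f_1$ on the appropriate side and each sublabeling to be the hardest input for $G$ on its side — i.e., to choose $\vec{\alpha}_1$ an $m$-optimal selector for $f_1$ and $\vec{\beta}$ an $m$-optimal selector for $G$, and set $\vec{\alpha} = \vec{\alpha}_1 \composedwith \vec{\beta}$. Then apply the inductive hypothesis to $G$ to write $\vec{\beta} = \vec{\alpha}_2 \composedwith \cdots \composedwith \vec{\alpha}_k$, and conclude by associativity of $\composedwith$.

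The step I expect to be the main obstacle is verifying that this greedy top-down choice actually attains the maximum of $m_\alpha(F)$ over all inputs $\alpha$ — that is, that being $m$-optimal is "compositional." The subtlety is that $m_\alpha(F)$ depends not just on $F(\alpha)$ but on the internal labeling $b_T$ that $\alpha$ induces; one must argue that the minimum-weight composition over ensembles compatible with $\alpha$ is a nondecreasing function of each local value $m_{b_v}(f_v)$, so that pushing each $b_v$ to a local maximizer can only increase $m_\alpha(F)$, and that an $m$-optimal selector $\vec{\alpha}_1 \composedwith \vec{\beta}$ realizes, on each side $c \in \{0,1\}$, a labeling all of whose local pieces are simultaneously local maximizers. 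This monotonicity is intuitively clear from Proposition \ref{prop:well-behaved} (larger constituent weight functions force larger composed weight) but writing it cleanly — including the book-keeping that the two selectors $\vec{\alpha}_1$ and $\vec{\beta}$ interact correctly through the top-down labeling rule, so that when the root is labeled $c$ the child $v$ with induced label $b_\Lambda(v)$ really does see $G$'s $m$-optimal input on side $b_\Lambda(v)$ — is where the real work lies. I would isolate this as a short monotonicity sublemma about $m_\alpha$ under well-behaved composition and then feed it into the induction above.
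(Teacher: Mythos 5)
Your handling of the first bullet is correct and matches the paper: reduce the uniform composition to a $T$-ensemble on the uniform-depth tree, note that $f_i$-compatibility of each $\vec{\alpha}_i$ gives $f_T$-compatibility of the ensemble, and invoke the second part of Proposition~\ref{T-labeling}.

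The second bullet is where your plan breaks. You propose to build the composed selector greedily, taking $\vec{\alpha}_1$ to be an $m$-optimal selector for $f_1$ and $\vec{\beta}$ an $m$-optimal selector for $G = f_2 \circ \cdots \circ f_k$, and then to justify this via a monotonicity sublemma asserting that $m_\alpha(F)$ increases when each local labeling $b_v$ is pushed to a local maximizer of $m_{b_v}(f_v)$. That monotonicity is false, and the paper warns you about exactly this right after the lemma statement: ``Note that it is not the case in this lemma that each $\vec{\alpha}_i$ in the conclusion is $m$-optimal for $f_i$.'' The reason is the profile phenomenon that motivates the whole section: once the sub-selectors are fixed to be $m$-optimal for $G$, the contribution of the root-level witness $w_r$ to the total weight is $\sum_v w_r(v)\, m_{b_r(v)}(G)$, which is governed not by $|w_r|$ alone but by how $w_r$'s weight splits across the $0$- and $1$-indices of $b_r$. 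When $m_0(G) \neq m_1(G)$, a root assignment $b_r$ with larger $m_{b_r}(f_1)$ can easily give a smaller value of this weighted sum than a ``suboptimal'' $b_r$ whose minimum witness happens to sit heavily on the expensive side. A concrete instance: take $f_1 = x_1 \oplus (x_2 \wedge x_3)$, so the $0$-input $(1,1,1)$ has $C = 3$ with minimum-witness profile $(0,3)$ while the $0$-input $(0,0,0)$ has $C = 2$ with profile $(2,0)$; if $m_0(G) \gg m_1(G)$, the latter yields a far larger $m_\alpha(F)$ even though it is locally suboptimal for $f_1$. So the sublemma you want to isolate cannot be proved.

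The paper's argument avoids this by not constructing the top-level piece greedily at all. Instead it starts from an arbitrary $m$-optimal selector $\vec{\alpha}$ for $F$ (which exists because the input space is finite), extracts from its evaluation labeling the root-level assignment selector $\vec{\beta}_1 := (b_r^0, b_r^1)$, and chooses $\vec{\beta}_2$ to be $m$-optimal for $f_2$ (resp.\ $G$). It then runs an exchange argument in the wrong direction from what you expect: a minimum-weight $w \in \mc{A}_{\beta^0}(F)$ is decomposed, via well-behavedness, into $w_r(w_v : v)$; because $\vec{\beta}_2$ is $m$-optimal for $f_2$ and $f_2(\alpha_v^0) = f_2(\beta_v^0)$ (the root-level bit pattern is the same), each $w_v$ can be swapped for some $\rho_v \in \mc{A}_{\alpha_v^0}(f_2)$ with $|\rho_v| \leq |w_v|$, while keeping $w_r$ fixed. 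The resulting composition $w' = w_r(\rho_v : v)$ lies in $\mc{A}_{\alpha^0}(F)$ and has $|w'| \leq |w|$, giving $m_{\alpha^0}(F) \leq m_{\beta^0}(F)$, and optimality of $\vec{\alpha}$ forces equality. Keeping $w_r$ fixed is precisely what sidesteps the profile issue you would otherwise have to grapple with. Your $F = f_1 \circ G$ decomposition and the use of the inductive hypothesis to expand $\vec{\beta}_2$ into a $(k-1)$-fold composition are fine; it is the choice of $\vec{\alpha}_1$ that must be inherited from the global optimum rather than chosen to optimize $f_1$ alone.
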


Note that it is not the case in this lemma that each $\vec{\alpha}_i$ in the conclusion is $m$-optimal for $f_i$.  Nevertheless, the lemma is useful because,
in evaluating $m(F)$, it is enough to consider all assignment selectors 
of the form $\vec{\alpha}_1 \composedwith \cdots \composedwith \vec{\alpha}_k$ where
each $\alpha_i$ is $f_i$-compatible.

Now consider such
a composed assignment selector $\vec{\alpha}$.  We want to understand $m_{\alpha^0}(F)$
and $m_{\alpha^1}(F)$ and for this it suffices to  determine the weight functions
$ w^0 \in \mc{A}_{\alpha^0}(F)$ and $w^1 \in \mc{A}_{\alpha^1}(F)$  of minimum weight.
One might hope that $w^0$ and $w^1$ can each be expressed as a uniform composition
of weight functions, but this need not be true.  Once again we need to consider
compositions of weight function selectors rather than weight functions.
There is  a natural way to compose any sequence
of weight function selectors, but it is a bit complicated because it depends not only on
the sequence $\vec{w}_1,\ldots,\vec{w}_k$ but also on the associated assignment
selectors $\vec{\alpha}_1,\ldots,\vec{\alpha}_k$.   What we end up with is a composition
operation which acts on pairs  $(\vec{\alpha}_i,\vec{w}_i)$ consisting of an assignment selector and weight function
selector for $f_i$  and produces such a pair
$(\vec{\alpha},\vec{w})$ for $f$. We call this an assignment-weight selector-pair, or simply {\em AW-selector pair}.
The assignment selector $\vec{\alpha}$ is just the composition of the assignment
selectors $\vec{\alpha}_i$ as before (and does not depend on the $\vec{w}_i$, but $\vec{w}$
depends on both the $\vec{\alpha}_i$ and $\vec{w}_i$). Again, due to the technical nature of this construction, we delay the explicit definition for Section \ref{sec:details}. For now it is enough to note that this composition operation satisfies the following properties (see Section \ref{sec:assembmeasures} for the definition of $(f,\mc{A})$-compatible):


\begin{restatable}{lemma}{hittingdecomp}
\label{lem:hitting decomp}
Let $f_1,\ldots,f_k$ be a sequence of boolean functions and let $F$ be their composition.  Let $\vec{\alpha}_1,\ldots,\vec{\alpha}_k$ be assignment selectors
such that each $\vec{\alpha}_i$ is $f_i$-compatible. Then the following hold:

\begin{itemize}
\item  For any sequence $\vec{w}_1,\ldots,\vec{w}_k$ of weight function selectors, where $(\vec{\alpha}_i,\vec{w}_i)$ is $(f_i,\mc{A})$ compatible,
the composition $(\vec{\alpha},\vec{w}):= (\vec{\alpha}_1,\vec{w}_1) \composedwith \cdots \composedwith (\vec{\alpha}_k,\vec{w}_k)$ is $(F,\mc{A})$-compatible.
\item
There are weight function selectors $\vec{w}_1, \cdots, \vec{w}_k$ such that the weight function $\vec{w}$ that comes from the composition
 $(\vec{\alpha},\vec{w}):=(\vec{\alpha_1},\vec{w}_1) \circ \cdots \circ (\vec{\alpha}_k,\vec{w}_k)$  is an $m$-optimal weight function selector for $F$ at $\vec{\alpha}$.
\end{itemize}
\end{restatable}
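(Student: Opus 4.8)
The proof rests on two things that must be set up first in Section \ref{sec:details}: the precise definition of the AW-selector-pair composition $(\vec{\alpha}_1,\vec{w}_1)\composedwith\cdots\composedwith(\vec{\alpha}_k,\vec{w}_k)$, and Lemma \ref{assemblage comp}, so that Proposition \ref{prop:well-behaved} is available. For the composition, the natural definition is the ``label-guided top-down'' one implicit in Section \ref{Boolean labelings}: the assignment part is simply $\vec{\alpha}=\composition\vec{\alpha}_T$ (for the $T$-ensemble $\vec{\alpha}_T$ that carries a copy of $\vec{\alpha}_t$ at every internal node of level $t$, the tree $T=T_1\composedwith\cdots\composedwith T_k$ having $k$ levels of internal nodes), while for the weight part, given a root bit $c$, one forms the boolean $T$-labeling $b^c_T$ induced by $c$ and $\vec{\alpha}_T$, places at each internal node $v$ the weight function $w_v^{b^c(v)}$ picked out of $\vec{w}_v$ by the bit $b^c(v)$, and sets $w^c$ to be the composition $\composition$ of this weight-function $T$-ensemble. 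One checks, as for the other compositions in Section \ref{sec:comp}, that this operation is associative, so the $k$-fold version is well defined.

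For the first bullet I would fix $c\in\{0,1\}$ and apply Proposition \ref{T-labeling}: the top-down labeling $b^c_T$ induced by $c$ and $\vec{\alpha}_T$ coincides with the \emph{evaluation} labeling of $f_T$ on $\alpha^c:=\alpha^c_{\circ}$, and its induced assignment ensemble satisfies $b^c_v=\alpha_v^{b^c(v)}$ at every internal $v$. Since each $(\vec{\alpha}_i,\vec{w}_i)$ is $(f_i,\mc{A})$-compatible, $w_v^{b^c(v)}\in\mc{A}_{\alpha_v^{b^c(v)}}(f_v)=\mc{A}_{b^c_v}(f_v)$, so the weight-function ensemble $(w_v^{b^c(v)}:v\in Int(T))$ is compatible with $f_T$ and $\alpha^c$ in the sense of Section \ref{subsec:analysing}. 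Property~(1) of well-behavedness (Lemma \ref{assemblage comp}) then gives $w^c=\composition(w_v^{b^c(v)})\in\mc{A}_{\alpha^c}(F)$; together with the $F$-compatibility of $\vec{\alpha}$ from the first bullet of Lemma \ref{lem:uniform comp}, this says $(\vec{\alpha},\vec{w})$ is $(F,\mc{A})$-compatible.

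For the second bullet I would invoke Proposition \ref{prop:well-behaved}: $m_{\alpha^c}(F)$ is the minimum of $|w|$ over compositions of ensembles $(w_v)$ with $w_v\in\mc{A}_{\alpha_v^{b^c(v)}}(f_v)$. Writing $W(v)$ for the total weight of the composition of the weight functions in the subtree rooted at $v$, one has $W(v)=\sum_{u\in C(v)}w_v(u)W(u)$ with $W\equiv 1$ on leaves and $m_{\alpha^c}(F)=W(\Lambda)$. Because all weights are nonnegative and the constraints on distinct $w_v$ are independent, this minimization is solved from the leaves up: there is a well-defined $W^{\min}(v)$, attained by choosing $w_v$ to minimize $\sum_u w_v(u)W^{\min}(u)$ over $\mc{A}_{\alpha_v^{b^c(v)}}(f_v)$ (a minimum realized by compactness of the assemblage). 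The fact to verify is that for a level-$t$ node $u$, the subtree rooted at $u$ together with the labeling below it depends only on $t$ and on the single bit $b^c(u)$ --- since in a uniform composition all level-$t$ nodes carry copies of $f_t$ and $\vec{\alpha}_t$ --- so $W^{\min}(u)$ is a function of $(t,b^c(u))$ alone; in fact $W^{\min}(u)=m_{\beta_t^{b^c(u)}}(f_t\composedwith\cdots\composedwith f_k)$ with $\beta_t=\composition(\vec{\alpha}_t,\dots,\vec{\alpha}_k)$, by Proposition \ref{prop:well-behaved} applied to that sub-composition. Consequently the greedy minimizer placed at a level-$t$ node depends only on the bit there, so it can be packaged as a single weight-function selector $\vec{w}_t=(w_t^0,w_t^1)$ with $w_t^d\in\mc{A}_{\alpha_t^d}(f_t)$, used at every level-$t$ node and for both root bits --- and this is exactly the data the AW-composition consumes. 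With this choice $|w^c|=W^{\min}(\Lambda)=m_{\alpha^c}(F)$ for both $c$, i.e.\ $\vec{w}$ is $m$-optimal for $F$ at $\vec{\alpha}$.

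I expect the main obstacle to be bookkeeping rather than a new idea: pinning down the AW-composition so that the two root-bit labelings $b^0_T$ and $b^1_T$ genuinely reuse the same selectors $\vec{w}_t$, and checking the ``subtree minima depend only on level and bit'' fact, which is precisely what makes the level-uniform greedy construction optimal. Equivalently, the second bullet can be organized as an induction on $k$: peel off $f_1$, apply Proposition \ref{prop:well-behaved} to rewrite $m_{\alpha^c}(f_1\composedwith G)$ for $G=f_2\composedwith\cdots\composedwith f_k$ as $\min_{w\in\mc{A}_{\alpha_1^c}(f_1)}\big(m_{\beta^0}(G)\,w((\alpha_1^c)^{-1}(0))+m_{\beta^1}(G)\,w((\alpha_1^c)^{-1}(1))\big)$, realize $m_{\beta^0}(G),m_{\beta^1}(G)$ by the inductive hypothesis applied to $G$, and choose $\vec{w}_1$ to attain that minimum.
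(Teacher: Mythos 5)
Your argument is correct and is essentially the paper's. The first bullet is exactly the paper's argument (Proposition~\ref{T-labeling} plus well-behavedness gives membership in $\mc{A}_{\alpha^c}(F)$; Lemma~\ref{lem:uniform comp} gives $F$-compatibility of $\vec{\alpha}$). For the second bullet, your ``peel off $f_1$'' alternative at the end is word-for-word what the paper does: choose $\vec{w}_2,\dots,\vec{w}_k$ by induction to be optimal for $G=f_2\composedwith\cdots\composedwith f_k$, then pick $\vec{w}_1$ to minimize the resulting weighted sum (the paper realizes this minimizer concretely as the top factor $w^*_r$ of a minimal $w^*\in\mc{A}_{\alpha^c}(F)$ obtained from the decomposition property of a well-behaved assemblage, then checks $|w^*_r(w_2^{b(v)}:v)|\le|w^*|$ by swapping in the optimal lower pieces). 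Your primary bottom-up dynamic-programming presentation is just a different packaging of the same observation: the subtree minimum $W^{\min}(u)$ depends only on the level and the bit at $u$, so the greedy level-by-level choice is both consistent with the uniform AW-selector structure and globally optimal because the objective decomposes additively with nonnegative coefficients. Either phrasing closes the argument; there is no gap.
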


We now define the following function, which maps a sequence of AW selector-pairs to a real number:

\[
V\left((\vec{\alpha}_1,\vec{w}_1),\ldots,(\vec{\alpha}_k,\vec{w}_k)\right) := \max\{|w^0|,|w^1|\},
\]
where the weight function selector $\vec{w}$ is given by $(\vec{\alpha},\vec{w})=(\vec{\alpha}_1,\vec{w}_1) \composedwith \cdots \composedwith (\vec{\alpha}_k,\vec{w}_k)$. Combining Lemmas \ref{lem:uniform comp} and \ref{lem:hitting decomp} we obtain

\begin{restatable}{lemma}{mclassified} \label{lem:m classified}
Let $F=f_1 \composedwith \cdots \composedwith f_k$.  Then $m(F)$ is equal to the maximum, over all sequences of assignment selectors
$\vec{\alpha}_1,\ldots,\vec{\alpha}_k$ where each $\vec{\alpha}_i$ is $f_i$-compatible, of the minimum, over all sequences $\vec{w}_1,\ldots,\vec{w}_k$
of weight function selectors such that $(\vec{\alpha}_i,\vec{w}_i)$ is $(f_i,\mc{A})$-compatible, of $V\left((\vec{\alpha}_1,\vec{w}_1),\ldots,(\vec{\alpha}_k,\vec{w}_k)\right)$.
\end{restatable}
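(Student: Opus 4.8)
The plan is to prove the two inequalities $m(F)\le \max_{(\vec\alpha_i)}\min_{(\vec w_i)}V$ and $m(F)\ge \max_{(\vec\alpha_i)}\min_{(\vec w_i)}V$ separately, where the outer maximum ranges over sequences of $f_i$-compatible assignment selectors $\vec\alpha_1,\ldots,\vec\alpha_k$ and the inner minimum ranges over sequences of weight function selectors $\vec w_1,\ldots,\vec w_k$ with $(\vec\alpha_i,\vec w_i)$ being $(f_i,\mc{A})$-compatible. In each direction one feeds one part of Lemma~\ref{lem:uniform comp} into one part of Lemma~\ref{lem:hitting decomp}. Two preliminary remarks: since $F$ has finitely many inputs there is an $m$-optimal assignment selector for $F$, so that $m(F)=\max\{m_{\alpha^0}(F),m_{\alpha^1}(F)\}$ for some $F$-compatible $\vec\alpha=(\alpha^0,\alpha^1)$; and since each $\mc{A}_x(F)$ is compact, for every $F$-compatible $\vec\alpha$ there is an $m$-optimal weight function selector for $F$ at $\vec\alpha$, i.e. a $\vec\alpha$-compatible $\vec w=(w^0,w^1)$ with $|w^0|=m_{\alpha^0}(F)$ and $|w^1|=m_{\alpha^1}(F)$.

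For $m(F)\le \max_{(\vec\alpha_i)}\min_{(\vec w_i)}V$: apply the second part of Lemma~\ref{lem:uniform comp} to get $f_i$-compatible selectors $\vec\alpha_1,\ldots,\vec\alpha_k$ whose composition $\vec\alpha=\vec\alpha_1\composedwith\cdots\composedwith\vec\alpha_k$ is $m$-optimal for $F$; then $\max\{m_{\alpha^0}(F),m_{\alpha^1}(F)\}=m(F)$. For this fixed choice of $(\vec\alpha_i)$ and any compatible sequence $(\vec w_i)$, the first part of Lemma~\ref{lem:hitting decomp} says the composed pair $(\vec\alpha,\vec w)$ is $(F,\mc{A})$-compatible, i.e. $w^0\in\mc{A}_{\alpha^0}(F)$ and $w^1\in\mc{A}_{\alpha^1}(F)$; since $m_{\alpha^c}(F)$ is the minimum of $|w|$ over $w\in\mc{A}_{\alpha^c}(F)$, we get $|w^c|\ge m_{\alpha^c}(F)$ for $c\in\{0,1\}$, hence $V((\vec\alpha_1,\vec w_1),\ldots,(\vec\alpha_k,\vec w_k))=\max\{|w^0|,|w^1|\}\ge m(F)$. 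Therefore the inner minimum for this $(\vec\alpha_i)$ is at least $m(F)$, and a fortiori the outer maximum is at least $m(F)$.

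For $m(F)\ge \max_{(\vec\alpha_i)}\min_{(\vec w_i)}V$: fix an arbitrary sequence of $f_i$-compatible selectors $\vec\alpha_1,\ldots,\vec\alpha_k$ with composition $\vec\alpha$; it suffices to produce one compatible $(\vec w_i)$ with $V\le m(F)$. By the first part of Lemma~\ref{lem:uniform comp}, $\vec\alpha$ is $F$-compatible, so $m_{\alpha^0}(F)\le m_0(F)\le m(F)$ and similarly $m_{\alpha^1}(F)\le m(F)$. By the second part of Lemma~\ref{lem:hitting decomp}, there exist $\vec w_1,\ldots,\vec w_k$ with $(\vec\alpha_i,\vec w_i)$ being $(f_i,\mc{A})$-compatible whose composition $\vec w$ is an $m$-optimal weight function selector for $F$ at $\vec\alpha$, so $|w^0|=m_{\alpha^0}(F)$ and $|w^1|=m_{\alpha^1}(F)$. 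Then $V=\max\{|w^0|,|w^1|\}=\max\{m_{\alpha^0}(F),m_{\alpha^1}(F)\}\le m(F)$. Since $(\vec\alpha_i)$ was arbitrary, the outer maximum is at most $m(F)$, and combining the two directions yields the lemma.

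The substantive content here is entirely carried by Lemmas~\ref{lem:uniform comp} and~\ref{lem:hitting decomp} — that $m$-optimal selectors for $F$ arise as compositions of $f_i$-compatible selectors, and that composed weight function selectors can be chosen to have minimal total weight while remaining compatible — so there is no genuine obstacle at this stage. The only care needed is bookkeeping: correctly unpacking the definitions of ``$m$-optimal selector for $F$'' and ``$m$-optimal weight function selector for $F$ at $\vec\alpha$'', and invoking the appropriate half of each lemma on the appropriate side of the $\max$/$\min$.
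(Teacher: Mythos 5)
Your proof is correct and matches the paper's intent: the paper states the lemma as an immediate consequence of Lemmas~\ref{lem:uniform comp} and~\ref{lem:hitting decomp} without spelling out the combination, and your two-inequality argument — using the second part of Lemma~\ref{lem:uniform comp} with the first part of Lemma~\ref{lem:hitting decomp} for the lower bound on the max-min, and the first part of Lemma~\ref{lem:uniform comp} with the second part of Lemma~\ref{lem:hitting decomp} for the upper bound — is exactly the bookkeeping that the phrase ``combining Lemmas~\ref{lem:uniform comp} and~\ref{lem:hitting decomp}'' elides.
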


So next we want to understand the function $V$.  The following definitions will be helpful.
\begin{itemize}
\item {\em Largest eigenvalue.} For a square real matrix $A$, $\rho(A)$ denotes the
maximum of $|\lambda|$ over all eigenvalues $\lambda$ of $A$.
\item
The {\em profile matrix} of an AW selector-pair $(\vec{\alpha},\vec{w})$ is defined to be the 2 by 2 matrix $M_{\vec{\alpha},\vec{w}}$ with rows and columns indexed by $\{0,1\}$ with
$s,t$ entry equal to $\sum_j  w^s(j)$ where the sum ranges over indices $j$ such that $\alpha^s_j=t$.  Note that for $s \in \{0,1\}$, the $s$th row sum of $M_{\vec{\alpha},\vec{w}}$
is equal to $|w^s|$.
\item {\em Profile matrix family  $\mc{M}_{\vec{\alpha}}(f)$ for the assemblage $\mc{A}$.} 
For each function $f$ and assignment selector $\vec{\alpha}$, 
$ \mc{M}_{\vec{\alpha}}(f)$ is the set of all
matrices $M_{\vec{\alpha},\vec{w}}$ where $\vec{w}$ ranges over weight function selectors
such that $(\vec{\alpha},\vec{w})$ form an $(f, \mc{A})$-compatible pair. Note that the set of profile matrices depends on the assemblage. In particular, the set of profile matrices for the assemblage $\mc{W}$ will be a subset of the set of profile matrices for the assemblage $\mc{w}^*$.
  
\end{itemize}
\noindent
{\bf Example.} Consider $f = \text{NAND}_n(x)$, that is $f(x) = 0$ only at the all 1's input. Let $\alpha^0 = (1,1,\cdots, 1)$ and $\alpha^1 = (0,1,1,1,\cdots,1)$. Take $m$ to be certificate complexity. The only witness for $\alpha^0$ is $w_0 \equiv 1$, and we take $w_1$ to assign weight $1$ on the 0 index and weight 0 otherwise. Here the profile matrix
\[M_{\vec{\alpha},\vec{w}} = 
  \begin{bmatrix}  
     0 & n \\
     1 & 0 \\
  \end{bmatrix}.
\]
Note that $\rho(M_{\vec{\alpha},\vec{w}}) = \sqrt{n}$. It is also true that $C^{\lim}(f) = \sqrt{n}$ (this is not a coincidence as we will see).

It turns out that matrix multiplication captures the mechanism behind the composition of AW selector-pairs. In fact, for any sequence $\{(\vec{\alpha}_i,\vec{w}_i)\}_{i=1}^k$ we have that $V((\vec{\alpha}_1,\vec{w}_1),\ldots,(\vec{\alpha}_k,\vec{w}_k))$ is equal to the maximum row sum of the product $M_{\vec{\alpha}_1,\vec{w}_1} \cdots M_{\vec{\alpha}_k,\vec{w}_k}$. This follows from

\begin{restatable}{proposition}{matrixmult} \label{prop:matrix mult}
For any sequence $(\vec{\alpha}_1,\vec{w}_1),\ldots,(\vec{\alpha}_k,\vec{w}_k)$ of AW selector-pairs, if $(\vec{\alpha},\vec{w})$ is their composition
then the profile matrix $M_{\vec{\alpha},\vec{w}}$ is given by:

\[
M_{\vec{\alpha},\vec{w}}=M_{\vec{\alpha}_1,\vec{w}_1} \cdots M_{\vec{\alpha}_k,\vec{w}_k}.
\]
\end{restatable}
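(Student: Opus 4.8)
The plan is to unwind the recursive definition of the composition of AW selector-pairs, extract from it a ``product along root-to-leaf paths'' description of the composed assignment and weight-function selectors, and then recognize the entries of the composed profile matrix as precisely the sum-over-paths expansion of a product of $2\times 2$ matrices.

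Set $T=T_1\composedwith\cdots\composedwith T_k$, so that a leaf of $T$ is a tuple $\ell=(v_1,\dots,v_k)$ with $v_i\in L(T_i)$, and the internal node at level $i$ on the path to $\ell$ is the prefix $(v_1,\dots,v_i)$ (the root being the empty prefix at level $0$, the leaf at level $k$). Fix $c\in\{0,1\}$ and let $b^c_T$ be the top-down boolean $T$-labeling induced by $c$ and the ensemble $(\vec{\alpha}_1,\dots,\vec{\alpha}_k)$. From the definition of the top-down labeling, if we set $c_0:=c$ and $c_i:=\alpha_i^{c_{i-1}}(v_i)$ for $i\geq 1$, then $b^c_T(v_1,\dots,v_i)=c_i$; in particular the composed assignment satisfies $\alpha^c(\ell)=c_k$. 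From the definition of the composed weight-function selector, $w^c$ is the $T$-composition of the weight-function ensemble in which the level-$(i-1)$ node with label $c_{i-1}$ carries the weight function $w_i^{c_{i-1}}$; since a $T$-composition of weight functions assigns to a leaf the product of the edge weights along its root-to-leaf path, this gives $w^c(\ell)=\prod_{i=1}^k w_i^{c_{i-1}}(v_i)$, where $c_1,\dots,c_{k-1}$ are the bits determined by $\ell$ and $c$ as above. I would first verify these two formulas in the base case $k=2$, where they are transparent from the explicit construction, and then obtain them for all $k$ either by associativity of the composition or by a one-step induction that peels off the bottom level $T_k$.

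Granting those formulas, the computation is short. For $s,t\in\{0,1\}$,
\[
M_{\vec{\alpha},\vec{w}}[s,t]\;=\;\sum_{\ell:\,\alpha^s(\ell)=t} w^s(\ell)\;=\;\sum_{\substack{(v_1,\dots,v_k):\\ c_k=t}}\ \prod_{i=1}^k w_i^{c_{i-1}}(v_i),
\]
with $c_0=s$ and $c_i=\alpha_i^{c_{i-1}}(v_i)$. Each tuple $(v_1,\dots,v_k)$ determines the intermediate bits $c_1,\dots,c_{k-1}$ uniquely, so the last sum equals the sum over all $(c_1,\dots,c_{k-1})\in\{0,1\}^{k-1}$ and all tuples satisfying $\alpha_i^{c_{i-1}}(v_i)=c_i$ for every $i\in\{1,\dots,k\}$ (with $c_0=s$, $c_k=t$ fixed). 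For fixed $c_1,\dots,c_{k-1}$ the constraints on the coordinates $v_1,\dots,v_k$ decouple, so the sum over tuples factors, and each factor $\sum_{v_i:\,\alpha_i^{c_{i-1}}(v_i)=c_i} w_i^{c_{i-1}}(v_i)$ is by definition the entry $M_{\vec{\alpha}_i,\vec{w}_i}[c_{i-1},c_i]$. Hence
\[
M_{\vec{\alpha},\vec{w}}[s,t]\;=\;\sum_{c_1,\dots,c_{k-1}}\ \prod_{i=1}^k M_{\vec{\alpha}_i,\vec{w}_i}[c_{i-1},c_i]\;=\;\bigl(M_{\vec{\alpha}_1,\vec{w}_1}\cdots M_{\vec{\alpha}_k,\vec{w}_k}\bigr)[s,t],
\]
which is the desired identity.

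The one genuine point requiring care --- the main obstacle --- is the bookkeeping behind the two path-product formulas: the superscript $c_{i-1}$ on the $i$-th factor $w_i^{c_{i-1}}$ is exactly where the dependence of $\vec{w}$ on the assignment selectors $\vec{\alpha}_j$ is encoded, and one must check that the single top-down boolean labeling used to build the composed assignment $\vec{\alpha}$ is the same one used to select the ``copies'' of the $w_i$ when building $\vec{w}$. This is precisely the content of how the AW-pair composition is defined in Section~\ref{sec:details}, so once that definition is at hand the verification is routine; everything after it is the elementary fact that a matrix product is a sum over paths in a layered graph.
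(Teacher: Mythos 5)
Your proposal is correct and is essentially the paper's argument, differing only in presentation: the paper verifies the identity directly for $k=2$ by partitioning the leaf sum $\sum_{\ell:\alpha^0(\ell)=0}w^0(\ell)$ according to the root-child label $b_r(v)$ and then invokes induction, whereas you unfold the same leaf sum for general $k$ at once via the sum-over-paths expansion of a matrix product, partitioning by all intermediate labels $(c_1,\dots,c_{k-1})$. Both hinge on the same two facts — the explicit path-product formulas $\alpha^c(\ell)=c_k$ and $w^c(\ell)=\prod_i w_i^{c_{i-1}}(v_i)$ coming out of the top-down labeling — so this is the same calculation organized slightly differently, not a genuinely different route.
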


Lemma \ref{lem:m classified} and Proposition \ref{prop:matrix mult} imply
\begin{corollary}
\label{cor:m(f) via matrices}
Let $F=f_1 \composedwith \cdots \composedwith f_k$.  Then $m(F)$ is equal to the maximum, over all sequences $\vec{\alpha}_1,\ldots,\vec{\alpha}_k$ where $\vec{\alpha}_i$ is $f_i$-compatible, of the minimum, over all choices of matrices $M_1,\ldots,M_k$ where $M_i$ belongs to the profile matrix family $\mc{M}_{\vec{\alpha}}(f_i)$,
of the maximum row sum of the product $M_1 \cdots M_k$.
\end{corollary}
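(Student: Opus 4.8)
The plan is to simply chain together Lemma~\ref{lem:m classified} and Proposition~\ref{prop:matrix mult}; the only real content is a re-indexing of the inner minimization from weight function selectors to profile matrices. By Lemma~\ref{lem:m classified}, $m(F)$ equals the maximum, over all sequences of $f_i$-compatible assignment selectors $\vec{\alpha}_1,\ldots,\vec{\alpha}_k$, of the minimum, over all sequences $\vec{w}_1,\ldots,\vec{w}_k$ of weight function selectors with each $(\vec{\alpha}_i,\vec{w}_i)$ an $(f_i,\mc{A})$-compatible pair, of the quantity $V\left((\vec{\alpha}_1,\vec{w}_1),\ldots,(\vec{\alpha}_k,\vec{w}_k)\right)=\max\{|w^0|,|w^1|\}$, where $(\vec{\alpha},\vec{w})$ denotes the composition of the AW selector-pairs.

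Next I would rewrite $V$ purely in terms of the profile matrices $M_{\vec{\alpha}_i,\vec{w}_i}$. By Proposition~\ref{prop:matrix mult}, the profile matrix of the composed pair is the matrix product $M_{\vec{\alpha},\vec{w}}=M_{\vec{\alpha}_1,\vec{w}_1}\cdots M_{\vec{\alpha}_k,\vec{w}_k}$. By the definition of the profile matrix, for each $s\in\{0,1\}$ the $s$th row sum of $M_{\vec{\alpha},\vec{w}}$ equals $|w^s|$, so $V\left((\vec{\alpha}_1,\vec{w}_1),\ldots,(\vec{\alpha}_k,\vec{w}_k)\right)$ is exactly the maximum row sum of $M_{\vec{\alpha}_1,\vec{w}_1}\cdots M_{\vec{\alpha}_k,\vec{w}_k}$. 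In particular, $V$ depends on the weight function selectors $\vec{w}_1,\ldots,\vec{w}_k$ only through the matrices $M_{\vec{\alpha}_i,\vec{w}_i}$.

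Then the inner minimization can be reparametrized. Fixing the assignment selectors $\vec{\alpha}_1,\ldots,\vec{\alpha}_k$, by the definition of the profile matrix family $\mc{M}_{\vec{\alpha}_i}(f_i)$, as $\vec{w}_i$ ranges over all weight function selectors making $(\vec{\alpha}_i,\vec{w}_i)$ an $(f_i,\mc{A})$-compatible pair, the matrix $M_{\vec{\alpha}_i,\vec{w}_i}$ ranges over precisely the set $\mc{M}_{\vec{\alpha}_i}(f_i)$ (which is nonempty since $\mc{A}_{\alpha_i^s}(f_i)$ is nonempty for $s\in\{0,1\}$). Since the objective $V$ is a function of the matrices alone, the minimum over the $\vec{w}_i$ equals the minimum, over all choices $M_i\in\mc{M}_{\vec{\alpha}_i}(f_i)$, of the maximum row sum of $M_1\cdots M_k$. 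Substituting this back into the expression furnished by Lemma~\ref{lem:m classified} yields the claimed formula for $m(F)$.

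The step that I expect to require the most care is this reparametrization of the minimum: one must check both that every matrix in $\mc{M}_{\vec{\alpha}_i}(f_i)$ is realized by some admissible $\vec{w}_i$ and that distinct $\vec{w}_i$ realizing the same matrix leave the value of $V$ unchanged. Both are immediate from Proposition~\ref{prop:matrix mult} and the definitions, but they are what legitimizes replacing a minimum over functions by a minimum over $2\times 2$ matrices. (I would also flag the minor notational point that the $\mc{M}_{\vec{\alpha}}(f_i)$ appearing in the statement should be read as $\mc{M}_{\vec{\alpha}_i}(f_i)$.)
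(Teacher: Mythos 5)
Your proposal is correct and follows exactly the route the paper intends: the paper offers no explicit proof of Corollary~\ref{cor:m(f) via matrices}, stating only that it follows from Lemma~\ref{lem:m classified} and Proposition~\ref{prop:matrix mult}, and you have supplied precisely the reparametrization argument needed to make that deduction rigorous. Your observation that $\mc{M}_{\vec{\alpha}}(f_i)$ in the statement should read $\mc{M}_{\vec{\alpha}_i}(f_i)$ is also correct.
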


At last we are ready to consider the case of iterated composition, where all of the $f_i$ are the same function $f$. We wish to understand $m(f^{(k)})$, which we now know is the maximum over choices of $\vec{\alpha}_i$ of the minimum over choices of $\vec{w}_i \in \mc{A}_{\vec{\alpha}_i}(f)$ of $V((\vec{\alpha}_1,\vec{w}_1) \circ \cdots \circ (\vec{\alpha}_k,\vec{w}_k))$. Intuitively, one may think of the choice of each assignment selector $\vec{\alpha}_i$ as defining the set of profile matrices $\mc{M}_{\vec{\alpha}_i}(f)$, and the choice of each weight function selector $\vec{w}_i$ as choosing a matrix $M_i \in  \mc{M}_{\vec{\alpha}_i}(f)$.

If we wish to lower bound $m(f^{(k)})$ we hope to find assignment selectors $\vec{\alpha}_i$ for which all possible products of the form 
\[ M_1M_2\cdots M_k, \ \ \ \ \ \ M_i \in \mc{M}_{\vec{\alpha}_i}(f)\]
have a large max row sum. We will need the following simple fact about matrices:

\begin{fact} \label{fact:matrix rho}
  For any matrix $M \in \mathbb{R}^{2 \times 2}_{\geq 0}$ we have
 
   \[||M||_{\infty} \geq \rho(M)/2.\]

\end{fact}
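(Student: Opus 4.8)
The plan is to bound the spectral radius $\rho(M)$ by (twice) the largest entry of $M$ via the elementary estimate $\rho(M)\le\max\{M_{11}+M_{12},\,M_{21}+M_{22}\}$ — that is, the spectral radius of a nonnegative matrix is at most its maximum row sum — and then to note that for a $2\times 2$ matrix the maximum row sum is at most twice the maximum entry.

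First I would fix an eigenvalue $\lambda$ of $M$ with $|\lambda|=\rho(M)$ together with a (possibly complex) eigenvector $v=(v_1,v_2)$, normalized so that $\max\{|v_1|,|v_2|\}=1$; say this maximum is attained at coordinate $i$, so $|v_i|=1$. Reading off the $i$-th coordinate of the identity $Mv=\lambda v$ and using nonnegativity of the entries of $M$ together with the triangle inequality gives
\[
\rho(M)=|\lambda|=|\lambda v_i|=\Bigl|\sum_{j}M_{ij}v_j\Bigr|\le\sum_{j}M_{ij}|v_j|\le\sum_{j}M_{ij}=M_{i1}+M_{i2}.
\]
(Alternatively one can simply cite the Perron--Frobenius bound $\rho(M)\le\max_i\sum_j M_{ij}$ for nonnegative matrices, which produces a nonnegative eigenvector directly.) Since $M$ has only two columns, $M_{i1}+M_{i2}\le 2\max_{s,t}M_{s,t}=2\|M\|_\infty$, and combining the two displayed inequalities yields $\rho(M)\le 2\|M\|_\infty$, i.e. $\|M\|_\infty\ge\rho(M)/2$, as claimed.

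I do not expect a genuine obstacle here: this is an elementary fact, and the only points requiring mild care are (a) handling a complex eigenvector by passing to coordinatewise absolute values before applying the triangle inequality (or invoking Perron--Frobenius to obtain a nonnegative eigenvector outright), and (b) the convention that $\|\cdot\|_\infty$ denotes the entrywise max-norm of the matrix. If instead $\|\cdot\|_\infty$ were the induced $\ell_\infty$ operator norm (the maximum row sum), the same argument gives the stronger conclusion $\|M\|_\infty\ge\rho(M)$ with no factor of $2$, so nothing downstream is affected. The factor $2$ is precisely the loss incurred in passing from ``row sum'' to ``largest entry'' for a width-$2$ matrix, and it is tight, as witnessed by the all-ones $2\times 2$ matrix.
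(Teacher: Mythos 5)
Your proof is correct, and it is essentially the standard elementary argument. A small note on context: the paper never actually proves this fact. It is stated twice (as Fact~\ref{fact:matrix rho} and again as item~4 of Fact~\ref{fact_non_neg}), and in the latter place the authors explicitly declare the claims to be well-known consequences of Perron--Frobenius theory and defer to Meyer's textbook for the proofs. So there is no in-paper argument to compare against; your supplied proof is the natural one to fill that gap. Your reading of $\|M\|_\infty$ as the entrywise max norm is also the right one: the induced $\ell_\infty$ operator norm already dominates $\rho(M)$ with no factor of $2$, and the factor of $2$ in Fact~\ref{fact:matrix rho} is used later (in Lemma~\ref{lem:m upper bound} and Corollary~\ref{cor:matixinfty}) precisely to pass from ``no entry exceeds $X$'' to ``max row sum at most $2X$'' for a $2\times 2$ matrix, exactly the width-$2$ loss you identify. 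Your remark that the all-ones matrix shows the bound is tight is a nice check, though not needed for the downstream applications.
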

It turns out that the minimum of $\rho(M)$ over matrices in the family $\mc{M}_{\vec{\alpha}}(f)$ gives a good notion of how hard the input selector $\vec{\alpha}$ is.
  We now introduce two more definitions:

\begin{itemize}
\item {\em The characteristic value $\hat{m}_{\vec{\alpha}}(f)$ for $(f,\vec{\alpha})$}.
For a boolean function $f$ and assignment selector $\vec{\alpha}$ compatible with $f$, define $\hat{m}_{\vec{\alpha}}(f)$ to
be the minimum of $\rho(A)$ over all $A \in \mc{M}_{\vec{\alpha}}(f)$. 
The function $\hat{m}_{\vec{\alpha}}(f)$ can be viewed as another complexity measure derived from
the assemblage $\mc{A}$ which is {\em bi-local} rather than {\em local} in the sense
that it depends on a pair of assignments rather than just one. 
 \item {\em The characteristic value $\hat{m}(f)$} is the maximum, over all assignment selectors $\vec{\alpha}$ compatible with
$f$, of $\hat{m}_{\vec{\alpha}}(f)$.    
\end{itemize}

 If we can prove that $m_{\vec{\alpha}}(f) \geq \lambda$ then this will imply by Fact \ref{fact:matrix rho} that $\max(m_{\alpha_0}(f), m_{\alpha_1}(f)) \geq \lambda/2$. This suggests that, in order to construct a hard assignment selector for $f^{(k)}$, we should choose a selector $\vec{\beta}$ which maximizes $\hat{m}_{\vec{\beta}}(f)$ and then set $\vec{\alpha} = \vec{\beta}^{(k)}$. We will use this idea to prove
\begin{restatable}{lemma}{lowerbound} \label{lem:m lower bound}
  For any boolean function $f$ and $k \in \mathbb{N}$
   \[m(f^{(k)}) \geq \hat{m}(f)^k/2.\]
\end{restatable}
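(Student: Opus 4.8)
The plan is to fix a function $f$ and choose an $f$-compatible assignment selector $\vec{\beta}$ achieving $\hat m(f) = \hat m_{\vec\beta}(f)$; such a $\vec\beta$ exists because the set of $f$-compatible selectors is finite (there are finitely many boolean assignments) and for each the characteristic value is the minimum of $\rho$ over a compact matrix family, hence attained. Then set $\vec\alpha := \vec\beta^{(k)} = \vec\beta \composedwith \cdots \composedwith \vec\beta$ ($k$ times). By the first part of Lemma~\ref{lem:uniform comp}, $\vec\alpha$ is an $F$-compatible assignment selector for $F = f^{(k)}$, so it is a legitimate candidate in the max of Corollary~\ref{cor:m(f) via matrices}. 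Therefore
\[
m(f^{(k)}) \ \geq\ \min_{M_1,\ldots,M_k}\ \|M_1 \cdots M_k\|_\infty,
\]
where each $M_i$ ranges over the profile matrix family $\mc{M}_{\vec\beta}(f)$ (all $f_i = f$ and all $\vec\alpha_i = \vec\beta$, so the family is the same for each $i$).

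The core of the argument is then a purely matrix-theoretic claim: if $M_1,\ldots,M_k$ all lie in a set $\mc{S} \subseteq \mathbb{R}^{2\times 2}_{\geq 0}$ such that every $M \in \mc{S}$ has $\rho(M) \geq \lambda$, then $\|M_1\cdots M_k\|_\infty \geq \lambda^k/2$. Combined with the display above and the definition $\hat m(f) = \hat m_{\vec\beta}(f) = \min_{A \in \mc{M}_{\vec\beta}(f)} \rho(A)$ (so every matrix in the family has $\rho \geq \hat m(f)$), taking $\lambda = \hat m(f)$ gives exactly $m(f^{(k)}) \geq \hat m(f)^k/2$. To prove the matrix claim I would first observe that for a product of nonnegative $2\times 2$ matrices, the max-row-sum is submultiplicative in a one-sided way that is convenient here: more precisely I want a lower bound, so I would instead track the spectral radius of the product. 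For nonnegative matrices, the Perron--Frobenius eigenvalue is the spectral radius and it is supermultiplicative along products in the following sense: $\rho(M_1 \cdots M_k)$ need not be $\geq \prod \rho(M_i)$ in general, so I would avoid that route. Instead I would use a direct inductive approach: let $P_j = M_1 \cdots M_j$ and show by induction that the vector of row sums of $P_j$ dominates (entrywise, after scaling) something of size $\lambda^j$. The cleanest version: for each $M_i$ with $\rho(M_i) \geq \lambda$, there is a nonnegative left eigenvector or a $2$-dimensional cone argument. Actually the simplest: since each $M_i$ is nonnegative $2\times2$ with spectral radius $\geq\lambda$, it has a nonnegative eigenvector $v_i$ with $M_i v_i = \rho(M_i) v_i \geq \lambda v_i$ (entrywise). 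This single eigenvector does not directly chain through the product, so one more idea is needed.

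The right idea, and the step I expect to be the main obstacle, is to control $\|M_1\cdots M_k\|_\infty$ from below using a \emph{common} test vector. Here is the approach I would push: let $\mathbf{1} = (1,1)^\top$. I claim that for any nonnegative $2\times 2$ matrix $M$ with $\rho(M)\geq\lambda$, we have $M\mathbf{1} \geq \tfrac{\lambda}{2}\,\mathbf{1}$ fails in general (e.g. $M = \left(\begin{smallmatrix}0 & \lambda\\ \lambda & 0\end{smallmatrix}\right)$ is fine, but $M=\left(\begin{smallmatrix}0&\lambda^2\\1&0\end{smallmatrix}\right)$ has $\rho=\lambda$ and $M\mathbf1 = (\lambda^2,1)^\top$, not $\geq \tfrac\lambda2\mathbf1$ when $\lambda<1/2$; scaling aside, the issue is genuine imbalance). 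So instead I would use the identity relating $\rho$ to products: by Fact~\ref{fact:matrix rho}, $\|P_k\|_\infty \geq \rho(P_k)/2$, so it suffices to show $\rho(M_1\cdots M_k) \geq \lambda^k$. For this I would appeal to the following fact about the cone of nonnegative $2\times2$ matrices: if each $M_i$ is nonnegative with $\rho(M_i)\geq\lambda$, then $\rho(M_1\cdots M_k)\geq\lambda^k$. One way to see this: the map $M \mapsto \rho(M)$ on nonnegative $2\times2$ matrices satisfies $\rho(M) = \lim_n \|M^n\|^{1/n}$ and, crucially, $\rho(AB)=\rho(BA)$ and $\rho$ is monotone under the entrywise order. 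If the $M_i$ all shared a common Perron eigenvector this would be immediate; lacking that, I would instead use the explicit $2\times 2$ structure. Write each $M_i = \left(\begin{smallmatrix}a_i & b_i\\ c_i & d_i\end{smallmatrix}\right)$. Then $\rho(M_i) = \tfrac{a_i+d_i+\sqrt{(a_i-d_i)^2+4b_ic_i}}{2} \geq \lambda$ means $a_i+d_i + \sqrt{(a_i-d_i)^2 + 4b_ic_i}\geq 2\lambda$. One verifies directly (this is the computation I would not want to grind, but it is elementary) that for nonnegative $2\times 2$ matrices $\rho(M_1 M_2) \geq \rho(M_1)\rho(M_2)$ whenever\ldots — and in fact this \emph{can} fail, so the honest path is: use that for a \emph{primitive} family the joint behavior is governed by Perron vectors, handle the degenerate (reducible/permutation) cases separately, or — cleanest of all — simply invoke the one-dimensional trick used elsewhere in this literature: pick for the hard selector at level $i$ not $\vec\beta$ but the selector whose profile matrix is forced, and note the minimizing adversary's matrix $M_i \in \mc{M}_{\vec\beta}(f)$ still has $\rho(M_i) \geq \hat m(f)$, then bound $\|M_1\cdots M_k\|_\infty$ below by $\mathbf{1}^\top(M_1\cdots M_k)\mathbf{1}/2$ and expand. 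I expect the authors' actual proof chooses a fixed nonnegative vector $u$ (a Perron vector of the single matrix minimizing $\rho$ over $\mc{M}_{\vec\beta}(f)$ is not common to all, so more likely they use Fact~\ref{fact:matrix rho} together with the sub-multiplicativity of $\|\cdot\|_\infty$ in reverse via $\rho(M_1\cdots M_k)\geq$ the product of spectral radii, which \emph{does} hold for nonnegative $2\times 2$ matrices sharing the nonnegativity structure once one checks the two cases where some $b_ic_i=0$). So the concrete plan: (1) reduce via Corollary~\ref{cor:m(f) via matrices} and the choice $\vec\alpha=\vec\beta^{(k)}$ to lower-bounding $\|M_1\cdots M_k\|_\infty$ over $M_i\in\mc{M}_{\vec\beta}(f)$; (2) prove the matrix lemma $\rho(M_1\cdots M_k)\geq\prod_i\rho(M_i)\geq\hat m(f)^k$ for nonnegative $2\times 2$ matrices by the elementary case analysis on whether the off-diagonal product vanishes; (3) apply Fact~\ref{fact:matrix rho} to conclude $\|M_1\cdots M_k\|_\infty \geq \rho(M_1\cdots M_k)/2 \geq \hat m(f)^k/2$. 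Step~(2) is the main obstacle, since the naive supermultiplicativity of spectral radius is false for general matrices and the nonnegative $2\times2$ case needs the explicit argument.
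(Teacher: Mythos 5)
The reduction steps of your plan are exactly the paper's: choose an $f$-compatible $\vec\beta$ attaining $\hat m(f)$, set $\vec\alpha=\vec\beta^{(k)}$, invoke Corollary~\ref{cor:m(f) via matrices} to reduce to lower-bounding $\|M_1\cdots M_k\|_\infty$ over $M_i\in\mc{M}_{\vec\beta}(f)$, and then aim to show $\rho(M_1\cdots M_k)\geq\lambda^k$ and finish with Fact~\ref{fact:matrix rho}. The gap is in step (2) of your ``concrete plan'': the claim that $\rho(M_1\cdots M_k)\geq\prod_i\rho(M_i)$ for nonnegative $2\times 2$ matrices is simply false, and no case analysis on the off-diagonals can rescue it. Already diagonal matrices defeat it: take
\[
M_1=\begin{pmatrix}2&0\\0&1\end{pmatrix},\qquad M_2=\begin{pmatrix}1&0\\0&2\end{pmatrix},
\]
so $\rho(M_1)=\rho(M_2)=2$ but $\rho(M_1M_2)=\rho\!\left(\begin{smallmatrix}2&0\\0&2\end{smallmatrix}\right)=2<4$. (You can make the ratio as bad as you like, and one can also hit zero: with $M_1=\left(\begin{smallmatrix}0&1\\1&0\end{smallmatrix}\right)$, $M_2=\left(\begin{smallmatrix}2&0\\0&0\end{smallmatrix}\right)$, one gets $\rho(M_1M_2)=0$ while $\rho(M_1)\rho(M_2)=2$.) So $\rho(M_i)\geq\lambda$ for all $i$ is not enough, and this is precisely the obstacle you kept circling around without resolving.

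The missing idea is a structural property of the profile matrix family that you never invoke: for a fixed assignment selector $\vec\beta$, the weight function selector $\vec w=(w^0,w^1)$ has its two components chosen \emph{independently}, so if $M_i=M_{\vec\beta,\vec w_i}$ and $M_j=M_{\vec\beta,\vec w_j}$ both lie in $\mc{M}_{\vec\beta}(f)$, then so does the mix-and-match matrix $M_{ij}$ whose first row comes from $M_i$ and second row from $M_j$ (it is $M_{\vec\beta,(w_i^0,w_j^1)}$). Hence \emph{every} $M_{ij}$ has $\rho(M_{ij})\geq\hat m_{\vec\beta}(f)=\lambda$, not just the diagonal ones $M_{ii}=M_i$. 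That stronger, two-index hypothesis is exactly what Lemma~\ref{lemma_prod_supermult} requires to conclude $\rho(M_1\cdots M_k)\geq\lambda^k$; its proof finds a single vector $z\geq 0$, $z\neq 0$, with $M_iz\geq z$ simultaneously for all $i$, by identifying the worst row-$0$ constraint (some $i_0$) and the worst row-$1$ constraint (some $j_0$) and using $\rho(M_{i_0,j_0})\geq\lambda$. Note that in my diagonal counterexample this property fails: $M_{21}=\left(\begin{smallmatrix}1&0\\0&1\end{smallmatrix}\right)$ has $\rho=1<2$, so the counterexample does not live inside a legitimate profile matrix family with $\lambda=2$, which is why the paper's argument is not contradicted. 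In short: your reduction is right, but you need to (a) observe the row-mixing closure of $\mc{M}_{\vec\beta}(f)$, and (b) prove (or cite) the supermultiplicativity lemma under the two-index hypothesis, rather than the false single-index version.
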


 
 Next we obtain an upper bound $m(f^{(k)})$. We show that for any sequence of assignment selectors $\vec{\alpha}_1, \cdots, \vec{\alpha}_k$, we can find matrices $M_i \in \mc{M}_{\alpha_i}(f)$ such that the product $M = M_1 M_2 \cdots M_k$ has no entry larger than $nk (\hat{m}(f))^{k-1}$. This will prove

 \begin{restatable}{lemma}{upperbound} \label{lem:m upper bound}
   For any boolean function $f$ on $n$ variables and $k \in \mathbb{N}$,
     \[m(f^{(k)}) \leq 2n k(\hat{m}(f))^{k-1}.\]
\end{restatable}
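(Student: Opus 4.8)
The plan is to read the statement off from Corollary~\ref{cor:m(f) via matrices} with every $f_i$ equal to $f$: it then suffices to show that for \emph{every} sequence $\vec\alpha_1,\dots,\vec\alpha_k$ of $f$-compatible assignment selectors one can pick $M_i\in\mc{M}_{\vec\alpha_i}(f)$ so that the maximum row sum $\|M_1\cdots M_k\|_\infty$ is at most $2nk\,\hat m(f)^{k-1}$. Write $\lambda=\hat m(f)$. I would first record the facts about $2\times 2$ matrices that drive everything. Every matrix occurring in some $\mc{M}_{\vec\beta}(g)$ is entrywise nonnegative with both row sums at most $n$, since the weight functions of the assemblages involved have total weight at most $n$. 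For a nonnegative $M=\left(\begin{smallmatrix}a&b\\c&d\end{smallmatrix}\right)$ the discriminant $(a-d)^2+4bc$ is nonnegative, so both eigenvalues are real, the larger equals $\rho(M)=\tfrac12\bigl(a+d+\sqrt{(a-d)^2+4bc}\bigr)\ge\max(a,d)$, and $bc=(\rho(M)-a)(\rho(M)-d)\le\rho(M)^2$. Hence any $M$ with $\rho(M)\le\lambda$ has $a,d\le\lambda$ and $bc\le\lambda^2$.

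The heart of the argument is the choice of the $M_i$. For each $i$ there is a matrix of spectral radius $\hat m_{\vec\alpha_i}(f)\le\lambda$ in $\mc{M}_{\vec\alpha_i}(f)$, but a single such matrix may have an off-diagonal entry as large as $n$, and a careless product of such matrices can blow up like $n^{\Omega(k)}$. The point is that all the $M_i$ arise from the \emph{same} function $f$, so the choices can be coupled across consecutive indices. The row of a profile matrix indexed by the $0$-assignment depends only on that assignment and on the chosen $w^0\in\mc{A}_{\alpha^0}(f)$, similarly for the $1$-row, so these two rows may be chosen independently; moreover for any $i$ the ``mixed'' selector $\vec\gamma_i:=(\alpha^0_i,\alpha^1_{i+1})$ is $f$-compatible, since $f(\alpha^0_i)=0$ and $f(\alpha^1_{i+1})=1$. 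Choose, for each $i$, a matrix $N_i\in\mc{M}_{\vec\gamma_i}(f)$ with $\rho(N_i)\le\lambda$, and let $M_i$ be the matrix whose $0$-row is the $0$-row of $N_i$ and whose $1$-row is the $1$-row of $N_{i-1}$ (using the $0$-row of a $\rho$-minimiser of $\vec\alpha_1$ for $M_1$ and the $1$-row of a $\rho$-minimiser of $\vec\alpha_k$ for $M_k$). Each $M_i$ is then a legal element of $\mc{M}_{\vec\alpha_i}(f)$, its diagonal entries are at most $\lambda$ (they are diagonal entries of the $N$'s), all its entries are at most $n$, and---crucially---the $(0,1)$ entry $b_i$ of $M_i$ and the $(1,0)$ entry $c_{i+1}$ of $M_{i+1}$ are the $(0,1)$ and $(1,0)$ entries of the \emph{same} matrix $N_i$, so $b_i c_{i+1}\le\rho(N_i)^2\le\lambda^2$ for all $i$.

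It remains to bound $\|M_1\cdots M_k\|_\infty$ given that each $M_i=\left(\begin{smallmatrix}a_i&b_i\\c_i&d_i\end{smallmatrix}\right)$ satisfies $a_i,d_i\le\lambda$, $b_i,c_i\le n$, and $b_ic_{i+1}\le\lambda^2$. I would conjugate the product by a single diagonal matrix---equivalently, rescale the second coordinate by factors $t_j$ along the product with $c_j/\lambda\le t_j\le\lambda/b_{j-1}$, feasible precisely because $b_{j-1}c_j\le\lambda^2$---so that in each rescaled factor the entries in positions $(0,0)$, $(0,1)$, $(1,0)$ are at most $\lambda$ and only the $(1,1)$ entry can be large. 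Expanding the $(s,t)$ entry of the rescaled product as a sum over the $2^{k-1}$ length-$k$ walks in $\{0,1\}$, one groups the steps into maximal runs at state $1$; within each run the rescaling telescopes, and the two steps that enter and leave the run contribute off-diagonal entries whose product is one of the controlled quantities, so that each walk contributes at most $n\lambda^{k-1}$ and the factor $k$ comes from accounting for where the single large diagonal direction is active. I expect this last estimate---making the run-by-run accounting tight enough to absorb the $n$-sized off-diagonal entries into $\hat m(f)$ and come out with the clean bound $2nk\,\hat m(f)^{k-1}$ rather than something exponential in $k$---to be the main obstacle; everything before it is bookkeeping about where the chosen matrices come from.
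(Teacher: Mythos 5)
Your reduction to Corollary~\ref{cor:m(f) via matrices} and the observation that mixed selectors $(\alpha^0_i,\alpha^1_j)$ are $f$-compatible are both correct and are exactly the opening moves of the paper's proof. However, the construction that follows has a genuine gap: you couple only \emph{consecutive} indices, obtaining $b_i c_{i+1}\le\lambda^2$, but the walk decomposition you propose requires control of $b_i c_j$ for non-consecutive $i<j$ as well, and that control is not available from your choice of the $M_i$. Concretely, a walk of the form $0\to 1\to 1\to\cdots\to 1\to 0\to\cdots\to 0$ contributes $b_i\,(\prod_{\ell=i+1}^{j-1} d_\ell)\,c_j\,(\prod_{\ell>j} a_\ell)$; the diagonal conjugation telescopes exactly as you say, but what remains is $b_i c_j\prod d_\ell$, and $b_i c_j$ is bounded only by $n^2$ when $j-i\ge 2$. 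This single walk can already be as large as $n^2\lambda^{k-2}$, which exceeds the target $nk\lambda^{k-1}$ whenever $n>k\lambda$. So the rescaling does not absorb the bad off-diagonal products; it just relabels them.

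The fix---and the point of the paper's Lemma~\ref{lemma_prod_submult}---is that you must exploit compatibility of \emph{all} pairs $(\alpha^0_i,\alpha^1_j)$, not just consecutive ones, and you must not pick $N_i$ arbitrarily among $\rho$-minimizers. Instead, from the hypothesis ``for every $i,j$ there exist $u_{ij}\in U_i$, $v_{ij}\in V_j$ with $\rho\bigl[\begin{smallmatrix}u_{ij}\\ v_{ij}\end{smallmatrix}\bigr]\le\lambda$'' one chooses from each $U_i$ (resp.\ $V_j$) the row giving the \emph{weakest} linear constraint on a positive vector $z$. Because the strongest among these weakest constraints, say at indices $(i_0,j_0)$, still admits a solution $z>0$ by the hypothesis applied to $(i_0,j_0)$, that single $z$ satisfies $M_i z\le\lambda z$ for all $i$ simultaneously, which gives $\rho(M_1\cdots M_j)\le\lambda^j$ for every initial segment and hence (via Corollary~\ref{cor:matixinfty}) an entrywise bound of $nk\lambda^{k-1}$. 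Your choice of $u_i$ as the $0$-row of a $\rho$-minimizer for the consecutive pair $\vec\gamma_i$ is in general not this weakest-constraint row, so no single dominating $z$ need exist, and the product can blow up. The idea of coupling across indices is the right instinct; the error is in restricting to consecutive pairs and in believing that the consecutive bound $b_ic_{i+1}\le\lambda^2$ propagates through longer runs at state~$1$.
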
 
 
Armed with these lemmas, we easily obtain the main result of this section: 
 
 \begin{restatable}{theorem}{mlim}
\label{thm:m^lim}
Let $m$ be a complexity measure with associated well-behaved assemblage $\mc{A}$.  Then, for any boolean function $f$, $m^{\lim}(f)=\hat{m}(f)$.
\end{restatable}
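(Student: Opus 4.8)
The plan is to derive Theorem~\ref{thm:m^lim} directly from the two bounding lemmas, Lemma~\ref{lem:m lower bound} and Lemma~\ref{lem:m upper bound}, which together pin down the exponential growth rate of $m(f^{(k)})$. Recall that $m^{\lim}(f)$ is defined as $\liminf_{k\to\infty} m(f^{(k)})^{1/k}$. I would first take $k$-th roots in each inequality. From Lemma~\ref{lem:m lower bound} we get $m(f^{(k)})^{1/k} \geq (\hat{m}(f)^k/2)^{1/k} = \hat{m}(f)\cdot 2^{-1/k}$, and from Lemma~\ref{lem:m upper bound} we get $m(f^{(k)})^{1/k} \leq (2nk\,\hat{m}(f)^{k-1})^{1/k} = (2nk)^{1/k}\,\hat{m}(f)^{(k-1)/k} = (2nk)^{1/k}\,\hat{m}(f)^{1-1/k}$, where $n$ is the (fixed) number of variables of $f$.

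Next I would take the limit as $k\to\infty$. Since $n$ is a constant depending only on $f$, we have $(2nk)^{1/k}\to 1$ and $2^{-1/k}\to 1$ and $\hat{m}(f)^{\pm 1/k}\to 1$ (here one should note $\hat{m}(f)\geq 1$, or at least $\hat{m}(f)>0$, so these powers make sense and converge to $1$; this follows since $m_{\alpha}(f)\geq 1$ for any $f$-compatible $\vec\alpha$ on a nonconstant $f$, hence the profile matrices have row sums at least $1$ and spectral radius at least $1/2$ — actually one can argue the relevant matrices are nonnegative with a row sum $\geq 1$, giving $\rho\geq$ something bounded below, but in any case the boundary cases where $f$ is constant are trivial since then $m(f^{(k)})=0=\hat m(f)$). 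Therefore both the lower and upper bounds on $m(f^{(k)})^{1/k}$ converge to $\hat{m}(f)$, which shows that $\lim_{k\to\infty} m(f^{(k)})^{1/k}$ exists and equals $\hat{m}(f)$; in particular the $\liminf$ defining $m^{\lim}(f)$ equals $\hat{m}(f)$.

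The only genuine subtlety — and I expect this to be the ``main obstacle'' in the sense of requiring a word of care rather than real work — is handling degenerate cases and making sure the factor $n$ in Lemma~\ref{lem:m upper bound} really is a constant as $k$ grows. It is: $f$ is fixed, so the arity $n$ of $f$ does not change, even though $f^{(k)}$ has $n^k$ variables; the lemma is stated with $n$ being the arity of $f$, not of $f^{(k)}$. One should double-check that the statement of Lemma~\ref{lem:m upper bound} is indeed ``$f$ on $n$ variables'' and is applied to $f$ rather than to $f^{(k)}$, which it is. If $f$ is constant (or more generally if $\hat m(f)=0$), then $f^{(k)}$ is constant and all these measures vanish, so the identity $m^{\lim}(f)=\hat m(f)=0$ holds trivially; otherwise $\hat m(f)>0$ and the squeeze argument above applies verbatim. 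This completes the proof.
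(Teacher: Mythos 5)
Your proof is correct and takes essentially the same approach as the paper: take $k$-th roots in Lemma~\ref{lem:m lower bound} and Lemma~\ref{lem:m upper bound} and squeeze. You are in fact slightly more careful than the paper's own write-up in recording the upper bound as $(2nk)^{1/k}\hat m(f)^{1-1/k}$ rather than $(2nk)^{1/k}\hat m(f)$, and in explicitly flagging the degenerate case $\hat m(f)=0$ and the fact that $n$ is the arity of $f$ (not $f^{(k)}$), none of which changes the conclusion.
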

\begin{proof}
  Assume $f$ is a function on $n$ variables. Recall that $m^{\lim}(f) := \lim_{k \rightarrow \infty}m(f^{(k)})^{1/k}$. By Lemmas \ref{lem:m lower bound} and $\ref{lem:m upper bound}$ we have that
  \[ \lim_{k \rightarrow \infty}(1/2)^{1/k}(\hat{m}(f)) \leq m^{\lim}(f) \leq \lim_{k \rightarrow \infty}(2n k)^{1/k} \hat{m}(f). \]
  Both the above limits approach $\hat{m}(f)$, thus the result follows.
\end{proof}

This concludes the informal discussion of the main result. We note that Theorem \ref{thm:m^lim} has been proven modulo all lemmas and propositions stated in this section. It remains to explicitly define the composition of assignment selectors, and AW selector-pairs, and provide the proofs which were left out of this section. All such proofs and definitions are in the following section.

\subsection{Filling in the details} \label{sec:details}
   In this section we prove the technical lemmas which were referred to in the previous section.

Our first step is to  prove Lemma \ref{assemblage comp} that sensitivity, certificate complexity, and fractional certificate complexity all induced by well behaved assemblages. The following proposition shows that certificates (respectively fractional certificates) compose and decompose nicely.
 
 \newtheorem*{myprop}{Proposition A}
   \begin{myprop}
\label{prop:wt comp}
Let $T$ be an indexed tree, and let $(\Omega_v:v \in Int(T))$ be an ensemble of boolean valued weight function families (i.e., hypergraphs).
Let $\Omega_T$ be the composition $\composition_T(\Omega_v:v \in Int(T))$.  
\begin{itemize}
\item If $(h_v:v \in Int(T))$ is a $T$-ensemble of weight functions such that each $h_v$ is a fractional hitting set for $\Omega_v$,
then $h_T=\composition_T(h_v:v \in Int(T))$ is a fractional hitting set for $\Omega_T$.  Furthermore, if all of the $h_v$
are boolean valued (so that $h_v$ is a hitting set), then so is $h_T$.
\item If $h$ is any fractional hitting set for $\Omega_T$, then there exists a $T$-ensemble of weight functions
$(h_v:v \in Int(T))$ such that each $h_v$ is a hitting set for $\Omega_v$, and $h \geq \composition_T(h_v:v \in Int(T))$ pointwise.  Furthermore, if $h$ is boolean valued,
then all of the $h_v$ can be chosen to be boolean valued. 

\end{itemize}
\end{myprop}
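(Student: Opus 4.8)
The plan is to prove both parts by induction on the height of $T$, using the recursive structure of composition of hypergraphs. Recall that $\composition_T(\Omega_v : v\in Int(T))$ is built by first composing the root hypergraph $\Omega_\Lambda$ over $I(\Lambda)$ with the already-composed hypergraphs $\composition_{T_u}(\Omega_v : v \in Int(T_u))$ living on $L(T_u)$ for each child $u$ of the root. So it suffices to understand one level of composition: given a hypergraph $\mc{G}$ on an index set $I$ and, for each $i\in I$, a hypergraph $\mc{H}_i$ on an index set $J_i$, the composition $\mc{G}(\mc{H}_i : i\in I)$ consists of all edges of the form $\composition(E; (F_i)_{i\in E})$ where $E\in\mc{G}$ and $F_i\in\mc{H}_i$; concretely, viewing edges as $\{0,1\}$-weight functions, $E$ picks out a subset of $I$ and then each chosen coordinate $i$ is "blown up" into a copy of some edge $F_i$ of $\mc{H}_i$. (I should double-check against the paper's exact convention for how a boolean weight function composes — the edge is $\{ij : i\in E,\ j\in F_i\}$.)

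For the first ("composition") bullet at one level: suppose $g$ is a fractional hitting set for $\mc{G}$ and each $h_i$ is a fractional hitting set for $\mc{H}_i$; I claim $h := g(h_i : i\in I)$, i.e. $h(ij) = g(i)h_i(j)$, is a fractional hitting set for $\mc{G}(\mc{H}_i)$. Indeed, for an edge corresponding to $E\in\mc{G}$ and $F_i\in\mc{H}_i$ ($i\in E$), its $h$-weight is $\sum_{i\in E}\sum_{j\in F_i} g(i)h_i(j) = \sum_{i\in E} g(i)\,h_i(F_i) \ge \sum_{i\in E} g(i)\cdot 1 = g(E) \ge 1$, using first that $h_i$ hits $F_i$ and then that $g$ hits $E$. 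Boolean-ness is obviously preserved since a product of $\{0,1\}$ values is $\{0,1\}$. Iterating this up the tree via the induction hypothesis gives the first bullet for general $T$; the base case is the null tree, where the composition is the empty hypergraph (or the trivial one) and the claim is vacuous/immediate.

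For the second ("decomposition") bullet at one level: let $h$ be a fractional hitting set for $\mc{G}(\mc{H}_i : i\in I)$. For each $i\in I$, let $h_i$ be the restriction of $h$ to the block $J_i$, i.e. $h_i(j) = h(ij)$. The key observation: for a fixed $i$ that lies in some edge $E\in\mc{G}$, consider any $F_i\in\mc{H}_i$; by choosing arbitrary edges $F_{i'}\in\mc{H}_{i'}$ for the other $i'\in E$ and forming the corresponding composed edge, the fact that $h$ hits it gives $\sum_{i'\in E} h_{i'}(F_{i'}) \ge 1$. This doesn't immediately say $h_i(F_i)\ge 1$, so a bit of care is needed — but we get to choose the $F_{i'}$ for $i'\neq i$ to \emph{minimize} their contribution, and then we can rescale. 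Concretely, for each $i'\neq i$ in $E$, pick $F_{i'}$ minimizing $h_{i'}(F_{i'})$, call that minimum $\mu_{i'} \ge 0$; then $h_i(F_i) \ge 1 - \sum_{i'\in E\setminus\{i\}}\mu_{i'}$. If some $\mu_{i'} = 0$, meaning $h_{i'}$ already fails to be a hitting set for $\mc{H}_{i'}$, we need to handle it; and in general we want to define $h_i$ to be a \emph{hitting set} for $\mc{H}_i$ while still being dominated by the right thing. I think the clean route is: set $g(i) := \min_{F\in\mc{H}_i} h_i(F)$ — actually better, renormalize. This is the delicate point and the main obstacle.

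Here is how I would resolve it. Define, for each $i$, $c_i := \min_{F\in\mc{H}_i} h_i(F)$ (the "hitting deficiency"), where $h_i(j) = h(ij)$; if $\mc{H}_i$ is empty set $c_i := 1$. Let $\tilde h_i := h_i / c_i$ if $c_i > 0$, so $\tilde h_i$ is genuinely a fractional hitting set for $\mc{H}_i$ (and boolean if $h$ was boolean and $c_i = 1$, which happens when $h_i$ is already a hitting set — for boolean $h$, $c_i\in\{0,1\}$, and we just need to rule out $c_i=0$). Then define $g(i) := c_i$ on $I$; I need to check $g$ is a fractional hitting set for $\mc{G}$, and that $h(ij) \ge g(i)\tilde h_i(j) = h_i(j)$, the latter being an equality. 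For $g$ hitting $\mc{G}$: given $E\in\mc{G}$, pick for each $i\in E$ an $F_i\in\mc{H}_i$ achieving $c_i = h_i(F_i)$; the composed edge has $h$-weight $\sum_{i\in E} h_i(F_i) = \sum_{i\in E} c_i = g(E) \ge 1$. (When some $\mc{H}_i$ is empty, the composition has no edge through $i$ at all, or $i\notin$ any composed edge, so we can just drop $i$ or set $g(i)$ arbitrarily $\ge 1$; I'll need to state the empty-index-set convention carefully — the paper does allow empty index sets and a null object.) For the boolean case: if $h$ is $\{0,1\}$-valued then each $h_i$ is $\{0,1\}$-valued, so $c_i\in\{0,1\}$; and $c_i$ can't be $0$ because the composed edge would then have $h$-weight $< 1$ through the minimizing choice, contradiction (when $E$ can be chosen through $i$; if no edge of $\mc{G}$ contains $i$ then $i$ is irrelevant). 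So $c_i = 1$, $g$ is boolean, and $\tilde h_i = h_i$ is boolean. Finally, recurse down the tree: having extracted the root weight function $g$ on $I(\Lambda)$ and the restrictions on each child's block, apply the induction hypothesis within each child subtree. Then $\composition_T$ of the assembled ensemble is pointwise $\le h$ by combining the one-level inequality (here equality) with the inductive pointwise inequalities and the monotonicity of the product operation defining $\composition$.

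The main obstacle I anticipate is exactly this renormalization bookkeeping — getting a clean statement that splits $h$ into a root weight function times per-child hitting sets while maintaining the pointwise domination and the boolean-preservation, and handling the degenerate cases (empty $\mc{H}_i$, and the "$i$ in no edge of $\mc{G}$" case) without cluttering the argument. Everything else is a routine induction on tree height. I'd also want to be careful that "hitting set" in the conclusion means an \emph{integral} hitting set even when $h$ is fractional — re-reading the statement, the $h_v$ in the second bullet are required only to be hitting sets (integral), not fractional, which is a stronger conclusion and is exactly what the $c_i$-renormalization trick does NOT obviously give in the fractional case. So I may need instead: in the fractional case, extract \emph{fractional} hitting sets $\tilde h_v$ — rereading once more, the statement says "each $h_v$ is a hitting set for $\Omega_v$" in both the fractional and boolean sub-cases, which would force integrality; I suspect this is either a typo for "fractional hitting set" in the fractional case, or it follows because we can always round down to a minimal (integral) hitting set contained in the support while only decreasing weights. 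I would adopt the latter reading and note the rounding step explicitly.
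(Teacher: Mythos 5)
Your high-level structure (prove the depth-$2$ case, then induct on tree height) matches the paper's, and your argument for the first bullet is essentially identical. For the second bullet, however, your choice of root weight function is $g(i) := c_i$ where $c_i = \min_{F\in\mc{H}_i} h_i(F)$, whereas the paper uses $h_r(v) := \min\bigl(1,\ \min_{w\in\Omega_v}\sum_l w(l)h_T(l)\bigr)$ — i.e., the same quantity but capped at $1$. That cap is not cosmetic, and its omission creates a real gap exactly where you already suspected one.

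Specifically, your claim that ``if $h$ is $\{0,1\}$-valued then $c_i\in\{0,1\}$'' is false. When $h$ is boolean, $h_i(F)$ counts support elements of $h_i$ in $F$, so $c_i$ is a nonnegative integer that can easily be $\geq 2$ (e.g., if every edge of $\mc{H}_i$ contains two support elements). Then $\tilde h_i = h_i/c_i$ is no longer boolean, and your $g(i)=c_i$ is no longer boolean either, so boolean-preservation fails. The paper's cap fixes both: $\min(1,\cdot)$ keeps the root weight in $\{0,1\}$ when $h$ is boolean, and for $v$ with $h_r(v)=1$ the child function is the un-rescaled restriction $h_T|_{L(T_v)}$, which stays boolean (for $h_r(v)=0$ one sets $h_v\equiv 1$). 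The cost is that showing the capped $g$ still fractionally hits $\mc{G}$ now requires a short computation: with $w_r$ boolean, $\sum_{v:w_r(v)=1}\min(1,c_v)\geq\min\bigl(1,\sum_{v:w_r(v)=1}c_v\bigr)\geq\min\bigl(1,\sum_l w_T(l)h_T(l)\bigr)\geq 1$. Your uncapped version makes that step trivial but sacrifices booleanness; the paper pays a small extra computation to keep it. Your remaining remarks (the statement's use of ``hitting set'' where ``fractional hitting set'' is meant in the fractional case, the degenerate-index conventions) are correct observations and match what the paper implicitly assumes.
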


\begin{proof}
  For both parts of the lemma we first 
prove the case where every leaf in $T$ is distance $2$ from the root, and then use induction to obtain
the general result.  
  
   Recall that $h_T$ assigns to leaf $l$ the product of the values that $h_v$ assigns to the edges along the unique path from the root to $l$. Thus, if all the $h_v$ are boolean valued, then $h_T$ is boolean valued. 
  We now show that, for all $w \in \Omega_T$,
  \begin{equation} \label{eq:treesum1}
  \sum_{l \in L(T)}h_T(l)w(l) \geq 1.
  \end{equation}
   Let $r$ be the root of $T$ and fix $w \in \Omega_T$. Since $\Omega_T = \Omega_r(\Omega_v : v \in C(r))$, it follows that for some choice of $w_r \in \Omega_r$ and $w_v \in \Omega_v$ (for each $v \in C(r)$)   we have
  \[w = w_r\left(w_v : v \in C(r)\right).\]
  Likewise, by assumption
  \[h_T = h_r(h_v : v \in C(r)).\]
  
 For $v \in C(r)$, let $T_v$ be the subtree whose root is $v$. It follows that
 \begin{align*}
 \sum_{l \in L(T)}h_T(l)w(l) & = \sum_{v \in C(r)}\sum_{l \in L(T_v)} h_r(v)h_v(l)w_r(v)w_v(l) \\
 & = \sum_{v \in C(r)} h_r(v)w_r(v)\left(\sum_{l \in L(T_v)}h_v(l)w_v(l)\right).
 \end{align*}
  
  Each $h_v$ is a fractional hitting set for $\Omega_v$, thus the inner sums are all at least 1. Therefore, the above is
  \[ \geq \sum_{v \in C(r)} h_r(v)w_r(v).\]
 This, however, is at least 1 because $h_r$ is a fractional hitting set for the hypergraph $\Omega_r$. This proves  (\ref{eq:treesum1}). 

To see the induction step, for view an arbitrary tree $T$ of uniform depth as a composition $T_r(T_v: v \in C(r))$ where $T_r$ is a rooted star. Then let $h_v = \composition_{T_v}(h_u : u \in Int(T_v))$, which by induction will be a fractional hitting set for $\Omega_{v}:=\composition_{T_v}(\Omega_u : u \in Int(T_v))$. We may then ignore the inner structure of the subtrees $T_v$, treating them as rooted stars, which reduces the problem to the depth 2 case already shown.
  
  For the next part we show that, given any $h_T$ which is a hitting set for $\Omega_T$, we can find weight functions $h_r$ and $\{h_v : v \in C(r)\}$ such that $h_T \geq h_r\left(h_v : v \in C(r)\right)$ pointwise and all the $h_v$, and $h_r$ are hitting sets for $\Omega_v$ and $\Omega_r$ respectively. 
  
  The construction is as follows: For each $v \in C(r)$ we define 
  \[h_r(v) := \min\left(1,\min_{w \in \Omega_v} \sum_{l \in L(T_v)} w(l)h_T(l)\right).\] 
  Note that $h_r(v)$ will be boolean valued if $w$ and $h_T$ are. Let $S = \{v \in C(r) : h_r(v) \neq 0\}$. For $v \in S$ and $l \in L(T_v)$ we define $h_v(l) := h_T(l)/h_r(v)$. For $v \notin S$, we set $h_v \equiv 1$. Again, each $h_v$ defined in this way will be boolean valued if $h_T$ is boolean valued.  It is clear by construction that $h_T \geq h_r(h_v : v \in Int(T))$.
  
  By construction, each $h_v$ is a hitting set for $\Omega_v$. This is trivial if $v \notin S$. Otherwise, if $v \in S$ and $w \in \Omega_v$, then
  \begin{align*}
  \sum_{l \in L(T_v)}w(l)h_v(l) & =  \sum_{l \in L(T_v)}\frac{w(l)h_T(l)}{h_r(v)} \\
 & \geq  \frac{\sum\limits_{l \in L(T_v)}w(l)h_T(l)}{\min_v h_r(v)}  
  \geq \frac{\sum\limits_{l \in L(T_v)}w(l)h_T(l)}{\min\limits_{w' \in \Omega_v} \sum\limits_{l \in L(T_v)}w'(l)h_T(l)} \\
  & \geq 1.
  \end{align*}
  
  It remains to show that $h_r$ is a hitting set for $\Omega_r$. Let $w_r \in \Omega_r$ be given. For each $v \in C(r)$, let $w_v \in \Omega_v$ be such that $\min(1,\sum\limits_{l \in L(T_v)} w_v(l) h_T(l)) = h_r(v)$. Define $w_T := w_r(w_v: v \in C(r))$. Note that $w_T \in \Omega_T$ because $\Omega_T = \Omega_r(\Omega_v : v \in C(r))$. In the following analysis recall that $w_r$ is boolean valued by assumption. 
  \begin{align*} \sum_{v \in C(r)} w_r(v)h_r(v) & = \sum_{v \in C(r)}w_r(v) \min\left(1,\sum_{l \in L(T_v)} w_v(l) h_T(l)\right) \\   
   & = \sum_{v \in C(r):  w_r(v) = 1} \min\left(1,\sum_{l \in L(T_v)} w_v(l) h_T(l)\right) \\
   & \geq \min\left(1, \sum_{v \in C(r): w_r(v) = 1}\sum_{l \in L(T_v)}w_v(l) h_T(l)\right) \\
   & \geq \min\left(1, \sum_{l \in L(T)} w_T(l) h_T(l)\right) \\
   & \geq 1.
  \end{align*}
The induction step works as follows. Given a tree $T$ of uniform depth $k$, view it as a composition $T_r(T_v: v \in C(r))$ where each $T_v$ has depth $k-1$. Then decompose $\Omega_T = \Omega_r(\Omega_{T_v} : v \in C(r))$, where each $\Omega_{T_v}$ is the $T_v$-composition of $(\Omega_u : u \in Int(T_v))$. Apply the height $2$ case to get $h_r$ and $h_{T_v}$ with the desired properties. Then continue this process on each of the subtrees $T_v$ until $h_T$ has been fully decomposed. 
 
\end{proof}
   We are now ready to prove Lemma \ref{assemblage comp}, which we repeat for convenience.
\assemblagecomp*
\begin{proof}
  We prove each part separately. To prove that the minblock assemblage is well behaved, we only prove the case where $T$ is an indexed tree of height 2, that is where $F = f_r\left(f_v : v \in C(r)\right)$ is a composition of boolean functions and $T = T_r(T_v : v \in C(r))$. The general case will then follow by induction (we omit this part as it follows similarly to the induction in Proposition A).
  
  \textbf{The assemblage $\partial\mc{B}$ is well-behaved:}
 
   We will prove the stronger statement:

 \newtheorem*{myclaim}{Claim B}
\begin{myclaim}
\label{min-block claim}
For each input $x$  to $F$,
   \[ \partial\mc{B}_{x}(F) = \composition_T\left(\partial\mc{B}_{x_v}(f_v)\right).\]
  \end{myclaim}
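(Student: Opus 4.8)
Following the pattern of Proposition A, I would first prove the identity when $T$ has height $2$, i.e.\ when $F = f_r(f_v : v \in C(r))$ is a single composition, and then obtain the general uniform-depth case by induction on the depth: writing $T = T_r(T_v : v \in C(r))$ with each $T_v$ of depth one less and $F_v := \composition_{T_v}(f_u : u \in Int(T_v))$, the inductive hypothesis gives $\partial\mc{B}_{x_v}(F_v) = \composition_{T_v}(\partial\mc{B}_{x_u}(f_u))$, the height-$2$ case applied to $f_r(F_v : v \in C(r))$ gives $\partial\mc{B}_x(F)$ as a composition of $\partial\mc{B}_{x_r}(f_r)$ with the $\partial\mc{B}_{x_v}(F_v)$, and associativity of $\composition$ for hypergraphs finishes. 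So the content is the height-$2$ case, which I describe now. (I would also note at the outset the normalizing assumption that no $f_v$ is constant: otherwise $\partial\mc{B}_{x_v}(f_v)$ is empty, the right-hand side collapses to the empty hypergraph, and the equality only holds after deleting such trivial coordinates, i.e.\ after replacing $f_r$ by its restriction to the non-constant inputs.)

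\textbf{The block identity.} Fix an input $x$ to $F$; for $v \in C(r)$ write $x_v$ for the restriction of $x$ to $L(T_v)$, and let $x_r$ be the assignment to $C(r)$ with $x_r(v) = f_v(x_v)$ (this is precisely the evaluation labeling of $f_T$ on $x$, restricted to the children of the root). For $B \subseteq L(T)$ and $v \in C(r)$ let $B_v := \{\ell : (v,\ell) \in B\}$ be the slice of $B$ at $v$. Flipping $B$ changes the input of the $v$-th subfunction from $x_v$ to $x_v \oplus B_v$, so $F(x \oplus B) = f_r(x_r \oplus \phi(B))$, where
\[ \phi(B) := \{\, v \in C(r) : B_v \text{ is a block of } f_v \text{ at } x_v \,\}. \]
Hence $B$ is a block of $F$ at $x$ if and only if $\phi(B)$ is a block of $f_r$ at $x_r$. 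This equivalence drives both inclusions.

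\textbf{Both inclusions.} For ``$\subseteq$'', let $B$ be a min-block of $F$ at $x$, so $\phi(B)$ is a block of $f_r$ at $x_r$; I would run three shrinking arguments, each contradicting minimality of $B$ via the block identity: (i) if some nonempty slice $B_{v_0}$ is not a block of $f_{v_0}$ at $x_{v_0}$, delete that slice --- $\phi$ is unchanged, so the smaller set is still a block; hence every nonempty slice is a block and $\phi(B) = S := \{v : B_v \neq \emptyset\}$; (ii) if some $B_v$ with $v \in S$ is a non-minimal block, replace it by a proper sub-block --- $\phi$ is still $S$; hence each $B_v$ $(v \in S)$ is a min-block of $f_v$ at $x_v$; (iii) if $S$ is a non-minimal block of $f_r$, pick a proper sub-block $S' \subsetneq S$ and restrict $B$ to $\bigcup_{v \in S'} \{v\} \times B_v$, whose $\phi$-value is $S'$; hence $S$ is a min-block of $f_r$ at $x_r$. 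Since $B_v = \emptyset$ for $v \notin S$, this exhibits $B = \bigcup_{v\in S}\{v\}\times B_v$ as a boolean weight function in $\composition_T(\partial\mc{B}_{x_v}(f_v))$. For ``$\supseteq$'', take $B = \bigcup_{v \in S}\{v\}\times B_v$ with $S$ a min-block of $f_r$ at $x_r$ and each $B_v$ a min-block of $f_v$ at $x_v$: each such $B_v$ is a block and each empty slice is not, so $\phi(B) = S$ and $B$ is a block of $F$. If $B' \subsetneq B$ then $B'_v = \emptyset = B_v$ for $v \notin S$ and $B'_{v_0} \subsetneq B_{v_0}$ for some $v_0 \in S$, whence $\phi(B') \subseteq S \setminus \{v_0\} \subsetneq S$; by minimality of $S$ this is not a block of $f_r$ at $x_r$, so $B'$ is not a block of $F$. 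Thus $B$ is a min-block, completing the height-$2$ case.

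\textbf{Main obstacle.} The real care is in the three shrinking arguments of the ``$\subseteq$'' direction --- in particular keeping straight the three distinct statuses a slice can have (empty; a block; a \emph{minimal} block) and verifying in each step that $\phi$ behaves as claimed --- together with the bookkeeping between the set notation for $B$ and the hypergraph-composition notation on the right-hand side. The only genuinely delicate point beyond routine checking is the degenerate case of constant subfunctions flagged above, which I would dispose of by the normalization convention stated at the start of the proof.
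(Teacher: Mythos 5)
Your proof is correct and follows essentially the same approach as the paper: reduce to the height-$2$ case, slice a min-block $B$ of $F$ into pieces $B_v$ along the subtrees, and use shrinking arguments to show that each nonempty slice and the induced set at the root must be min-blocks, together with a converse argument for the other inclusion. Your explicit $\phi$-identity ($F(x\oplus B)=f_r(x_r\oplus\phi(B))$) makes the mechanism behind the shrinking steps cleaner than the paper's more informal treatment, and you are also right to flag the degenerate case of constant $f_v$, which the paper leaves implicit but which is harmless for its application to iterated composition of a fixed non-constant $f$.
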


  
   Let $x$ be an arbitrary input for $F$. Let $(x_v : v \in Int(T))$ be the assignment ensemble induced by evaluating the circuit for $F$ on assignment $x$. Let $B_T$ be a min-block for $F$ at $x$. $B_T$ induces  a boolean valued weight function over $C(r)$ in the following natural way: $B_r(v) := 1$ if and only if there exists a leaf $l \in L(T_v)$ such that $B_T(l) = 1$. Likewise, $B_T$ induces weight functions $B_v$ on the leaves of the subtrees rooted at $v$ for $v \in C(r)$ in the same way, that is $B_v(l) := 1$ if and only if $B_T(l) = 1$. In this way, $B_T = B_r(B_v : v \in C(r))$. It remains to show that each $B_v$ which is not identically 0 is a min block for $f_v$ at $x_v$ and that $B_r$ is a min block for $f_r$ at $x_r$. 
  
  First we show that, for each $v \in C(r)$, $B_v$ is a block at $x_v$. Suppose for contradiction that $f_v(x_v \oplus B_v) = f_v(x_v)$ for some $v \in C(r)$ where $B_v$ is not empty. Then changing $B_T$ to be $0$ on the leaves of $v$ will create a strictly smaller block $B_T'$. Similarly, if $B_v$ is a block but not a min-block, then again $B_T$ may be modified to be strictly smaller. Thus each $B_v$ is a min block for $f_v$ for $v \in C(r)$. 
  
  Now we show that $B_r$ is a min block. Recall that we defined $B_r(v) =1$ if and only if $B_v$ is not identically $0$. Furthermore, we just showed that if $B_v$ is not identically $0$, then it is a min block. Therefore, it follows that $B_r(v) = 1$ if and only if $f_v(x_v \oplus B_v) \neq f_v(x_v)$. This implies that $F(x \oplus B_T) = f_r(x_r \oplus B_r)$. Thus, since $B_T$ is a block for $F$, $B_r$ must be a block for $f_r$ at $x_r$. However $B_r$ must also be a min block, otherwise, by replacing it by a strictly smaller block $B_r'$, the block $B_r'(B_v : v \in C(r))$ will be strictly smaller than $B_T$. 
  
  We have shown that each min-block $B_T$ may be decomposed as a composition of min-blocks. By a similar argument, if $B_T$ is a composition of min-blocks, then it is a min-block for $F$. This shows that $\partial\mc{B}$ is well-behaved.
  
  \textbf{The assemblages $\wit{}, \witstar{}$ are well-behaved:} 
 
   Let $T$ be an indexed tree and let $(f_v:v \in Int(T))$ be a boolean function ensemble with composition $F$. Let $\mc{B}_T$ be the set of min blocks for $F$ at input $x$. For each $v \in Int(T)$, let $\mc{B}_{v}$ be the set of min blocks for $f_v$ at input $x_v$. We just proved that $\mc{B}_T = \composition_T(\mc{B}_v:v \in Int(T))$. By the second part of Proposition A, for each $h_T$ which is a fractional hitting set for the hygergraph $\mc{B}_T$, there exists a composed fractional hitting set $h_T' = \composition_T(h_v:v \in Int(T))$ such that $h_T' \leq h_T$ (pointwise) and each $h_v$ is a fractional hitting set for $\mc{B}_v$. Thus, each minimal hitting set may be decomposed as a composition of hitting sets. This proves property (2). For property (1), assume that $h_v$ is a fractional hitting set for $\mc{B}_v$ for each $v \in Int(T)$. It follows by Proposition A that $\composition_T(h_v: v \in Int(T))$ is a fractional hitting set for $\mc{B}_T$. This shows that $\witstar{}$ is well-behaved. The same argument works for boolean valued hitting sets, thus $\wit{}$ is well-behaved. 
  
  \textbf{The assemblage $\Psi$ is well-behaved:} If $x$ is an input to $F$, then $\Psi_x(F)$ consists of a single set (the set of sensitive indices) and this set will precisely be the composition of the sets $\Psi_{x_v}(f_v)$.

\end{proof}


\uniformcomp*

\begin{proof}

The first part follows from the second part of Proposition \ref{T-labeling}. For the second part of the present lemma, 
 we first prove the case of $k=2$, the full statement will follow by induction.   Let $F = f_1 \composedwith f_2$ where each $f_i$ is a function on rooted star $T_i$, and let $T = T_1\circ T_2$. Let $\vec{\alpha} = (\alpha^0,\alpha^1)$ be an $m$-optimal selector for the function $F$. Let $(b^c(v) : v \in Int(T))$ be the boolean $T$-labeling induced by evaluating $F(\alpha^c)$  and let $(b_v^c : v \in Int(T))$ be the corresponding assignment ensemble.
  
  We choose $\vec{\beta}_2$ be any $m$-optimal selector for the function $f_2$, and set $\vec{\beta}_1:=(b_r^0,b_r^1)$. Our goal is to prove that $\vec{\beta}= \vec{\beta}_1 \composedwith \vec{\beta}_2$ is also an $m$-optimal selector for $F$.
  
  Writing $\vec{\beta}$ as  $(\beta^0,\beta^1)$, we prove that $m_{\beta^0}(F) \geq m_{\alpha^0}(F)$; 
the analogous result for $\beta^1$ follows similarly. The construction of $\beta^0$ also induces an assignment ensemble which we denote as $(\beta_v^0 : v \in Int(T))$.  Fix a minimum size weight function $w \in \mc{A}_{\beta^0}(F)$. We show how to modify $w$ to obtain a weight function $w' \in \mc{A}_{\alpha^0}(F)$ of size at most $|w|$. This will prove the lemma, since then $m_{\alpha^0}(F)\leq |w'|\leq |w| = m_{\beta^0}(F)$.

Since $\mc{A}$ is well-behaved and $w$ is minimal, we know that $w = w_r\left(w_v : v \in C(r)\right)$ for some choices of $w_r \in \mc{A}_{\beta_r^0}(f_1)$ and $w_v \in \mc{A}_{\beta^0_v}(f_2)$. Note that for each $v$, $f_2(\alpha^0_v) = f_2(\beta^0_v)$. It follows that $m_{\alpha^0_v}(f_2) \leq m_{\beta^0 _v}(f_2)$ (because $\vec{\beta}_2$ is an $m$-optimal selector for $f_2$). Hence, we can find $\rho_v \in \mc{A}_{\alpha^0_v}(f_2)$ such that $|\rho_v| \leq |w_v|$. Having found the functions $\rho_v$, we set $w' := w_r\left(\rho_v: v \in C(r)\right)$ which will be an element in the assemblage $\mc{A}_{\alpha_0}(F)$. Moreover,
\begin{align*}
|w'| = \sum_{v \in C(r)}w_r(v) |\rho_v|
\leq \sum_{v \in C(r)}w_r(v) |w_v| = |w|.
\end{align*}

 To complete the proof for general $k$ we view a composed function $F = f_1 \circ \cdots \circ f_k$ as $f_1 \circ F_{k-1}$, where $F_{k-1}= f_2 \circ \cdots \circ f_k$. By induction on $k$, there is an $m$-optimal selector for $F_{k-1}$ of the form $\vec{\alpha}=\vec{\alpha}_2 \circ \cdots \circ \vec{\alpha}_k$. We may then repeat the proof of the case of height 2, where we view $f_2$ as the function $F_{k-1}$ and choose $\vec{\beta}_2 := \vec{\alpha}$. Then there is an $m$-optimal selector for $F$ which is of the form $\vec{\beta}_1 \circ \vec{\beta}_2 = \vec{\beta}_1 \circ \vec{\alpha}_2 \circ \cdots \vec{\alpha}_k$.

\end{proof}

We now turn to the proof of Lemma \ref{lem:hitting decomp}.  Recall that an AW-selector pair over index set $I$ is
a pair $(\vec{\alpha},\vec{w})$ consisting of an 
assignment selector $\vec{\alpha}$ over $I$ and a weight function selector $\vec{w}$ over $I$.  
We need to define the {\em uniform composition of AW-selector pairs}. 
Let $\vec{\alpha_1}, \cdots, \vec{\alpha_k}$ be assignment selectors and $\vec{w_1}, \cdots, \vec{w_k}$ be weight function selectors over $L(T_1), \cdots, L(T_k)$ respectively. We define $(\vec{\alpha_1}, \vec{w_1}) \circ \cdots \circ (\vec{\alpha_k}, \vec{w_k})$ to be the pair $(\vec{\alpha},\vec{w})$ where $\vec{\alpha}$ is the assignment selector $\vec{\alpha_1} \circ \cdots \circ \vec{\alpha_k}$, and $\vec{w} = (w^0,w^1)$ is a weight function selector defined in the following manner.  Each of $w^0$
and $w^1$ are defined, respectively, as compositions of weight function $T$-ensembles $w^0_T$ and $w^1_T$.  
To construct these ensembles,  first
recall from Section \ref{Boolean labelings} that each component $\alpha^c$ (for $c \in \{0,1\}$)
of the composition $\vec{\alpha}$ is naturally associated to a
 boolean $T$-labeling $b^c_T=(b^c(v) : v \in Int(T))$.  We use the labeling $b^c_T$ to define  the ensemble 
$w_T^c=(w^c_v : v \in Int(T))$ where for  node $v$ is at level $m$ (treating the root as level 1), 
the function $w_v^c$ is a copy of either
$w_m^0$ or $w_m^1$  depending on whether $b^c(v)=0$ or 1.

\hittingdecomp*

\begin{proof}
In the proof of both statements let $T = T_1 \circ \cdots \circ T_k$ be the indexed tree for the function $F$. Also let $(b(v) : v \in T)$ be the boolean $T$-labeling induced by $\alpha^0$, and let $(\alpha^0_v : v \in Int(T))$ be the corresponding assignment ensemble.

 For the first part, it follows from Lemma \ref{lem:uniform comp} that $\vec{\alpha}$ is $F$-compatible.  Recall the construction of $\vec{w} = (w^0, w^1)$; in particular the weight function $w^0$ is the composition of the weight function ensemble $(w_v: v \in Int(T))$ where, if $v$ is at depth $i$ ($i=1$ being the root), then $w_v$ is a copy of $w_i^{b(v)}$. Also, the assignment $\alpha^0_{v}$ is the assignment $\alpha_i^{b(v)}$. Since $(\vec{\alpha}_i, \vec{w}_i)$ is a compatible pair for each $i$, it follows that $w_v \in \mc{A}_{\alpha^0_v}(f_i)$ for each $v$. Because $\mc{A}$ is well-behaved, we have $w^0 = \composition_T(w_v: v \in Int(T)) \in \mc{A}_{\alpha^0}(F)$. 
   The exact same proof shows that $w^1 \in \mc{A}_{\alpha^1}(f^{(k)})$. This proves that $(\vec{\alpha},\vec{w})$ is an $F$-compatible pair.

   Now we prove the second statement. Again we prove the case $k=2$, letting the general case follow by induction. Let $\vec{\alpha} = (\alpha^0,\alpha^1)$. To construct $\vec{w}$, we will choose $\vec{w}_2$ to be any $m$-optimal weight function selector for $f_2$ at $\vec{\alpha}_2$ and construct $\vec{w}_1 = (w_1^0, w_1^1)$.
  
  We first construct $w_1^0$. Let $w^*$ be any minimum sized weight function in $\mc{A}_{\alpha^0}(f)$. Because $\mc{A}$ is well-behaved and $w^*$ is minimal, we may decompose $w^* = w^*_r(w^*_v: v \in C(r))$ where $w^*_r \in \mc{A}_{\alpha^0_r}(f_1)$ and $w^*_v \in \mc{A}_{\alpha^0_v}(f_2)$. We will set $w_1^0 := w^*_r$ and check that it satisfies the properties we need. Consider $w':= w^*_r(w_v : v \in C(r))$ where $w_v:= w_2^{b(v)}$. Note that $w' \in \mc{A}_{\alpha^0}(f)$ because $\mc{A}$ is well-behaved and moreover it has size
  \[
   |w'| = \sum_{v \in C(r)} w^*_r(v)|w_2^{b(v)}| \leq \sum_{v \in C(r)}w^*_r(v)|w^*_v| = |w^*|. \]
  Here the inequality follows from the fact that $w_2^0$ and $w_2^1$ have minimum sizes in the families $\mc{A}_{\alpha_2^0}(f_2)$ and $\mc{A}_{\alpha_2^1}(f_2)$ respectively. 
  
  In the same manner construct $w_1^1$. Finally, set $(\vec{\alpha},\vec{w}):=(\vec{\alpha}_1,\vec{w}_1) \composedwith (\vec{\alpha}_2,\vec{w}_2)$ where $\vec{w} = (w^0,w^1)$. Then by construction, $w^0 = w'$, and we have shown $|w^0| \leq |w^*|$. Thus, $w^0$ must have minimum size in the family $\mc{A}_{\alpha^0}(f)$. By the same argument, the function $w^1$ will have minimum size in the family $\mc{A}_{\alpha^1}(f)$. Therefore, $\vec{w}$ is an $m$-optimal selector for $f$ at  $\vec{\alpha}$ as desired.

To see the induction step, view $F = f_1 \circ \cdots \circ f_k$ as a composition of two functions $f_1 \circ F_{k-1}$ where $F_{k-1} = f_2 \circ \cdots \circ f_k$. We now use the same construction, only noting that by induction on k we may choose $\vec{w}_2$ to a composition of AW selector-pairs.
\end{proof}

We now show that multiplication of profile matrices encapsulates crucial information about AW selector-pair composition. The following definitions will be helpful. 

\begin{itemize}
\item {\em Profile vector of a weight function $w$ on assignment $x$}. 
The {\em profile} of $(x,w)$ is the pair $p_x(w):=(p_0,p_1)$  where
$p_0:= \sum\limits_{i:x_i=0} w_i$ and $p_1:=\sum\limits_{i:x_i=1}w_i$. 
\item {\em Profile vector family $P_x(f)$ for the assignment $x$ and assemblage $\mc{A}$.} 
  This is the set of all profile vectors $p_x(w)$ where $w$ ranges over weight functions in $\mc{A}_x(f)$.
\end{itemize}

 For any profile matrix $M:=M_{\vec{\alpha},\vec{w}}$, the first row of $M$ is the profile vector $p_{\alpha^0}(w^0)$ and the second row is the profile vector $p_{\alpha^1}(w^1)$.

\matrixmult*

  \begin{proof}
    We prove the special case where $k=2$, the general case will then follow by induction. Let $(\vec{\alpha},\vec{w}) = (\vec{\alpha}_1,\vec{w}_1) \circ (\vec{\alpha}_2,\vec{w}_2)$ be the composed pair, where $\vec{w} = (w^0, w^1)$ and $\vec{\alpha} = (\alpha^0, \alpha^1)$, and let $T$ be the corresponding indexed tree. Let
    \[ M_{\vec{\alpha}_1,\vec{w}_1} = \begin{bmatrix} a_{00} & a_{01} \\ a_{10} & a_{11} \end{bmatrix} \ \ \ \ \ \ \ \  M_{\vec{\alpha}_2,\vec{w}_2} = \begin{bmatrix} b_{00} & b_{01} \\ b_{10} & b_{11} \end{bmatrix} \ \ \ \ \ \ \ \  M_{\vec{\alpha},\vec{w}} = \begin{bmatrix} c_{00} & c_{01} \\ c_{10} & c_{11} \end{bmatrix}.\]
    
    We check that $c_{00} = a_{00}b_{00} + a_{01}b_{10}$, the other entries will follow by similar arguments. We check this by computing the profile vector for the input $\alpha^0$. Let $(b(v) : v \in T)$ be the boolean $T$-labeling induced by the construction of $\alpha^0$ and let $(b_v : v \in Int(T))$ be the corresponding assignment ensemble. By construction, $b_r = \alpha_1^0$ and $b_v = \alpha_2^{b(v)}$ for $v \in C(r)$. Recall that $w^0:= w_r(w_v : v \in C(r))$, where $w_r = w_1^0$ and $w_v:= w_2^{b(v)}$ for $v \in Int(T)$.
    
    The profile vector $p_{\alpha^0}(w^0):= [c_{00}, c_{01}]$. In particular, $c_{00} = \sum_{l \in L(T) \ : \  \alpha^0(l) = 0}w^0(l)$. 
    Thus, we have
    \begin{align*}
     c_{00} & = \sum\limits_{\substack{l \in L(T) \\ \alpha^0(l) = 0}}w^0(l) \\
     & = \sum\limits_{v \in C(r)} \sum\limits_{\substack{l \in C(v) \\ \alpha^0(l) = 0}}w_r(v)w_v(l) \\
     & = \sum\limits_{\substack{v \in C(r) \\ b_r(v) = 0}} w_r(v)\sum\limits_{\substack{l \in C(v) \\ b_v(l) = 0}} w_2^0(l) + 
     \sum\limits_{\substack{v \in C(r) \\ b_r(v) = 1}} w_r(v)\sum\limits_{\substack{l \in C(v) \\ b_v(l) = 0}} w_2^1(l)  \\
     & = \sum\limits_{\substack{v \in C(r) \\ \alpha_1^0(v) = 0}} w_r(v)\sum\limits_{\substack{l \in C(v) \\ \alpha_2^0(l) = 0}} w_2^0(l) + 
     \sum\limits_{\substack{v \in C(r) \\ \alpha_1^0(v) = 1}} w_r(v)\sum\limits_{\substack{l \in C(v) \\ \alpha_2^1(l) = 0}} w_2^1(l)  \\
     & = \sum\limits_{\substack{v \in C(r) \\ \alpha_1^0(v) = 0}} w_1^0(v)b_{00} + 
     \sum\limits_{\substack{v \in C(r) \\ \alpha_1^0(v) = 1}} w_1^0(v)b_{10} \\ 
         & = a_{00}b_{00} + a_{01}b_{10}.    
    \end{align*}
  \end{proof}  

We now present the two main lemmas which imply our main result. The proofs reduce the statements to two claims regarding the largest eigenvalue of the product of certain matrices which we delay for the next section.

\lowerbound*
 \begin{proof}
 Let $\lambda := \hat{m}(f)$. Let $\vec{\beta}$ be an assignment selector for which $\hat{m}_{\vec{\beta}}(f) = \lambda$. Let $\vec{\alpha} := \vec{\beta}^{(k)}$. By Corollary \ref{cor:m(f) via matrices} and Fact \ref{fact:matrix rho}, the claim will follow from showing that, for any sequence of profile matrices $M_i \in \mc{M}_{\vec{\beta}}(f^{(k)})$, we have
  \begin{equation} \label{eq:meq}
    \rho(M_1 M_2 \cdots M_k) \geq \lambda^k.
  \end{equation} 
  
   Let $\{M_i\}_{i=1}^k$ be any such sequence of matrices and let $M$ be their product. Because of our choice of $\vec{\beta}$, we know that $\rho(M_i) \geq \lambda$ for each $i$. In general, this is not enough to guarantee that $\rho(M) \geq \lambda^k$. However, these matrices contain additional structure which will allow us to make such a conclusion.
   
   Recall that each profile matrix $M_i \in \mc{M}_{\vec{\beta}}(f)$ corresponds to a weight function selector $\vec{w}_i$ which is $\vec{\beta}$-compatible. For each $i,j \in [k]$ let $M_{ij}$ denote the matrix who's first row is the first row of $M_i$ (i.e., the profile vector $p_{\beta^0}(w_i^0)$), and who's second row is the second row of $M_j$ (i.e., the profile vector $p_{\beta^1}(w_j^1)$). Then $M_{ij}$ is precisely the profile matrix $M_{\vec{\beta},\vec{w}_{ij}}$ where $\vec{w}_{ij} = (w_i^0,w_j^1)$. In particular, each $M_{ij} \in \mc{M}_{\vec{\beta}}(f)$ and $\rho(M_{ij}) \geq \lambda$ by the definition of $\lambda$. Noting this property, we apply Lemma \ref{lemma_prod_supermult} (see the following section \ref{sec:matrix}) and conclude that 
  \[ \rho(M) \geq \lambda^k.\]
  
 \end{proof}

\upperbound*
\begin{proof}
   Let $\lambda := \hat{m}(f)$. Take $\vec{\alpha}$ which is an $m$-optimal selector for $f^{(k)}$. By lemma \ref{lem:uniform comp}, we may assume that $\vec{\alpha} = \vec{\alpha_1} \circ \vec{\alpha_2} \circ \cdots \circ \vec{\alpha_k}$. By Corollary \ref{cor:m(f) via matrices}, the claim will follow by exhibiting matrices $M_i \in \mc{M}_{\alpha_i}(f)$ such that
  \[||M_1 M_2 \cdots M_k||_\infty \leq nk\lambda^{k-1}.\]
  
     Let $\mc{U}_i$ be the profile vector family  $P_{\alpha_{i}^0}(f)$, 
and let $\mc{V}_i = P_{\alpha_{i}^1}(f)$. When considering the possible choices of $M_i \in \mc{M}_{\vec{\alpha}_i}(f)$, the set $\mc{U}_i$ is the set of possible first rows of $M_i$. Likewise, $\mc{V}_i$ is the set of possible second rows of $M_i$. One may hope to use the definition of $\lambda$ and choose each $M_i$ such that $\rho(M_i) \leq \lambda$. This in general though is not enough to bound all entries in the product $M_1 \cdots M_k$. Once again we need to use additional structure of these matrix families. Note crucially that, for each $i,j \in [k]$, there exists $u \in \mc{U}_i$ and $v \in \mc{V}_j$ such that \[\rho\left(\left[ \begin{array}{cc}
      u \\ v
    \end{array} \right]\right) \leq \lambda.\] 
    This follows by the definition of $\lambda$ and the fact that the set of profile matrices 
    \[\mc{M}_{\vec{\alpha}_{ij}}(f) = \{ \left[ \begin{array}{cc}
      u \\ v
    \end{array} \right] \mid u \in \mc{U}_i, v\in \mc{V}_j \},\]
    where $\vec{\alpha}_{ij} := (\alpha_{i}^0, \alpha_{j}^1)$.
     By Corollary \ref{cor:matixinfty} (see section \ref{sec:matrix}), there exists matrices $M_1, M_2, \cdots, M_k$, where $M_i \in \mc{M}_{\vec{\alpha}_i}(f)$ for each $i$, such that $||M_1M_2\cdots M_k||_\infty \leq nk\lambda^{k-1}$. 
    
   \end{proof}

\subsection{Facts about non-negative matrices} \label{sec:matrix}

In this subsection, we prove Lemmas \ref{lemma_prod_supermult} and \ref{lemma_prod_submult} which
were used in the previous subsection.
We will need the following well-known facts about $2\times 2$ non-negative matrices that follow from Perron-Frobenius theory (for omitted proofs see, e.g., \cite[Chapter 8]{meyer}).

\begin{fact}
\label{fact_non_neg}
Fix $A\in \real^{2\times 2}_{\geq 0}$. We have the following:
\begin{enumerate}
  \item There exists a non-zero $z\geq 0$ s.t. $Az = \rho(A)z$.
	\item For $\lambda\in\real$, the following are equivalent: (1) $\rho(A) \geq \lambda$, (2) $\exists x \geq 0$ such that $x\neq 0$ and $Ax\geq \lambda x$, and (3)  For every $\varepsilon>0$, there exists an $x > 0$ such that $Ax \geq (\lambda-\varepsilon)x$.
	\item For $\lambda \in \real$, we have $\rho(A)\leq \lambda$ iff for every $\varepsilon>0$, there is an $x > 0$ s.t. $Ax\leq (\lambda + \varepsilon)x$.
	\item $\lVert A\rVert_{\infty}\geq \rho(A)/2$.
	\item $\lim_{k\rightarrow \infty} \lVert A^k\rVert_{\infty}^{1/k} = \rho(A)$.
\end{enumerate}
\end{fact}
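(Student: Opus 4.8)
The plan is to prove Fact~\ref{fact_non_neg} directly, exploiting that a $2\times2$ nonnegative matrix $A$ with entries $a,b,c,d\geq 0$ can be treated almost explicitly. Its characteristic polynomial $\lambda^2-(a+d)\lambda+(ad-bc)$ has discriminant $(a-d)^2+4bc\geq 0$, so both eigenvalues are real; the larger, $\mu_+=\tfrac12\bigl((a+d)+\sqrt{(a-d)^2+4bc}\bigr)$, satisfies $\mu_+\geq\tfrac12(a+d+|a-d|)=\max(a,d)\geq 0$ and dominates $|\mu_-|$ (since $\mu_++\mu_-=a+d\geq 0$), so $\rho(A)=\mu_+$. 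For item~(1): when $b>0$ the vector $(b,\mu_+-a)^{\mathsf T}$ is a $\mu_+$-eigenvector with both entries nonnegative (as $\mu_+-a\geq\tfrac12((d-a)+|a-d|)\geq 0$); when $c>0$ use $(\mu_+-d,c)^{\mathsf T}$ symmetrically; and when $b=c=0$, $A$ is diagonal so $e_1$ or $e_2$ works.

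Item~(4) is then immediate, since $\rho(A)\leq\inftynorm{A}$ for any operator norm, so a fortiori $\inftynorm{A}\geq\rho(A)/2$. For item~(5), $\inftynorm{A^k}\geq\rho(A^k)=\rho(A)^k$ gives $\liminf_k\inftynorm{A^k}^{1/k}\geq\rho(A)$; for the matching upper bound I would put $A$ into Jordan form over $\complex$: if $A$ is diagonalizable then $\inftynorm{A^k}\leq C\rho(A)^k$, while if $A$ is a single $2\times2$ Jordan block with eigenvalue $\mu$ the entries of $A^k$ are $O(k\rho(A)^k)$; either way $\inftynorm{A^k}^{1/k}\to\rho(A)$. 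With (1) and (5) available, item~(2) is routine for the implications not involving a strictly positive vector: $(1)\Rightarrow(2)$ by taking $x$ to be the eigenvector from (1); $(2)\Rightarrow(1)$ by noting $\rho(A)\geq 0\geq\lambda$ when $\lambda\leq 0$, and otherwise iterating the coordinatewise inequality $Ax\geq\lambda x$ to get $A^kx\geq\lambda^k x$, hence $\inftynorm{A^k}^{1/k}\geq\lambda$ and $\rho(A)\geq\lambda$ by (5); and $(3)\Rightarrow(1)$ since a witness $x>0$ gives $\rho(A)\geq\min_i(Ax)_i/x_i\geq\lambda-\varepsilon$ for every $\varepsilon$.

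The remaining statements --- the implication $(1)\Rightarrow(3)$ inside item~(2) and the ``$\rho(A)\leq\lambda\Rightarrow\forall\varepsilon\,\exists x>0:Ax\leq(\lambda+\varepsilon)x$'' direction of item~(3) --- I would handle by perturbation: for $\delta>0$ the matrix $A+\delta J$ ($J$ the all-ones matrix) is strictly positive, hence has a strictly positive eigenvector $x_\delta$ (normalized so $\mathbf 1^{\mathsf T}x_\delta=1$) with eigenvalue $\rho(A+\delta J)$, and $\rho(A+\delta J)\to\rho(A)$ as $\delta\to 0$ by continuity of the spectral radius. Since $Ax_\delta=\rho(A+\delta J)x_\delta-\delta\mathbf 1$, this gives the upper-bound direction of item~(3) as soon as $\rho(A+\delta J)\leq\lambda+\varepsilon$; the converse in item~(3) is again $\rho(A)\leq\max_i(Ax)_i/x_i\leq\lambda+\varepsilon$ for all $\varepsilon$.

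I expect the one genuinely delicate point to be exactly these strictly-positive-vector assertions. On the ``$\leq$'' side (item~(3)) dropping the $\delta\mathbf 1$ term is harmless and the perturbation goes through cleanly; but on the ``$\geq$'' side the naive perturbation only yields $Ax_\delta\geq\lambda x_\delta-\delta\mathbf 1$, and to absorb $\delta\mathbf 1$ into $\varepsilon x_\delta$ one needs $x_\delta$ bounded away from $0$ relative to $\delta$, which is where irreducibility of $A$ (equivalently $b,c>0$) enters --- so that implication should be stated and proved under the assumption that $A$ has a strictly positive eigenvector (in particular whenever $A>0$, and in general after reducing to that case). Pinning down the dependence $\delta=\delta(\varepsilon)$ and isolating precisely where irreducibility is needed is the step that will require the most care.
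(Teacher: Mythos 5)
The paper does not actually prove Fact~\ref{fact_non_neg}; it simply cites \cite[Chapter~8]{meyer}. So your self-contained proof via the explicit $2\times 2$ characteristic polynomial is the right sort of thing to aim for, but there are two issues to address before it could replace the citation.

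First, in this paper $\inftynorm{A}$ denotes the \emph{maximum entry} of $A$, not the $\ell^\infty$-induced operator norm (maximum row sum). One sees this in Corollary~\ref{cor:matixinfty}, where an entry-wise bound of $nk\lambda^{k-1}$ is converted directly into $\inftynorm{M^{[1,k]}}\leq nk\lambda^{k-1}$, and in the observation that item~(4) would be the trivially stronger $\inftynorm{A}\geq\rho(A)$ if $\inftynorm{\cdot}$ were a submultiplicative matrix norm. So your step ``$\rho(A)\leq\inftynorm{A}$ for any operator norm'' does not apply here: for the all-ones matrix the maximum entry is $1$ while $\rho=2$. The fix is cheap: for a $2\times 2$ nonnegative matrix, the maximum entry is at least half the maximum row sum, which is in turn at least $\rho(A)$; this gives item~(4), and $\inftynorm{A^k}\geq\rho(A)^k/2$ gives the lower bound in item~(5) (your Jordan-form upper bound is entry-wise, so it is unaffected).

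Second, the ``delicate point'' you flagged is more than delicate: the implication $(1)\Rightarrow(3)$ inside item~(2) is \emph{false} as stated. Take $A=\left[\begin{smallmatrix}2&1\\0&1\end{smallmatrix}\right]$ and $\lambda=2=\rho(A)$. For any $x=(s,t)$ with $s,t>0$ we have $(Ax)_2=t$, so $Ax\geq(2-\varepsilon)x$ forces $t\geq(2-\varepsilon)t$, which is impossible for $\varepsilon<1$. Your instinct that one must assume $A$ has a strictly positive Perron vector --- which fails for this reducible $A$ --- is exactly right, and the counterexample shows it is a genuine hypothesis, not merely an artifact of the $A+\delta J$ perturbation. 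Happily this has no downstream consequences: Lemma~\ref{lemma_prod_supermult} only invokes the equivalence $(1)\Leftrightarrow(2)$ of item~(2), and Lemma~\ref{lemma_prod_submult} only invokes the $\rho\leq\lambda$ direction of item~(3), which is the clean side of the perturbation (there the $\delta J$ term only helps the inequality, so no positivity lower bound on $x_\delta$ is needed). Your arguments for those implications --- the explicit nonnegative eigenvector, the iteration of $Ax\geq\lambda x$ combined with item~(5), and Collatz--Wielandt in both directions --- are all correct. With the $\inftynorm{\cdot}$ fix and item~(2) either restricted to the irreducible case or with its clause (3) deleted, your proof would be a valid substitute for the citation.
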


We can now prove the two main lemmas of this subsection.

\begin{lemma}
\label{lemma_prod_supermult}
Let $M_1,\ldots,M_k\in\real^{2\times 2}_{\geq 0}$ and let $M := M_1\cdots M_k$. For each $i,j\in [k]$, let $M_{i,j}$ denote the matrix whose first and second rows are the first row of $M_i$ and the second row of $M_j$ respectively. If $\rho(M_{i,j})\geq \lambda \geq 0$ for each $i,j\in [k]$, then $\rho(M)\geq \lambda^k$.
\end{lemma}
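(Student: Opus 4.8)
The plan is to use the characterization of $\rho$ from Fact \ref{fact_non_neg}(2): to show $\rho(M) \geq \lambda^k$ it suffices to produce, for every $\varepsilon > 0$, a nonzero vector $x \geq 0$ with $Mx \geq (\lambda - \varepsilon)^k x$ (and then let $\varepsilon \to 0$); or, more cleanly, to exhibit for each $i$ a vector $x^{(i)} > 0$ with $M_i x^{(i+1)} \geq (\lambda-\varepsilon) x^{(i)}$ in a telescoping fashion, with $x^{(k+1)} = x^{(1)}$, which chains to $M x^{(1)} \geq (\lambda-\varepsilon)^k x^{(1)}$. The subtlety is that the hypothesis does not say $\rho(M_i) \geq \lambda$ for the actual matrices $M_i$; rather it controls the ``mixed'' matrices $M_{i,j}$ whose rows come from possibly different matrices. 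So a single eigenvector argument for each $M_i$ is not directly available, and I expect reconciling the top rows and bottom rows across different indices to be the main obstacle.

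Here is the approach I would take. Write $M_i = \left[\begin{smallmatrix} r_i \\ s_i \end{smallmatrix}\right]$ where $r_i, s_i \in \real^2_{\geq 0}$ are the first and second rows. The hypothesis $\rho(M_{i,j}) \geq \lambda$ for all $i,j$ says: for each pair $(i,j)$ and each $\varepsilon>0$ there is $x_{i,j} > 0$ with $r_i \cdot x_{i,j} \geq (\lambda-\varepsilon)(x_{i,j})_0$ and $s_j \cdot x_{i,j} \geq (\lambda-\varepsilon)(x_{i,j})_1$. Since any positive scaling of $x_{i,j}$ works, I would try to normalize and argue that a \emph{single} vector $x > 0$ can be chosen to work simultaneously for all the relevant inequalities --- or, failing that, track a vector through the product. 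Concretely, the cleanest route is probably: parametrize a positive vector by its ``slope'' $t = x_1/x_0 \in (0,\infty)$, and note that for a fixed row $r = (r_0, r_1) \geq 0$, the condition ``$\exists$ positive $x$ of slope $t'$ with $r \cdot (x_0, x_1 = t x_0) \geq \lambda x_0'$'' becomes a one-dimensional condition relating $t$ (the incoming slope) to an achievable outgoing value. Then the composed inequality $Mx \geq \lambda^k x$ amounts to finding a fixed point of a composition of monotone self-maps of $(0,\infty) \cup \{0, \infty\}$, which exists by a continuity/monotonicity argument (e.g. an intermediate value / Knaster--Tarski style fixed-point argument on the interval).

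Alternatively --- and this may be the slicker proof the authors intend --- one can avoid slopes entirely: let $z \geq 0$, $z \neq 0$, be a Perron eigenvector of $M$ itself, so $Mz = \rho(M) z$, and also pull back $z$ through the product to get vectors $z^{(i)} := M_i M_{i+1} \cdots M_k\, z$ with $z^{(1)} = \rho(M)\,z / $ (nothing --- just $z^{(1)} = Mz = \rho(M)z$), where each $z^{(i)} \geq 0$ is nonzero and $M_i z^{(i+1)} = z^{(i)}$. Now examine the coordinates: if some coordinate pattern among the $z^{(i)}$ were ``bad'' we would be done trivially, so in the remaining case all $z^{(i)}$ are strictly positive in the coordinates that matter, and then the pair of scalar inequalities coming from $\rho(M_{i,i'}) \geq \lambda$ for a well-chosen $i' $ forces $z^{(i)} \geq \lambda\, z^{(i+1)}$ coordinatewise after possibly rescaling, which telescopes to $\rho(M) z = z^{(1)} \geq \lambda^k z^{(k+1)} = \lambda^k z$, giving $\rho(M) \geq \lambda^k$. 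I would expect the real work to be in handling the degenerate cases where some $z^{(i)}$ has a zero coordinate (equivalently, some $M_i$ has a zero entry), and in pinning down exactly which mixed matrix $M_{i,i'}$ to invoke at each step; I would organize this as a short case analysis on the sign patterns of the $2\times 2$ matrices, using Fact \ref{fact_non_neg}(1)--(2) throughout and taking the $\varepsilon \to 0$ limit at the end if I use the strict-positivity version.
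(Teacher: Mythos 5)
Your setup is right: by Fact~\ref{fact_non_neg}(2), after normalizing $\lambda=1$, it suffices to produce a nonzero $z\geq 0$ with $Mz\geq z$, and you correctly note in passing that the cleanest route is to find a \emph{single} $z>0$ satisfying $M_iz\geq z$ for all $i$ simultaneously. That is exactly the paper's strategy. But you never supply the step that makes this work, and instead pivot to two more elaborate alternatives, one of which is unsound.

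The missing key idea is a monotonicity/ordering observation on the constraints. Writing $u_i,v_i$ for the rows of $M_i$, the constraint $\langle u_i,z\rangle\geq z_1$ for $z>0$ is either vacuous (when $u_{i,1}\geq 1$) or is an \emph{upper} bound $z_1/z_2\leq u_{i,2}/(1-u_{i,1})$ on the slope; similarly each constraint from a $v_j$ is vacuous or a \emph{lower} bound on the slope. Thus the whole system reduces to the single hardest upper bound (coming from some $i_0$) and the single hardest lower bound (coming from some $j_0$), and then $\rho(M_{i_0,j_0})\geq 1$ plus Fact~\ref{fact_non_neg}(2) yields a $z\geq 0$ satisfying exactly that extremal pair of constraints, hence all of them. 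Without this ordering observation the ``single $z$'' plan is just a wish; the mixed hypothesis $\rho(M_{i,j})\geq\lambda$ is used precisely, and only, to certify feasibility at the extremal pair $(i_0,j_0)$.

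Your second sketched route has a genuine gap. Starting from a Perron vector $z$ of $M$ and setting $z^{(i)}=M_i\cdots M_k z$, you hope to force $z^{(i)}\geq\lambda z^{(i+1)}$ coordinatewise from $\rho(M_{i,i'})\geq\lambda$ for a well-chosen $i'$. But $\rho(M_{i,j})\geq\lambda$ only guarantees some vector $x_{i,j}\geq 0$ with $M_{i,j}x_{i,j}\geq\lambda x_{i,j}$; it says nothing about the fixed vector $z^{(i+1)}$ that actually arises in the product, whose direction is unrelated to any $x_{i,j}$. (Rescaling cannot help, since the inequality depends on the direction, not the magnitude, of $z^{(i+1)}$.) Note also that $M_{i,i}=M_i$, so the hypothesis already gives $\rho(M_i)\geq\lambda$ for free, and the paper explicitly points out this alone is not enough — which is another sign that any argument relying only on per-matrix spectral radii, as your telescoping does in effect, cannot close. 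The first route (slope fixed-point across a cycle of vectors) is not wrong in spirit but is strictly more machinery than needed once one sees that a single $z$ suffices, and you would still need the same ordering observation to run it.
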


\begin{proof}
The lemma is trivial for $\lambda=0$. Thus, we assume that $\lambda > 0$.
By dividing each matrix through by $\lambda$, we can assume w.l.o.g. that $\lambda = 1$. 
In this case, we need to show that $\rho(M) \geq 1$.


By Fact \ref{fact_non_neg}, we can show that $\rho(M)\geq 1$ by showing that there exists a non-zero $z\in \real^2$ s.t. $z\geq 0$ and $Mz\geq z$. To do this, it suffices to produce a $z$ as above s.t. $M_iz \geq z$ for each $i$. 

Denote by $u_i = (u_{i,1},u_{i,2})$ and $v_i=(v_{i,1},v_{i,2})$ the first and second rows (respectively) of $M_i$. We need $M_iz \geq z$, which is the same as requiring that $\ip{u_i'}{z}\geq 0$ and $\ip{v_i'}{z}\geq 0$ for every $i$, where $u_i' = (u_{i,1}-1,u_{i,2})^T$ and $v_i' = (v_{i,1},v_{i,2}-1)^T$. Clearly, if $u_i'$ or $v_i'$ is non-negative, the corresponding constraint is trivial (since we are looking for $z\geq 0$). Let $P$ and $Q$ denote the set of $i$ where $u_{i,1} < 1$ and $v_{i,2} < 1$ respectively. 

Thus the constraint corresponding to $u_i'$ for $i\in P$ may be rewritten as $z_1\leq (u_{i,2}/(1-u_{i,1}))\cdot z_2$. Clearly, this constraint gets strictly harder to satisfy as the parameter $u_{i,2}/(1-u_{i,1})$ gets smaller and therefore, to satisfy all the constraints indexed by $P$, it suffices to satisfy just the constraint corresponding to $i_0\in P$ for which this parameter is minimized. Similarly, there is a $j_0\in Q$ s.t. any non-negative $z$ that satisfies $\ip{v_{j_0}'}{z}\geq 0$ automatically satisfies all the other constraints indexed by $Q$. However, we know that $\rho(M_{i_0,j_0}) \geq 1$ and hence by Fact \ref{fact_non_neg}, there is some non-zero $z\in \real^2_{\geq 0}$ s.t. $M_{i_0,j_0}z\geq z$ and thus $\ip{u_{i_0}'}{z}\geq 0$ and $\ip{v_{j_0}'}{z}\geq 0$. This $z$ satisfies all the constraints and hence has the property that $M_iz\geq z$ for each $i\in [k]$.
\end{proof}

Given $u,v\in\real^2$, we denote by $\left[\begin{smallmatrix} u \\ v\end{smallmatrix}\right]$ the $2\times 2$ matrix whose first and second rows are $u$ and $v$ respectively.

\begin{lemma}
\label{lemma_prod_submult}
Assume we have compact subsets $U_1,\ldots,U_k,V_1,\ldots,V_k\subseteq \real^{2}_{\geq 0}$ s.t. for each $i,j\in [k]$, there exists $u_{i,j}\in U_i$ and $v_{i,j}\in V_j$ s.t. the matrix $\left[\begin{smallmatrix} u_{i,j} \\ v_{i,j}\end{smallmatrix}\right]$ satisfies $\rho\left(\left[\begin{smallmatrix} u_{i,j} \\ v_{i,j}\end{smallmatrix}\right]\right)\leq \lambda$. Then, there exist $u_i\in U_i$ and $v_i\in V_i$ for each $i\in [k]$ s.t.  the matrices $M_i:= \left[\begin{smallmatrix} u_i \\ v_i\end{smallmatrix}\right]$ and $M^{[i,j]}:= M_i\cdot M_{i+1}\cdots M_j$ for $i \leq j\in [k]$ satisfy $\rho(M^{[i,j]})\leq \lambda^{j-i+1}$.
\end{lemma}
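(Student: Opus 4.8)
The plan is to prove the statement by induction on $k$, but to set up the induction carefully so that it actually goes through. The naive induction ``assume it holds for $k-1$, prove it for $k$'' fails because the $u_i$ chosen for the first $k-1$ factors must be compatible with the last factor, and vice versa; choosing them greedily for the smaller problem leaves no freedom. So instead I would first reduce to $\lambda$ being a fixed constant (dividing all matrices by $\lambda$, since $\rho$ scales linearly and the $U_i,V_i$ can be replaced by $U_i/\lambda$, $V_i/\lambda$), and reduce to showing: if for all $i,j$ there is a choice making $\rho\le 1$, then we can make all the interval products have $\rho\le 1$. The key analytic tool is Fact \ref{fact_non_neg}(3): $\rho(A)\le 1$ (after a limiting argument, $\rho(A)\le 1+\varepsilon$) is witnessed by a strictly positive vector $x$ with $Ax\le x$. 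I would look for a single positive vector $x>0$ that simultaneously witnesses $\rho(M_i)\le 1$ for every $i$ via the \emph{same} $x$; if $M_i x \le x$ for all $i$, then $M^{[i,j]} x = M_i\cdots M_j x \le x$ by monotonicity of non-negative matrices, giving $\rho(M^{[i,j]})\le 1$ for every sub-interval at once.

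So the real content is a fixed-point / selection statement: \emph{there is a vector $x>0$ and, for each $i$, a choice $u_i\in U_i$, $v_i\in V_i$ such that $\ip{u_i}{x}\le x_1$ and $\ip{v_i}{x}\le x_2$.} To find such an $x$, I would parametrize positive vectors by their slope $t = x_2/x_1 \in (0,\infty)$ (scaling is irrelevant). For a fixed slope $t$, the row $u = (u_1,u_2)$ satisfies $\ip{u}{x}\le x_1$ iff $u_1 + t u_2 \le 1$; over the compact set $U_i$ the minimum of $u_1+tu_2$ is a continuous function $\phi_i(t)$ of $t$, and we can satisfy the $U_i$-constraint at slope $t$ iff $\phi_i(t)\le 1$. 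Similarly $\ip{v}{x}\le x_2$ iff $(1/t)v_1 + v_2 \le 1$; let $\psi_j(t)$ be the minimum of $(1/t)v_1+v_2$ over $V_j$, and the $V_j$-constraint is satisfiable at slope $t$ iff $\psi_j(t)\le 1$. The hypothesis says: for every pair $(i,j)$ there is a common slope... no — more carefully, the hypothesis gives for each $(i,j)$ a matrix $\left[\begin{smallmatrix}u_{i,j}\\ v_{i,j}\end{smallmatrix}\right]$ with spectral radius $\le 1$, hence (Fact \ref{fact_non_neg}(3), limiting in $\varepsilon$) a slope $t_{i,j}$ at which \emph{both} $u_{i,j}$ and $v_{i,j}$ satisfy their constraints, so $\phi_i(t_{i,j})\le 1$ and $\psi_j(t_{i,j})\le 1$. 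Thus for each $i$ the set $S_i := \{t : \phi_i(t)\le 1\}$ is nonempty and for each $j$ the set $T_j := \{t: \psi_j(t)\le 1\}$ is nonempty, and moreover $S_i\cap T_j\ne\emptyset$ for every $i,j$. Now I claim $\phi_i$ is non-decreasing in $t$ (larger $t$ weights $u_2\ge 0$ more) and $\psi_j$ is non-increasing in $t$; hence each $S_i$ is a down-set $(0,a_i]$ (or $(0,\infty)$) and each $T_j$ is an up-set $[b_j,\infty)$ (or $(0,\infty)$). The pairwise-intersection condition $S_i\cap T_j\ne\emptyset$ becomes $b_j\le a_i$ for all $i,j$, i.e. $\sup_j b_j \le \inf_i a_i$, so $\bigcap_i S_i \cap \bigcap_j T_j \ne \emptyset$. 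Picking any $t$ in this common intersection and the corresponding minimizing $u_i\in U_i$, $v_i\in V_i$ (attained by compactness), and $x = (1,t)$, gives $M_i x\le x$ for all $i$, which is what we wanted — modulo handling the strictness in Fact \ref{fact_non_neg}(3) by a standard $\varepsilon\to 0$ compactness argument on the choices.

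I expect the main obstacle to be exactly the bookkeeping around the strictness in the Perron--Frobenius facts: Fact \ref{fact_non_neg}(3) only gives, for each $\varepsilon>0$, a strictly positive witness $x_\varepsilon$ with $A x_\varepsilon\le(1+\varepsilon)x_\varepsilon$, and we want a single $x>0$ (up to scaling) that works for all $i$ with the clean bound. The monotonicity argument above sidesteps this if one works directly with the functions $\phi_i,\psi_j$ and the closed sublevel sets $\{\phi_i\le 1\}$, $\{\psi_j\le 1\}$, which are closed by continuity; one should check these sets are the right ``intervals'' in $t$ and that the limiting $\varepsilon\to 0$ of the $t_{i,j}$'s lands in the closed sets. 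The other minor point is that $x_1$ or $t$ might want to be $0$ or $\infty$ in degenerate cases (e.g. some $U_i$ forces $u_1>1$ always); but the pairwise hypothesis rules this out, since it guarantees every $S_i$ and $T_j$ is a nonempty subinterval of $(0,\infty)$. Once the common slope $t$ is found, the monotonicity of matrix--vector multiplication over $\real_{\ge0}$ immediately upgrades $M_i x\le x$ to $M^{[i,j]}x\le x$ for all intervals, and Fact \ref{fact_non_neg}(2)--(3) converts this back to the spectral radius bound, completing the proof.
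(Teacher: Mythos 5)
Your proof is correct and follows essentially the same route as the paper's: both reduce the problem to finding a single positive vector $z$ (equivalently, a slope $t = z_2/z_1$) with $M_i z \le \lambda' z$ for every $i$, use the pairwise $\rho\le\lambda$ hypothesis plus a Helly-type argument on $t$ to find it, handle the slack in Fact~\ref{fact_non_neg}(3) by working at $\lambda' = \lambda + \varepsilon$ and passing to the limit via compactness, and then finish by monotonicity of products of nonnegative matrices. Your monotone functions $\phi_i,\psi_j$ with interval sublevel sets are precisely the paper's ``weakest constraint for $U_i$'' and ``strongest constraints $i_0, j_0$'' made explicit; the slope parametrization is a nice presentational refinement, but the underlying argument coincides.
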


\begin{proof}
We will show that for each $\varepsilon>0$, there is a choice of $u_i\in U_i,v_i\in V_i$ ($i\in [k]$) so that for $M_i:= \left[\begin{smallmatrix} u_i \\ v_i\end{smallmatrix}\right]$ and $M^{[i,j]} := M_i\cdots M_j$, we have
\begin{equation}
\label{eq_submult}
\rho(M^{[i,j]})=\rho(M_i\cdots M_j)\leq (\lambda+\varepsilon)^{j-i+1}.
\end{equation}
for each $i,j\in [k]$ with $i<j$. Since the sets $U_i,V_i$ for $i\in [k]$ are all compact and $\rho:\real^{2\times 2}\rightarrow \real$ is a continuous function, a standard argument shows there must be a choice of these vectors so that $M$ as defined above in fact satisfies the requirements of the lemma. 

Fix $\varepsilon > 0$ and let $\lambda' = \lambda+\varepsilon$. We first show how to choose $u_i,v_i$ ($i\in [k]$) and $z\in \real^{2}_{>0}$ such that for each $i$, $M_i := \left[\begin{smallmatrix} u_i \\ v_i\end{smallmatrix}\right]$ satisfies $M_i z \leq \lambda' z$. We then show how this implies (\ref{eq_submult}).

\begin{claim}
\label{claim_choosing_certs}
There exist $u_i\in U_i$ and $v_i\in V_i$ for each $i\in [k]$ and a $z\in \real^2_{>0}$ such that for each $i\in [k]$, we have $M_iz\leq \lambda' z$, where $M_i$ is as defined above. 
\end{claim}

\begin{proof}[Proof of Claim \ref{claim_choosing_certs}]
The vectors $u_1,\ldots,u_k,v_1,\ldots,v_k$ and $z\in \real^2_{>0}$ that we choose will in fact have the stronger property that for each $i,j\in [k]$, we will have $M_{i,j}z \leq z$, where $M_{i,j}:= \left[\begin{smallmatrix} u_i \\ v_j\end{smallmatrix}\right]$. Let us fix $i,j\in [k]$ and consider the problem of coming up with such a $u_i,v_j$, and $z$. Therefore, we want $u_i\in U_i$ and $v_j\in V_j$ s.t. 
\begin{align*}
\ip{u_i}{z} \leq \lambda' z_1\qquad\qquad
\ip{v_j}{z} \leq \lambda' z_2
\end{align*}

We can rewrite the above constraints on $z$ as
\begin{align*}
\ip{u_i'}{z} \leq 0\qquad\qquad\qquad
\ip{v_j'}{z} \leq 0
\end{align*}
where $u_i' := (\lambda'-u_{i,1},u_{i,2})$ and $v_j' := (v_{j,1},\lambda'-v_{j,2})$. Consider the set of constraints $\{\ip{u_i'}{z} \leq 0\ |\ u_i\in U_i\}$. Note that this set of constraints has the property is that there is a \emph{weakest constraint}: more precisely, there exists a $u_i\in U_i$ s.t. for any $z>0$, if there exists a $\overline{u}_i\in U_i$ s.t. $\ip{\overline{u}_i'}{z} \leq 0$, then $\ip{u_i'}{z}\leq 0$ as well. Similarly, we also have a $v_j\in V_j$. 

We need a crucial observation regarding the vectors $u_i,v_i$ chosen above. By the assumptions of Lemma \ref{lemma_prod_submult}, for every $i,j\in [k]$, we know that for each $i,j\in [k]$, there is \emph{some} choice of $u_{i,j}\in U_{i}$ and $v_{j,i}\in V_{j}$ so that $\rho\left(\left[\begin{smallmatrix} u_{i,j} \\ v_{i,j}\end{smallmatrix}\right]\right)\leq \lambda$. By Fact \ref{fact_non_neg}, this means that there is some $z_{i,j}\in \real^{2}_{>0}$ s.t. $\left[\begin{smallmatrix} u_{i,j} \\ v_{i,j}\end{smallmatrix}\right]z_{i,j} \leq \lambda' z_{i,j}$, which is equivalent to saying that $\ip{u_{i,j}'}{z_{i,j}}\leq 0$ and $\ip{v_{i,j}'}{z_{i,j}}\leq 0$. But this implies that $\ip{u_i}{z_{i,j}}\leq 0$ and $\ip{v_{j}}{z_{i,j}}\leq 0$ as well. Thus, we have shown that

\begin{observation}
\label{obs_submult}
For every $i,j\in [k]$, there exists a $z_{i,j}\in\real^2_{>0}$ s.t. $\ip{u_i'}{z_{i,j}}\leq 0$ and $\ip{v_j'}{z_{i,j}}\leq 0$.
\end{observation}

Now that we have chosen $u_i,v_i$ for each $i\in [k]$, we only need to choose $z\in \real^2_{>0}$ as mentioned above. Again, we need to choose $z\in \real^2_{>0}$ so that for each $i,j$, $\ip{u_{i}'}{z}\leq 0$ and $\ip{v_{j}'}{z}\leq 0$. Consider the sets of constraints $\{\ip{u_{i}'}{z}\leq 0\ |\ i\in [k]\}$ and $\{\ip{v_j'}{z}\leq 0\ |\ j\in [k]\}$. This time we consider the \emph{strongest constraints} in these sets: in other words, we fix an $i_0\in [k]$ so that for any $z>0$, if $\ip{u_{i_0}'}{z}\leq 0$, then in fact $\ip{u_{i}'}{z}\leq 0$ for every $i\in [k]$ and a $j_0\in [k]$ similarly for the $v_j$. By Observation \ref{obs_submult}, we know that there is a $z:=z_{i_0,j_0}>0$ that satisfies these constraints and since these are the strongest constraints, we see that $z$ satisfies $M_{i,j}z\leq z$ for every $i,j\in [k]$. 
\end{proof}

Fix any $i,j\in [k]$ s.t. $i<j$ and consider $M^{[i,j]} = M_i\cdots M_j$, where the $M_\ell$ ($\ell\in [k]$) are as given by Claim \ref{claim_choosing_certs}. We show $\rho(M^{[i,j]})\leq (\lambda')^{j-i+1}$.  By Fact \ref{fact_non_neg}, it suffices to obtain $z\in\real^2_{>0}$ s.t. $M^{[i,j]}\cdot z \leq (\lambda')^{j-i+1} z$. Consider the $z$ guaranteed to us by Claim \ref{claim_choosing_certs}. We have $M^{[i,j]}\cdot z = (M_i\cdots M_j)z \leq (M_i\cdots M_{j-1})(\lambda' z)\ldots \leq (\lambda')^{j-i+1} z$, where the inequalities follows from the choice of $z$ and the fact that the matrices $M_\ell$ are all non-negative. This finishes the proof of Lemma \ref{lemma_prod_submult}.
\end{proof}

\begin{corollary} \label{cor:matixinfty}
  Let $U_1,\ldots,U_k,V_1,\ldots,V_k$, and $\lambda$ be as in Lemma \ref{lemma_prod_submult}. Suppose further that, for any $i$ and any $u \in U_i$ and $v \in V_i$, the entries of $u$ and $v$ are bounded above by a constant $n$. Then there exist $u_i\in U_i$ and $v_i\in V_i$ for each $i\in [k]$ such that the matrices $M_i:= \left[\begin{smallmatrix} u_i \\ v_i\end{smallmatrix}\right]$ satisfy $||M_1 M_2 \cdots M_k||_\infty \leq n k \lambda^{k-1}$.
\end{corollary}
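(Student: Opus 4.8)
The plan is to apply Lemma \ref{lemma_prod_submult} to obtain the vectors $u_i \in U_i$, $v_i \in V_i$ and the corresponding matrices $M_i = \left[\begin{smallmatrix} u_i \\ v_i\end{smallmatrix}\right]$ with the property that every consecutive sub-product $M^{[i,j]} = M_i M_{i+1} \cdots M_j$ satisfies $\rho(M^{[i,j]}) \leq \lambda^{j-i+1}$. The only thing that remains is to convert this family of spectral-radius bounds, together with the entrywise bound $n$ on the individual matrices $M_i$, into an $\ell_\infty$ (max-row-sum, hence max-entry since all entries are non-negative) bound on the full product $M_1 M_2 \cdots M_k$.

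First I would reduce to bounding a single entry of the product. Since all the $M_i$ are non-negative, every entry of $M_1 \cdots M_k$ is non-negative and $\|M_1\cdots M_k\|_\infty$ equals the largest entry times at most $2$; more simply, it suffices to bound each entry of the product by $\tfrac12 n k \lambda^{k-1}$, or just directly bound the max row sum. Write the product as $M_1 \cdot (M_2 \cdots M_k)$, or more usefully split off the \emph{first} factor: $M_1 M_2 \cdots M_k = M_1 \cdot M^{[2,k]}$. The key point is that $M^{[2,k]}$ is a $2\times 2$ non-negative matrix with $\rho(M^{[2,k]}) \leq \lambda^{k-1}$, and for a $2\times 2$ non-negative matrix $A$ one always has $\|A\|_\infty \leq 2\rho(A)$ (this is the easy direction: for non-negative $2\times2$ matrices the spectral radius is at least half the maximum row sum by Fact \ref{fact_non_neg}(4), applied here — actually we need the reverse, so I would instead iterate).

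The cleaner route, which I expect to be the main technical point, is an inductive telescoping argument: I claim that for each $t$, $\|M_t M_{t+1} \cdots M_k\|_\infty \leq (\text{something growing linearly in } k-t)\cdot \lambda^{k-t}$. To prove this I would peel off one factor at a time, but the naive bound $\|M_t \cdots M_k\|_\infty \le \|M_t\|_\infty \cdot \|M_{t+1}\cdots M_k\|_\infty$ loses a factor at every step and gives $n^k$ rather than $nk\lambda^{k-1}$, so that does not work. Instead I would use the spectral bounds more carefully: by Fact \ref{fact_non_neg}(3), for each $i\le j$ and each $\varepsilon>0$ there is a positive vector $z^{[i,j]} > 0$ with $M^{[i,j]} z^{[i,j]} \leq (\lambda^{j-i+1}+\varepsilon) z^{[i,j]}$; in fact, re-examining the proof of Lemma \ref{lemma_prod_submult}, all the matrices $M^{[i,j]}$ share a \emph{common} positive eigen-type vector $z$ (the $z$ produced by Claim \ref{claim_choosing_certs} satisfies $M_i z \le \lambda' z$ for every $i$, hence $M^{[i,j]} z \le (\lambda')^{j-i+1} z$ for all $i\le j$). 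Fixing that common $z$ with, say, $\min_\ell z_\ell / \max_\ell z_\ell =: \gamma$, one gets $\|M^{[1,k]}\|_\infty \leq (\max_\ell z_\ell / \min_\ell z_\ell)\cdot (\lambda')^{k}$ — but this still has $\lambda^k$ not $k\lambda^{k-1}$, and no control on $\gamma$.

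So the honest approach is the following hybrid. Bound $\|M_1 \cdots M_k\|_\infty$ by expanding the matrix product as a sum over all $2^{k-1}$ "paths" through the middle coordinates: $(M_1\cdots M_k)_{s,t} = \sum_{\sigma} (M_1)_{s,\sigma_1}(M_2)_{\sigma_1,\sigma_2}\cdots (M_k)_{\sigma_{k-1},t}$. Group the paths by how many times they "switch" or, better, bound using the fact that any product of a \emph{consecutive block} of the $M_i$'s has all entries $\le 2\lambda^{(\text{block length})}$ via $\rho$; the worst case is when the path decomposes into many length-one blocks, each contributing a factor $n$, but a path that uses $r$ blocks of "large" type contributes at most $n^r \lambda^{k-r}$ and there are at most $\binom{k}{r}\cdot(\text{const})$ such, and since $n \geq \lambda$ always (each entry bounds $\rho$ from... no). The cleanest honest statement: pair up the coordinates two at a time using Lemma \ref{lemma_prod_submult}'s conclusion applied to the \emph{full} product plus all its length-$2$ sub-products, and use that a $2\times 2$ non-negative matrix $A$ with $\rho(A)\le\mu$ and all entries $\le n$ satisfies $\|A^{(\text{appearing as a factor})}\|$... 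Given the target bound has the shape $nk\lambda^{k-1}$, the intended argument is surely: write $M^{[1,k]} = M_1 \cdot M^{[2,k]}$, note $\|M^{[2,k]}\|_\infty \le 2\lambda^{k-1}$ by applying Fact \ref{fact_non_neg}(4) in reverse is false — rather, one bounds $\|M^{[1,k]}\|_\infty \le n \cdot (\text{row sum of } M^{[2,k]})$ and row sums of $M^{[2,k]}$ are $\le 2\rho(M^{[2,k]}) \le 2\lambda^{k-1}$ — wait, $\|A\|_\infty \le 2\rho(A)$ is exactly Fact \ref{fact_non_neg}(4)'s partner and holds for $2\times2$ non-negative matrices. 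Then $\|M^{[1,k]}\|_\infty \leq \|M_1\|_\infty \cdot \|M^{[2,k]}\|_\infty \leq n \cdot 2\lambda^{k-1}$, which already beats the claimed $nk\lambda^{k-1}$! So the main obstacle is just verifying that $\|A\|_\infty \le 2\rho(A)$ for non-negative $2\times 2$ matrices (which is not literally Fact \ref{fact_non_neg}(4) as stated — that gives $\|A\|_\infty \ge \rho(A)/2$) — I would supply this via an elementary computation (for non-negative $2\times2$ matrices, $\rho(A) \geq \max(a_{00}+a_{01}, a_{10}+a_{11})$ fails in general, e.g. $\left[\begin{smallmatrix}0&1\\0&0\end{smallmatrix}\right]$), so in fact the factor of $k$ in the statement is doing real work and the intended proof must telescope $\|M^{[1,k]}\|_\infty \le \sum_{i} n \lambda^{i-1}\lambda^{k-i}\cdot(\ldots)$. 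The step I expect to be the genuine obstacle is therefore pinning down exactly which telescoping/path-counting identity yields the linear-in-$k$ (rather than exponential) loss, and I would carry it out by bounding $\|M^{[1,j]}\|_\infty$ inductively in $j$ using the split $M^{[1,j]} = M^{[1,j-1]} M_j$ together with the spectral bound on $M^{[1,j-1]}$ and the entrywise bound on $M_j$.
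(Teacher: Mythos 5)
Your final paragraph lands near the right neighborhood (peel off one factor at a time, combine spectral bounds on sub-products with the entrywise bound $n$, expect a telescoping linear-in-$k$ loss), and you correctly ruled out the dead ends along the way — in particular you correctly observed that $\|A\|_\infty \le 2\rho(A)$ is \emph{false} for non-negative $2\times 2$ matrices (your example $\left[\begin{smallmatrix}0&1\\0&0\end{smallmatrix}\right]$), so one cannot simply split $M^{[1,k]} = M_1 M^{[2,k]}$ and use the spectral radius of $M^{[2,k]}$ to bound its norm. But the proposal stops exactly at the point where the paper supplies the missing mechanism, and the inductive scheme you finally propose (``bounding $\|M^{[1,j]}\|_\infty$ inductively using the spectral bound on $M^{[1,j-1]}$ and the entrywise bound on $M_j$'') would not close as written, for two reasons. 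First, a bound on $\|M^{[1,j-1]}\|_\infty$ alone is too coarse an inductive invariant: multiplying by $M_j$ (entries $\le n$) gives back exactly the lossy $\|M^{[1,j-1]}\|_\infty \cdot 2n$ you already dismissed. Second, the spectral radius of $M^{[1,j-1]}$ does not control its off-diagonal entries at all (again your own counterexample), so it cannot substitute for an entrywise bound when you multiply by $M_j$.

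What the paper does instead is strengthen the inductive invariant to an \emph{entrywise} bound that treats diagonal and off-diagonal entries separately:
\[
M^{[1,i]} \ \le\ \begin{bmatrix} \lambda^i & n\,i\,\lambda^{i-1} \\ n\,i\,\lambda^{i-1} & \lambda^i \end{bmatrix} \quad\text{(entrywise)}.
\]
The two ingredients you are missing are precisely what makes this invariant provable. (1) For a $2\times 2$ non-negative matrix $A$, one has $\rho(A)\ge\max(a_{00},a_{11})$ --- this is the elementary fact the paper invokes (``if $\rho(M)\le C$ then the diagonal entries are both $\le C$''), and it is how the spectral bounds from Lemma~\ref{lemma_prod_submult} are converted into entrywise information. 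It is applied to the \emph{whole} sub-product $M^{[1,j]}$ (whose $\rho$ is $\le\lambda^{j}$ by Lemma~\ref{lemma_prod_submult}), not to $M^{[1,j-1]}$, to control the two diagonal entries of $M^{[1,j]}$. (2) The off-diagonal entries of $M^{[1,j]} = M^{[1,j-1]}M_j$ are then bounded by multiplying the entrywise inductive bound on $M^{[1,j-1]}$ against the entrywise bound $\left[\begin{smallmatrix}\lambda & n \\ n & \lambda\end{smallmatrix}\right]$ on $M_j$ (diagonals $\le\lambda$ by (1) applied to $M_j$, off-diagonals $\le n$ by hypothesis): this produces $n\lambda^{j-1} + n(j-1)\lambda^{j-2}\cdot\lambda = nj\lambda^{j-1}$, which is the linear-in-$k$ telescoping you were hunting for. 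Without separating diagonal from off-diagonal and without the fact $\rho\ge\max(\text{diagonal})$, the induction has no way to get traction.
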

\begin{proof}
  For $i \in [k]$ let $u_i, v_i$ and $M_i := \left[\begin{smallmatrix} u_i \\ v_i\end{smallmatrix}\right]$ be the matrices guaranteed by Lemma \ref{lemma_prod_submult}. Also, let $M^{[1,i]} := M_1 M_2 \cdots M_i$.
  By lemma \ref{lemma_prod_submult}, for each $i$ we have that $\rho(M_i) \leq \lambda$ and $\rho(M^{[1, i]}) \leq \lambda^i$. It is easy to check that given a $2\times 2$ matrix $M$ with non-negative entries, if $\rho(M) \leq C$ then the diagonal entries are both $\leq C$.
  
   We prove by induction on $i$ that the matrices $M^{[1,i]}$ are entry-wise $\leq \left[\begin{smallmatrix}  \lambda^i &  n i \lambda^{i-1} \\
                   n i \lambda^{i-1} & \lambda^i
                  \end{smallmatrix}\right]$.
   The base case follows because we have 
   $M_1 \leq \left[\begin{smallmatrix}  \lambda &  n   \\
                                          n & \lambda
 \end{smallmatrix}\right]$. The diagonal entries of $M^{[1, i+1]}$ are $\leq \lambda^{i+1}$, because $\rho(M^{[1, i+1]}) \leq \lambda^{i+1}$. For the off diagonal entries, note that $M^{[1, i+1]} = M^{[1, i]} M_{i+1}$. By the inductive hypothesis, $M^{[1, i]} \leq \left[\begin{smallmatrix}  \lambda^i &  n i \lambda^{i-1}  \\
n i \lambda^{i-1} & \lambda^i
 \end{smallmatrix}\right]$. Also because $\rho(M_i) \leq \lambda$ we have $M_i \leq\left[\begin{smallmatrix} \lambda &  n   \\
n & \lambda
\end{smallmatrix}\right]$. Thus the off diagonal entries of $M^{[1, i+1]}$ are bounded above by $ n\lambda^i + n i \lambda^{i} = n (i+1) \lambda^i$.

This completes the proof as then, $||M^{[1,k]}||_\infty \leq nk \lambda^{k-1}$.
\end{proof}

\section{The behaviour of Block sensitivity under iterated composition} \label{sec:blocktensor}

In this section, we characterize the behavior of the block sensitivity $bs(f)$ under iterated composition. We show that for any function $f:\{0,1\}^I\rightarrow\{0,1\}$, we have $\alimf{bs}{f} = \alimf{(bs^{\ast})}{f}$. We use similar notation as in the previous sections such as the concepts of indexed trees $T$, $T$-ensembles, and $T$-compositions, only now we will denote by $I$ to be the index set for a function $f$ (which corresponds to a rooted star $T$).

We state the main result of this section formally below.
\begin{theorem}
\label{thm_bs_tensor}
For any boolean function $f:\{0,1\}^I\rightarrow \{0,1\}$, we have $bs^{\lim}(f) = (bs^{\ast})^{\lim}(f)$.
\end{theorem}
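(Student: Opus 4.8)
The statement is the conjunction of two inequalities, which I would establish separately.

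\textbf{The easy inequality $bs^{\lim}(f)\le (bs^{\ast})^{\lim}(f)$.} This is immediate from the pointwise bound $bs_x(g)\le bs^{\ast}_x(g)$ recorded in Section \ref{sec:bs_def}: applying it with $g=f^{(k)}$ gives $bs(f^{(k)})\le bs^{\ast}(f^{(k)})$ for every $k$, and taking $k$-th roots and $\liminf$ yields $bs^{\lim}(f)\le (bs^{\ast})^{\lim}(f)$. Since $bs^{\ast}_x=C^{\ast}_x$ pointwise, $(bs^{\ast})^{\lim}(f)=(C^{\ast})^{\lim}(f)$, which by Lemma \ref{assemblage comp} and Theorem \ref{thm:m^lim} equals $\hat{C^{\ast}}(f)$. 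So it remains to prove $bs^{\lim}(f)\ge \hat{C^{\ast}}(f)$; if $\hat{C^{\ast}}(f)\le 1$ the claim is checked directly (for non-constant $f$ one has $bs(f^{(k)})\ge 1$, and for constant $f$ both sides are $0$), so assume $\hat{C^{\ast}}(f)>1$.

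\textbf{The hard inequality: integralization by amplification.} A single level of an iterated composition may have a genuine gap between its integral block sensitivity $\nu$ and its fractional one $\nu^{\ast}=C^{\ast}$, so one cannot argue level-by-level with integral packings. The key is the identity $\nu^{\ast}(\mc H)=\sup_M \nu^M(\mc H)/M$ from Section \ref{sec:PandC}: one approaches the fractional optimum by $M$-fold packings, and the spare layers in a \emph{deep} composition $f^{(k)}$ can be spent to separate the (at most $M$) pairwise-overlapping blocks of such a packing into an honest integral packing. The elementary form of this idea is an amplification lemma I would prove first: if $g,h$ are Boolean functions and $h$ has $M$ pairwise disjoint blocks at some $0$-input and at some $1$-input, then for each $b\in\{0,1\}$,
\[ bs_b(g\circ h)\ \ge\ bs^M_b(g). \]
Its proof takes a hardest $M$-fold packing $\{B_1,\dots,B_t\}$ of blocks of $g$ at a $b$-input $x$ (so $t=bs^M_b(g)$, every $g$-coordinate in at most $M$ of the $B_i$), plants in the $j$-th copy of $h$ an $x_j$-valued input carrying $M$ disjoint blocks, assigns to each incidence $(i,j)$ with $j\in B_i$ a distinct slot in $[M]$ among the $\le M$ blocks through $j$, and lets $\widehat B_i$ be the union over $j\in B_i$ of the chosen inner block in copy $j$; then $\widehat B_i$ flips $h$'s output exactly on the coordinates of $B_i$, so it is a block of $g\circ h$ at the composed input, and slot-distinctness makes the $\widehat B_i$ pairwise disjoint.

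\textbf{Main obstacle: coordinating the choices and killing lower-order losses.} By itself this crude lemma is too lossy to iterate — for $g=h=\mathrm{AND}_n$ one has $\min(bs_0(h),bs_1(h))=1$, so it yields nothing, even though the theorem is trivial there since $bs_x=bs^{\ast}_x$ for $\mathrm{AND}_n$. The remedy is to \emph{coordinate} the hard input of the outer function with the inputs planted in the inner copies, along exactly the lines of the profile-matrix / assignment-selector analysis of Section \ref{sec:Main}: fix an assignment selector $\vec\beta$ with $\hat{C^{\ast}}_{\vec\beta}(f)=\hat{C^{\ast}}(f)$, use the composed selector $\vec\beta^{(k)}$ to define the hard input of $f^{(k)}$, replace the optimal fractional witnesses appearing in the proof of Lemma \ref{lem:m lower bound} (equivalently, by the LP duality $\tau^{\ast}=\nu^{\ast}$, optimal fractional packings) by $M$-fold packings with $M$ large, and spend the lower layers of the composition as the hosts for the slot-routing — the point being that along the composed hard input each inner copy only ever needs to host $M$ disjoint blocks at an input of the \emph{one} value dictated by $\vec\beta$ at that node, which its own deep-composition structure supplies. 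Carried out carefully, this produces for a suitable $M=M(k)$ a bound of the shape $bs(f^{(k)})\ge \hat{C^{\ast}}(f)^{\,k}\big/\!\left(M(k)\cdot\mathrm{poly}(k)\right)$; taking $k$-th roots and letting $k\to\infty$, the factors $M(k)^{1/k}$ and $\mathrm{poly}(k)^{1/k}$ tend to $1$, giving $bs^{\lim}(f)\ge \hat{C^{\ast}}(f)$ and hence the theorem. The quantitative inputs are $\nu^M(\mc H)/M\to\nu^{\ast}(\mc H)$ together with Fact \ref{fact:matrix rho} and Corollary \ref{cor:m(f) via matrices}; I expect the coordination step and the verification that the losses vanish in the limit (and the bookkeeping for monochromatic hard inputs, where the coordination is vacuous) to be the genuinely delicate part.
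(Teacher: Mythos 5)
Your amplification lemma (``if $h$ has $M$ pairwise disjoint blocks at a $0$-input and at a $1$-input, then $bs_b(g\circ h)\ge bs^M_b(g)$'') is exactly the paper's Lemma~\ref{lemma_bs_to_rbs}, and your proof sketch of it (plant the inner blocks, route each incidence $(i,j)$ to a distinct slot) matches the paper's. You also correctly identify the central difficulty: the lemma is useless when $\min\{bs_0(h),bs_1(h)\}$ is small, so some control on the $0$- and $1$-side block sensitivities of the inner layers is needed.

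However, your proposed remedy — coordinating a composed assignment selector $\vec\beta^{(k)}$ with $M$-fold packings via the profile-matrix machinery of Section~\ref{sec:Main} — is not how the paper proceeds, and you leave its core (``carried out carefully, this produces\dots'') unverified. The obstacle you flag is real: $bs$ is precisely the measure the paper shows is \emph{not} induced by a well-behaved assemblage, so Lemmas~\ref{lem:uniform comp}--\ref{lem:hitting decomp} and Corollary~\ref{cor:m(f) via matrices} do not apply to $bs$, and you have no lemma saying that the hard composed input for $bs(f^{(k)})$ can be taken of the form $\vec\beta^{(k)}$. This is a genuine gap, not a deferred routine calculation.

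The paper's actual resolution is simpler and sidesteps the coordination issue entirely. First, monotone and anti-monotone $f$ are dispatched at the outset because there $bs(f^{(k)})=C(f^{(k)})$, so $bs^{\lim}=(bs^\ast)^{\lim}$ trivially. Second, for $f$ that is neither, the paper proves Lemma~\ref{lemma_bs0vs1}: for every $k\ge 0$ and $b\in\{0,1\}$,
\[
bs(f^{(k)})\ \le\ bs_b(f^{(k+1)})\ \le\ n\cdot bs(f^{(k)}),
\]
(using a sensitive coordinate witnessing non-monotonicity to ``lift'' any packing to either side), whence $\min\{bs_0(f^{(k)}),bs_1(f^{(k)})\}\ge bs(f^{(k)})/n$. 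This means the $M$ fed into the amplification lemma is automatically within a factor $n$ of $bs(f^{(k)})$ regardless of which bit the outer function demands at each copy — no coordination with a fixed assignment selector is needed. Combined with Corollary~\ref{corollary_bs_infty} (so $M\to\infty$), Fekete's lemma (so $\lim_k bs(f^{(k)})^{1/k}$ exists, which you also do not establish), and a fixed-$\ell_0$ telescoping $bs(F^{(k+1)})\ge bs^M(F)$ with $F=f^{(\ell_0)}$, the per-step multiplicative loss is a constant depending only on $n$ and $\ell_0$, and vanishes in the $k$-th root. You would do well to look for (and prove) the analogue of Lemma~\ref{lemma_bs0vs1} rather than trying to push the profile-matrix apparatus through the non-well-behaved measure $bs$.
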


The above is easily proved when $f$ is either monotone or anti-monotone. In this case, we know that for each $k\in\naturals$, $f^{(k)}$ is either monotone or anti-monotone and hence $bs(f^{(k)}) = C(f^{(k)})$ \cite{nisan, buhrmandewolf}. As $bs^\ast(f^{(k)})$ is sandwiched between $bs(f^{(k)})$ and $C(f^{(k)})$, we have $bs(f^{(k)}) = bs^{\ast}(f^{(k)})$ and thus we are done. So from now on, we assume that $f$ is neither monotone nor anti-monotone.

\subsection{Some simple claims}

Recall Fekete's lemma for superadditive sequences (see, e.g., 
\cite[Section A.4]{fgt}).
\begin{lemma}[Fekete's lemma]
\label{lemma_fekete}
Let $\{a_m\}_{m\in\naturals}$ be a sequence of real numbers such that for any $p,q\in\naturals$, $a_{p+q}\geq a_p + a_q$. Then, the limit $\lim_{k\rightarrow\infty}a_k/k$ exists (and is possibly infinite) and moreover, we have $\lim_{k\rightarrow\infty}a_k/k = \sup_{k}a_k/k$.
\end{lemma}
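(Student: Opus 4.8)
The plan is to run the classical two-sided argument. Set $L := \sup_{k}a_k/k$, which a priori lies in $(-\infty,+\infty]$ (it is $+\infty$ precisely when the ratios $a_k/k$ are unbounded above). I will show both that $\limsup_{k\to\infty}a_k/k\le L$ and that $\liminf_{k\to\infty}a_k/k\ge L$; together these force the limit to exist and to equal $L$, which is exactly the assertion.

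The upper bound is immediate from the definition of the supremum: $a_k/k\le L$ for every $k$, hence $\limsup_{k\to\infty}a_k/k\le L$. For the lower bound — the substantive direction — I would fix an arbitrary real $M<L$ (if $L=+\infty$ this just means $M$ is an arbitrarily large real) and, using the definition of $L$ as a supremum, fix one $m\in\naturals$ with $a_m/m>M$. Then for $n\ge m$ write $n=qm+r$ with $q\ge 1$ and $0\le r<m$. Iterating the superadditivity hypothesis $q$ times gives $a_{qm}\ge q\,a_m$, and one further application (when $r\ge 1$) gives $a_n=a_{qm+r}\ge a_{qm}+a_r\ge q\,a_m+a_r$; when $r=0$ we simply have $a_n\ge q\,a_m$. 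Writing $c_r:=a_r$ if $r\ge 1$ and $c_r:=0$ if $r=0$, we obtain $a_n\ge q\,a_m+c_r$, so
\[
\frac{a_n}{n}\;\ge\;\frac{qm}{n}\cdot\frac{a_m}{m}+\frac{c_r}{n}.
\]
As $n\to\infty$ we have $qm/n\to 1$, while $c_r/n\to 0$ because $r$ only ever ranges over the \emph{finite} set $\{0,1,\dots,m-1\}$, so the numbers $c_r$ are bounded by a constant depending only on $m$. Hence $\liminf_{n\to\infty}a_n/n\ge a_m/m>M$, and since $M<L$ was arbitrary, $\liminf_{n\to\infty}a_n/n\ge L$.

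Combining the two estimates yields $L\le\liminf_{n\to\infty}a_n/n\le\limsup_{n\to\infty}a_n/n\le L$, so $\lim_{k\to\infty}a_k/k$ exists (possibly equal to $+\infty$) and equals $\sup_k a_k/k$, as claimed. There is no real obstacle here; the only points needing mild care are the bookkeeping in the case split on the remainder $r$ (in particular that $a_r$ is only defined for $r\ge 1$ if $\naturals$ excludes $0$, which is why $c_r$ is introduced) and the degenerate case $L=+\infty$, where exactly the same computation shows that $\liminf_{n\to\infty}a_n/n$ exceeds every real $M$ and hence equals $+\infty$.
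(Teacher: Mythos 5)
Your proof is correct, and it is the standard textbook argument for Fekete's lemma: bound $\limsup_k a_k/k$ above by $L=\sup_k a_k/k$ trivially, and bound $\liminf_n a_n/n$ below by any $a_m/m$ via the division $n=qm+r$, iterated superadditivity, and the observation that the remainder terms $a_r$ range over a finite set. The paper itself does not prove this lemma --- it simply cites it (to the reference \cite{fgt}) --- so there is no alternative argument to compare against; your write-up correctly handles the two points that are sometimes fumbled, namely the case $L=+\infty$ and the fact that $a_r$ may be undefined for $r=0$.
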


We have the following easy corollary to the above lemma for sequences that are ``almost superadditive''.

\begin{corollary}
\label{corollary_fekete}
Let $\{a_m\}_{m\in\naturals}$ be a sequence of real numbers such that for any $p,q\in\naturals$, $a_{p+q}\geq a_p + a_q -c $ for some fixed $c\in\real^{\geq 0}$. Then, the limit $\lim_{k\rightarrow\infty}a_k/k$ exists.
\end{corollary}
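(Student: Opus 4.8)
The plan is to reduce directly to Fekete's lemma (Lemma~\ref{lemma_fekete}) by translating the sequence so as to absorb the additive slack $c$. Specifically, I would set $b_m := a_m - c$ for every $m \in \naturals$. Then for all $p,q \in \naturals$,
\[
b_{p+q} = a_{p+q} - c \geq (a_p + a_q - c) - c = (a_p - c) + (a_q - c) = b_p + b_q,
\]
so $\{b_m\}_{m\in\naturals}$ is genuinely superadditive. Fekete's lemma then applies to $\{b_m\}$ and gives that $\lim_{k\rightarrow\infty} b_k/k$ exists (possibly $+\infty$), with value $\sup_k b_k/k$.

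To finish, I would observe that $a_k/k = b_k/k + c/k$ for every $k$, and since $c \geq 0$ is fixed, $c/k \rightarrow 0$ as $k \rightarrow \infty$. Hence $\lim_{k\rightarrow\infty} a_k/k$ exists and equals $\lim_{k\rightarrow\infty} b_k/k$. There is essentially no obstacle here; the only point worth a word of care is the case where the common limit is $+\infty$, but this is handled uniformly, since adding the vanishing term $c/k$ affects neither a finite limit nor an infinite one. (In the intended application, $a_m$ will be something like $\log_2 bs(f^{(m)})$ or a related quantity, and the near-superadditivity $a_{p+q} \geq a_p + a_q - c$ will come from the composition behaviour established earlier; this corollary is what upgrades that inequality into the existence of the composition limit.)
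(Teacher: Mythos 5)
Your proof is correct and follows exactly the paper's own argument: define $b_m = a_m - c$, note this is superadditive, apply Fekete's lemma, and observe the correction term $c/k$ vanishes. You've just spelled out the details the paper calls "clearly."
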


\begin{proof}
Consider the sequence $\{b_m\}_{m\in\naturals}$ defined by $b_m = a_m - c$. Then, $\{b_m\}_m$ is clearly superadditive and moreover, we have $\lim_{k\rightarrow\infty} (a_k - b_k)/k = 0$. Thus, by Lemma \ref{lemma_fekete}, we are done. 
\end{proof}

\begin{lemma}
\label{lemma_rbs_supadd}
Fix any boolean function $f:\{0,1\}^I\rightarrow\{0,1\}$ and any $x\in \{0,1\}^I$. For any $M,k\geq 1$, we have $bs^{kM}_x(f)\geq k\cdot bs^{M}_x(f)$.
\end{lemma}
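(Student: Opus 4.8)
The plan is to prove this by a direct scaling argument on packings, using only the definition of an $M$-fold packing from Section~\ref{sec:PandC}. Recall that $bs^M_x(f) = \nu^M(\mc{B}_x(f))$ is the maximum of $|\lambda|$ over all $M$-fold packings $\lambda$ of the block hypergraph $\mc{B}_x(f)$, where an $M$-fold packing is an \emph{integer-valued} weight function $\lambda$ on the edges of $\mc{B}_x(f)$ satisfying $\sum_{E\ni i}\lambda(E)\leq M$ for every index $i\in I$.

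First I would fix $x$, $M$, and $k$, and take an $M$-fold packing $\lambda$ of $\mc{B}_x(f)$ achieving the optimum, so that $\lambda$ is integral, $\sum_{E\ni i}\lambda(E)\leq M$ for all $i\in I$, and $|\lambda| = \sum_E \lambda(E) = bs^M_x(f)$. Next I would define $\lambda' := k\lambda$, i.e. the weight function sending each edge $E$ to $k\lambda(E)$. Since $k$ is a positive integer and $\lambda$ is integral, $\lambda'$ is again integral. I would then verify that $\lambda'$ is a $kM$-fold packing: for each $i\in I$,
\[
\sum_{E\ni i}\lambda'(E) \;=\; k\sum_{E\ni i}\lambda(E) \;\leq\; kM.
\]
Finally, $|\lambda'| = k|\lambda| = k\cdot bs^M_x(f)$, and since $bs^{kM}_x(f) = \nu^{kM}(\mc{B}_x(f))$ is by definition the maximum weight of a $kM$-fold packing, we conclude $bs^{kM}_x(f)\geq |\lambda'| = k\cdot bs^M_x(f)$, as desired.

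There is essentially no obstacle here: the argument is a routine rescaling. The only point that deserves a moment's attention is that integrality is preserved precisely because $k\in\naturals$ — this is where the discrete (as opposed to fractional) nature of $M$-fold packings is used; for fractional block sensitivity the analogous statement $bs^{*}$ being scale-invariant after normalization makes it an immediate equality rather than an inequality. I expect the lemma's real purpose to be combinatorial plumbing for a later Fekete-type argument (via Corollary~\ref{corollary_fekete}) establishing convergence of $bs^{M}_x(f)/M$, so the proof can be kept to a few lines.
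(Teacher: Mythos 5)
Your proof is correct and is essentially the same as the paper's: the paper phrases it as "repeating the packing $k$ times," which in the weight-function formulation is exactly your scaling $\lambda' = k\lambda$. Both arguments are the same routine rescaling, with integrality preserved because $k\in\naturals$.
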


\begin{proof}
Given any $M$-fold packing $\mc{B}$ of blocks of size $s$ in $\mc{B}_x(f)$, we can construct a $kM$-fold packing of blocks $\mc{B}'$ of size $ks$ in $\mc{B}_x(f)$ by simply repeating $\mc{B}$ $k$ times. When $\mc{B}$ is chosen to be the $M$-fold packing of maximum size for $f$ at $x$, this shows that $bs^{kM}_x(f)\geq k|\mc{B}| = k\cdot bs^M_x(f)$. 
\end{proof}

The following lemma will be crucial in showing that $bs(f^{(k)})$ grows like $bs^\ast(f^{(k)})$.

\begin{lemma}
\label{lemma_bs_to_rbs}
Let $g_i:\{0,1\}^{I_i}\rightarrow\{0,1\}$ ($i\in [2]$) be any non-constant boolean functions. Let $G$ denote the depth-$2$ composition $g_1 \circ g_2$ defined on the index set $I_1 \times I_2$. Then, for any $b\in\{0,1\}$, we have
\[
bs_b(G) \geq bs^{M}_b(g_1)
\]
where $M = \min\{bs_0(g_2),bs_1(g_2)\}$.
\end{lemma}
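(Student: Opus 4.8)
The plan is to convert a maximum $M$-fold packing of blocks for $g_1$ into an ordinary (pairwise disjoint) packing of blocks for $G=g_1\composedwith g_2$, using the fact that inside each copy of $g_2$ there are at least $M=\min\{bs_0(g_2),bs_1(g_2)\}$ pairwise disjoint blocks (no matter which value that copy currently outputs), which is exactly the slack needed to separate blocks of $g_1$ that share a coordinate.

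Concretely, I would first fix $b$ and pick $x\in g_1^{-1}(b)$ with $bs^M_x(g_1)=bs^M_b(g_1)$, together with a maximum $M$-fold packing of $\mc{B}_x(g_1)$, viewed as a multiset $(B_1,\ldots,B_N)$ of blocks of $g_1$ at $x$ with $N=bs^M_b(g_1)$ in which every coordinate $i\in I_1$ belongs to at most $M$ of the $B_j$ (counted with multiplicity). For each $i\in I_1$, since $M\leq bs_{x_i}(g_2)$ (here $x_i\in\{0,1\}$), there is an input $z_i$ with $g_2(z_i)=x_i$ and $bs_{z_i}(g_2)\geq M$; fix such a $z_i$ and pairwise disjoint blocks $T^1_i,\ldots,T^M_i$ of $g_2$ at $z_i$, and fix an injection $\pi_i$ from $\{j : i\in B_j\}$ into $[M]$. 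Now take $y:=(z_i : i\in I_1)$ as input to $G$; evaluating bottom up gives $G(y)=g_1(x)=b$. For each $j\in[N]$ put $\tilde B_j:=\bigcup_{i\in B_j}\{i\}\times T^{\pi_i(j)}_i$.

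Two routine checks then finish the argument. First, $\tilde B_j$ is a block of $G$ at $y$: flipping the coordinates of $y$ in $\tilde B_j$ flips $g_2$ precisely on the copies $i\in B_j$ (because $T^{\pi_i(j)}_i$ is a block of $g_2$ at $z_i$, so $g_2$ goes from $x_i$ to $1-x_i$ there), hence the vector of $g_2$-outputs changes from $x$ to $x\oplus B_j$, and $B_j$ being a block of $g_1$ at $x$ flips $g_1$; also $\tilde B_j\neq\emptyset$. Second, the $\tilde B_j$ are pairwise disjoint: if $(i,v)\in\tilde B_j\cap\tilde B_{j'}$ with $j\neq j'$, then $i\in B_j\cap B_{j'}$ and $v\in T^{\pi_i(j)}_i\cap T^{\pi_i(j')}_i$, which is impossible since $\pi_i$ is injective and the $T^\ell_i$ are disjoint. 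Thus $\{\tilde B_1,\ldots,\tilde B_N\}$ is a packing of $\mc{B}_y(G)$ of size $N$, so $bs_b(G)\geq bs_y(G)\geq N=bs^M_b(g_1)$.

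The main (and essentially only) thing to be careful about is the combinatorial bookkeeping in the last paragraph: choosing the per-copy disjoint blocks $T^\bullet_i$ and the injections $\pi_i$ consistently so that two blocks of $g_1$ sharing a coordinate $i$ get routed to disjoint blocks within the $i$-th copy of $g_2$. There is no genuine obstacle; the point is simply that ``$i$ lies in at most $M$ blocks of the $g_1$-packing'' is matched by ``the $i$-th copy of $g_2$ supplies at least $M$ disjoint blocks'', which is exactly why $M$ must be the minimum of $bs_0(g_2)$ and $bs_1(g_2)$ (so that an appropriate $z_i$ exists whatever the required value $x_i$). One should also record at the outset that $g_2$ non-constant gives $M\geq 1$ and $g_1$ non-constant gives $g_1^{-1}(b)\neq\emptyset$, so all the objects above are well defined.
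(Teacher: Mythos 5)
Your proposal is correct and takes essentially the same approach as the paper: fix the optimal $M$-fold packing of $g_1$-blocks at a hardest $b$-input $x$, feed each copy of $g_2$ an input $z_i$ with $g_2(z_i)=x(i)$ and at least $M$ disjoint blocks, and route each $g_1$-block containing coordinate $i$ to a distinct $g_2$-block within copy $i$. The only cosmetic difference is that you fix injections $\pi_i$ up front, whereas the paper phrases the same bookkeeping as a greedy algorithm that removes a $g_2$-block from the pool $\mc{B}^i$ each time coordinate $i$ is used.
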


\begin{proof}
As a short remark, we may view the index set $I_1 \times I_2$ as the leaves of the tree $T = T_1 \circ T_2$ where $T_i$ is a rooted star corresponding to the index set $I_i$. Also, for an assignment $x$ to $I_1$ and $i \in I_1$, we will use $x(i)$ to denote the boolean value $x$ assigns to $i$.

We prove the lemma for $b=0$; an identical proof works for $b=1$.
Let $\vec{\alpha} = (\alpha^0,\alpha^1)$ be a $g_2$-optimal selector (so $bs_{\alpha^b}(g_2) = bs_b(g_2)$ for $b\in \{0,1\}$). Let $x\in g_1^{-1}(0)$ be chosen so that $bs^M_{x}(g_1) = bs^M_0(g_1)$. Consider the composed assignment $X=x\circ \vec{\alpha}$ to the input of $G$. We will show that $bs_X(G)\geq bs^{M}_0(g_1)$, which will prove the lemma.

For $b\in \{0,1\}$, let $\mc{B}_b$ be any maximum-sized packing  in the hypergraph $\mc{B}_{\alpha^b}(g_2)$. Note that $\min\{|\mc{B}_0|,|\mc{B}_1|\}= M$. Let $\mc{B}$ be a maximum-sized $M$-fold block packing in $\mc{B}_{x}(g_1)$. We now give an
algorithm that constructs a block packing $\mc{B}'$ in $\mc{B}_X(G)$ such that $|\mc{B}'| = |\mc{B}| = bs^M_0(g_1)$.

For each $i\in I_1$, the algorithm maintains a  packing $\mc{B}^i$   of the hypergraph $\mc{B}_{\alpha^{x(i)}}(g_2)$.
We initialize $\mc{B}^i$ to be $\mc{B}_{x(i)}$.  We now perform the following for each
block $B \in \mc{B}$ (considered in some arbitrary order):

\begin{itemize}
\item For each $i \in I_1$, define the set $B_i$ to be empty if $i \not\in B$,
and to be a member of $\mc{B}_{x(i)}$ if $i \in B$.  Let $B'$ be the composition
$B(B^i: i\in I_1)$.   (Here
the composition of blocks is {\em subset composition} which, as defined in Section \ref{sec:comp}, is obtained 
by viewing each block
as a boolean weight function, and using composition of weight functions.) 
\item For $i\in B$, the set $\mc{B}^i$ is updated to $\mc{B}^i\setminus \{B^i\}$. 
\end{itemize}

The blocks $B'$ thus constructed are easily seen to belong to $\mc{B}_X(G)$ and to be pairwise disjoint and so form 
a block packing in $\mc{B}_X(G)$.  Provided that we can carry out the process for each block $B \in \mc{B}$
we get the correct number of blocks in our packing.
We need to verify that the first step inside the loop is well-defined, for which we require that when the block $B$ is
considered, for each $i \in B$, $\mc{B}_{(x(i)}$ must be nonempty so that we can select $B_i$.
This is true since $\mc{B}_{x(i)}$ initially has size at least $M$, and decreases by 1 each time
we consider a block $C$ that contains $i$, and $i$ belongs to at most $M$ blocks of $\mc{B}$.

\end{proof}

Lemmas \ref{lemma_rbs_supadd} and \ref{lemma_bs_to_rbs} yield the following.

\begin{corollary}
\label{corollary_bs_infty}
Let $f$ be such that $\min\{bs_0(f),bs_1(f)\}\geq 2$. Then, $\min\{bs_0(f^{(k)}), bs_1(f^{(k)})\}\geq 2^k$. In particular, $\min\{bs_0(f^{(k)}),bs_1(f^{(k)})\}$ goes to infinity as $k\rightarrow \infty$. 
\end{corollary}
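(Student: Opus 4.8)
The plan is to prove the quantitative bound $\min\{bs_0(f^{(k)}),bs_1(f^{(k)})\}\geq 2^k$ by induction on $k$; the ``in particular'' clause then follows immediately since $2^k\to\infty$. The base case $k=1$ is exactly the hypothesis $\min\{bs_0(f),bs_1(f)\}\geq 2$. Before running the induction I would record the trivial fact that $f^{(k)}$ is non-constant for every $k\geq 1$ whenever $f$ is: if $f$ attains both values, then fixing every bottom-layer copy of $f$ to an input on which $f$ evaluates to $0$ (respectively to $1$) shows $f^{(k)}$ attains both values. This is needed so that Lemma \ref{lemma_bs_to_rbs} can legitimately be applied at each step.

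For the inductive step, suppose $M:=\min\{bs_0(f^{(k)}),bs_1(f^{(k)})\}\geq 2^k$ and write $f^{(k+1)}=f\circ f^{(k)}$ as a depth-$2$ composition. Applying Lemma \ref{lemma_bs_to_rbs} with $g_1=f$ and $g_2=f^{(k)}$ gives $bs_b(f^{(k+1)})\geq bs^M_b(f)$ for each $b\in\{0,1\}$. It then remains to lower bound the $M$-fold block sensitivity $bs^M_b(f)$ in terms of ordinary block sensitivity. Here I would use that a $1$-fold packing is just a packing, so $bs^1_x(f)=bs_x(f)$, and invoke Lemma \ref{lemma_rbs_supadd} with its two parameters (``$k$'' and ``$M$'' in its statement) instantiated as $M$ and $1$ respectively, obtaining $bs^M_x(f)\geq M\cdot bs_x(f)$ for every $x$; taking the maximum over $x\in f^{-1}(b)$ then yields $bs^M_b(f)\geq M\cdot bs_b(f)\geq 2M\geq 2^{k+1}$, using $bs_b(f)\geq 2$ from the hypothesis. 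Combining the two displayed bounds gives $\min\{bs_0(f^{(k+1)}),bs_1(f^{(k+1)})\}\geq 2^{k+1}$, closing the induction.

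I do not expect a genuine obstacle, since both ingredients are already established; the only points needing care are matching the quantifiers of Lemma \ref{lemma_rbs_supadd} correctly (it is its $M=1$ instance that is used) and confirming non-constancy so that Lemma \ref{lemma_bs_to_rbs} applies. If one prefers not to substitute $M=2^k$ one may instead keep $M$ symbolic throughout and use the obvious monotonicity $bs^{M'}_b(f)\geq bs^{M}_b(f)$ for $M'\geq M$; this changes nothing essential.
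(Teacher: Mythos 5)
Your proof is correct and relies on the same two ingredients as the paper, namely Lemma \ref{lemma_bs_to_rbs} and Lemma \ref{lemma_rbs_supadd}, but you instantiate them with the opposite split of $f^{(k+1)}$ into a depth-$2$ composition. You take $g_1=f$ and $g_2=f^{(k)}$, so the parameter $M$ in Lemma \ref{lemma_bs_to_rbs} is $\min\{bs_0(f^{(k)}),bs_1(f^{(k)})\}\geq 2^k$ by the inductive hypothesis, and then Lemma \ref{lemma_rbs_supadd} (with its ``$k$'' and ``$M$'' set to $M$ and $1$) converts $bs^M_b(f)$ into $M\cdot bs_b(f)\geq 2M\geq 2^{k+1}$. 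The paper instead takes $g_1=f^{(k)}$ and $g_2=f$, so that $M=\min\{bs_0(f),bs_1(f)\}\geq 2$ is a fixed constant coming from the hypothesis on $f$; this yields the multiplicative recurrence $bs_b(f^{(k+1)})\geq bs^2_b(f^{(k)})\geq 2\,bs_b(f^{(k)})$, with the induction closing at the end. The two arguments are equivalent in substance; the paper's keeps the recurrence cleaner, while yours burns the inductive hypothesis inside the lemma application. Your explicit check that $f^{(k)}$ is non-constant before invoking Lemma \ref{lemma_bs_to_rbs} is a worthwhile precision that the paper's terse proof elides.
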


\begin{proof}
By Lemmas \ref{lemma_bs_to_rbs} and \ref{lemma_rbs_supadd}, for any $k\geq 1$, we have $bs(f^{(k+1)})\geq bs^2(f^{(k)}) \geq 2 bs(f^{(k)})$. Hence, by induction on $k$, we have the claim. 
\end{proof}

\subsection{Proof of Theorem \ref{thm_bs_tensor}}

Throughout $f:\{0,1\}^{I}\rightarrow\{0,1\}$ is a boolean function defined on index set $I:= [n]$ that is neither monotone nor anti-monotone.

We start off by arguing that $\lim_{k\rightarrow\infty} bs(f^{(k)})^{1/k}$ exists. In order to do this, we need the following simple claim.

\begin{lemma}
\label{lemma_bs0vs1}
Let $f$ be an $n$-variate boolean function that is neither monotone nor antimonotone.  
For all $k \geq 0$ and $b\in \{0,1\}$, we have
\begin{align*}
bs(f^{(k)})&\leq bs_b(f^{(k+1)})\leq n\cdot bs(f^{(k)})\\
bs^\ast(f^{(k)})&\leq bs^\ast_b(f^{(k+1)})\leq n \cdot bs^\ast(f^{(k)})
\end{align*}
(Here $f^{(0)}$ denotes the univariate identity function.)
In particular, for $k\geq 1$, we have $\min\{bs_0(f^{(k)}),bs_1(f^{(k)})\}\geq bs(f^{(k)})/n$ and similarly for the fractional block sensitivity.
\end{lemma}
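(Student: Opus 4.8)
The plan is to deduce all four inequalities from an analysis of the depth-$2$ composition $F := f\circ g$, applied with $g:=f^{(k)}$, so that $F = f^{(k+1)}$. The upper and lower bounds then hold for each $b\in\{0,1\}$, and the final ``in particular'' sentence follows by reindexing: from $bs(f^{(k-1)})\leq bs_b(f^{(k)})$ for both $b$, together with $bs(f^{(k)})\leq n\cdot bs(f^{(k-1)})$, we get $bs_b(f^{(k)})\geq bs(f^{(k)})/n$ for both $b$, hence $\min\{bs_0(f^{(k)}),bs_1(f^{(k)})\}\geq bs(f^{(k)})/n$, and identically for $bs^{\ast}$. Throughout we use that $f$ being neither monotone nor anti-monotone forces $f$, and hence every $g = f^{(k)}$, to be non-constant.

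The key structural fact about the outer function is the following. Since $f$ is not anti-monotone there is a coordinate $j'$ and an input with $j'$-coordinate $0$ on which $f$ takes value $0$ and whose $j'$-flip has $f$-value $1$ (a ``positive edge''); since $f$ is not monotone there is, symmetrically, a ``negative edge'' in some direction $j$. Inspecting the two endpoints of these two edges shows that for \emph{every} pair $(b,c)\in\{0,1\}^2$ there are a coordinate $i^\ast$ of $f$ and an input $z$ with $f(z)=b$, $z_{i^\ast}=c$, and $f(z\oplus\{i^\ast\})=1-b$; the four cases of $(b,c)$ are matched exactly by the four edge-endpoints. This is the flexibility we need to plant a hard sub-instance in a sensitive coordinate.

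For the lower bounds, fix $b$ and let $c$ be the value of $g$ attaining $bs_c(g)=bs(g)$ (resp.\ $bs^{\ast}_c(g)=bs^{\ast}(g)$), witnessed by an input $y$ with $g(y)=c$. Pick $i^\ast,z$ from the structural fact, and build an input $X$ to $F$ by feeding $y$ into the $i^\ast$-th copy of $g$ and, into the $i$-th copy for $i\neq i^\ast$, an arbitrary input of $g$ evaluating to $z_i$ (which exists since $g$ is non-constant); then $F(X)=f(z)=b$. Because $i^\ast$ is sensitive for $f$ at $z$, every block of $g$ at $y$ sitting inside the $i^\ast$-th copy is a block of $F$ at $X$, so the optimal (fractional) packing of blocks of $g$ at $y$ lifts with the same weights to a (fractional) packing of blocks of $F$ at $X$. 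Hence $bs_b(F)\geq bs(g)$ and $bs^{\ast}_b(F)\geq bs^{\ast}(g)$.

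For the upper bounds, let $X$ be any input with $F(X)=b$, let $z$ be the induced input to the outer copy of $f$, and $y_i$ the input to the $i$-th copy of $g$. A block $B'$ of $F$ at $X$ determines the set $S(B')\subseteq[n]$ of copies whose value it flips; then $F(X\oplus B')=f(z\oplus S(B'))$, so $S(B')$ is a nonempty block of $f$ at $z$ and, for each $i\in S(B')$, the restriction of $B'$ to the $i$-th copy is a block of $g$ at $y_i$. Given a maximum-weight (fractional) packing of blocks of $F$ at $X$, fix $i\in[n]$: restricting the packing to blocks $B'$ with $i\in S(B')$ and pushing forward along the map sending $B'$ to its restriction to copy $i$ yields a (fractional) packing of blocks of $g$ at $y_i$, of weight at most $bs_{y_i}(g)\leq bs(g)$ (resp.\ $bs^{\ast}_{y_i}(g)\leq bs^{\ast}(g)$). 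Summing over $i\in[n]$ and using $|S(B')|\geq 1$ bounds the total weight of the packing of $F$ at $X$ by $n\cdot bs(g)$ (resp.\ $n\cdot bs^{\ast}(g)$). I expect the main obstacle to be the structural fact of the second paragraph — verifying all four $(b,c)$ cases and confirming that it is precisely this statement the construction requires — together with the bookkeeping in the last paragraph, namely checking that for fixed $i$ the restrictions to copy $i$ of pairwise-disjoint (or fractionally packed) blocks of $F$ genuinely form a packing of blocks of $g$ at $y_i$.
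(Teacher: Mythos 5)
Your proof is correct and follows essentially the same approach as the paper: use the non-monotone/non-anti-monotone hypothesis to plant the $i^\ast$-th copy on a sensitive edge of $f$ with the right $(b,c)$ sign for the lower bound, and push restrictions of blocks to each copy and sum for the upper bound. The only (inessential) difference is that the paper restricts to minimal blocks in the upper bound to ensure each nonempty restriction is a block, whereas you sidestep minimality by defining $S(B')$ as the set of copies whose $g$-value is actually flipped.
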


Note that the hypothesis that $f$ is non-monotone is essential.  If $f$ is the $n$-variate OR function 
then $bs_0(f^{(k)})=n^k$ while $bs_1(f^{(k)})=1$.    The hypothesis that $f$ is not antimonotone
is not essential and is included for convenience.

\begin{proof}
We only prove the claim for block sensitivity. The case of fractional block sensitivity follows by using the exact same reasoning for fractional block packings.

We start with the first inequality.  We show it for the case $b=0$, the case $b=1$ is similar.
Let $c \in \{0,1\}$ such that $bs(f^{(k)}) = bs_c(f^{(k)})= N$. Fix assignments $\alpha^0,\alpha^1$  to $I^{(k)}$ so that the selector $\vec{\alpha}:= (\alpha^0,\alpha^1)$ is $f^{(k)}$-compatible and moreover, $\alpha^c$ 
satisfies $bs_{\alpha^c}(f^{(k)}) = N$. 

Since $f$ is neither monotone nor anti-monotone, we can fix an assignment $x$ to $I$ such that $x\in f^{-1}(0)$ and flipping some index $i$ from $c$ to $1-c$ in $x$ results in an assignment $x'\in f^{-1}(1)$. Let $X$ be the assignment to $I^{(k+1)}$ defined by $X = x\circ \vec{\alpha}$. We claim that $bs_X(f^{(k+1)})\geq N$, which will prove the lower bound.

To see this, note that for any block $B$ belonging to $\mc{B}_{\alpha^c}(f^{(k)})$, we can construct a block $\mathrm{lift}(B)\in\mc{B}_{X}(f^{(k+1)})$ defined using composition as $\mathrm{lift}(B):=e_i(B^j: j\in I)$, where $e_i$ is the singleton block $\{i\}$ and $B^j = B$ for $j=i$ and $\emptyset$ otherwise. Using this method, any block packing $\mc{B}$ in $\mc{B}_{\alpha^b}(f^{(k)})$ may be ``lifted'' to a block packing $\mc{B}' = \{\mathrm{lift}(B)\ |\ B\in\mc{B}_{\alpha^b}(f^{(k)})\}$ of the same size as $\mc{B}$. Hence, $bs_X(f^{(k+1)})\geq bs_{\alpha^c}(f^{(k)}) = N$, which proves the first inequality. 

Next we prove the second inequality. 
Let $X$ be an assignment to $I^{(k+1)}$; we want to show that $bs_X(f^{(k+1)}) \leq n \cdot bs(f^{(k)})$.
Let $\mc{B}$ be any block packing in $\mc{B}_X(f^{(k+1)})$ of maximum size. We may assume that $\mc{B}$ contains minimal blocks only, that is, $\mc{B}\subseteq \partial\mc{B}_X(f^{(k+1)})$.

Let $\pi$ denote the mapping from $I^{(k+1)}$ to $I^k$ obtained by mapping $i_1,\ldots,i_{k+1}$ to $i_2,\ldots,i_{(k+1)}$.
For $i \in I$, let $U_i$ be the set of $i_1,\ldots,i_{k+1} \in I^{(k+1)}$ with $i_1=i$ and let $X_i$ be the assignment
to $I^{(k)}$ with $X_i(j_1,\ldots,j_k)=X(i,j_1,\ldots,j_k)$. 
For each block $B \in \mc{B}$, let $B_i=B \cap U_i$.  Since each $B \in \mc{B}$ is a minimal block for $f^{(k+1)}$ at $X$, 
it follows that if $B_i \neq \emptyset$ then
$\pi(B_i)$ is a block for $f^{(k)}$ at $X_i$ (otherwise $B-B_i$ would be a block for $f^{(k+1)}$ at $X$, contradicting
the minimality of $B$).  Let $\mc{B}_i=\{\pi(B_i): B \in \mc{B}, B_i \neq \emptyset\}$.  Then
$\mc{B}_i$  is a packing of blocks for $f^{(k)}$ at $X_i$.  Since for each $B \in \mc{B}$, $B_i$ is nonempty
for at least one index $i$, we have $\sum_i |\mc{B}_i| \geq |\mc{B}|$.
It follows that $n \cdot bs(f^{(k)}) \geq \sum_i bs_{X_i}(f^{(k)}) \geq bs_X(f^{(k+1)})$, as required.

\end{proof}

\begin{lemma}
\label{lemma_lim_exists}
The limit $\lim_{k\rightarrow\infty} bs(f^{(k)})^{1/k}$ exists and is finite. 
\end{lemma}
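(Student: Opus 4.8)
The plan is to show that the sequence $a_k := \log bs(f^{(k)})$ is "almost superadditive" in the sense of Corollary \ref{corollary_fekete}, and then to verify that the limit is finite via a crude upper bound. For the superadditivity, I would first use the composition rule $f^{(p+q)} = f^{(p)} \circ f^{(q)}$ together with Lemma \ref{lemma_bs_to_rbs} applied to $g_1 = f^{(p)}$ and $g_2 = f^{(q)}$: this gives, for any $b \in \{0,1\}$,
\[
bs_b(f^{(p+q)}) \;\geq\; bs^{M}_b(f^{(p)}), \qquad M := \min\{bs_0(f^{(q)}), bs_1(f^{(q)})\}.
\]
Then Lemma \ref{lemma_rbs_supadd} (with $M$ in place of $1$, taking the $M$-fold packing to be $M$ disjoint copies of a maximum ordinary packing) gives $bs^{M}_b(f^{(p)}) \geq M \cdot bs_b(f^{(p)})$. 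Taking the max over $b$ on the left and choosing $b$ to maximize $bs_b(f^{(p)})$, we get $bs(f^{(p+q)}) \geq M \cdot bs(f^{(p)})$. Finally, since $f$ is neither monotone nor anti-monotone, Lemma \ref{lemma_bs0vs1} gives $M = \min\{bs_0(f^{(q)}), bs_1(f^{(q)})\} \geq bs(f^{(q)})/n$. Combining,
\[
bs(f^{(p+q)}) \;\geq\; \frac{1}{n}\, bs(f^{(p)})\, bs(f^{(q)}),
\]
so that $a_{p+q} \geq a_p + a_q - \log n$. By Corollary \ref{corollary_fekete} with $c = \log n$, the limit $\lim_k a_k/k$ exists, hence so does $\lim_k bs(f^{(k)})^{1/k} = \exp(\lim_k a_k/k)$.

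For finiteness, I would observe that $bs(g) \leq C(g) \leq n(g)$, the number of variables, for any boolean function $g$, so $bs(f^{(k)}) \leq n^k$ and therefore $bs(f^{(k)})^{1/k} \leq n$; combined with the fact that the limit exists, it is finite (and at least $1$, since $bs(f^{(k)}) \geq 1$). I should also dispose of a degenerate case: if $\min\{bs_0(f), bs_1(f)\} \leq 1$ for every relevant power — actually the cleanest route is to note that if $f$ is non-constant then $bs(f) \geq 1$, and the inequality $a_{p+q} \geq a_p + a_q - \log n$ is valid regardless; the only subtlety is that Lemma \ref{lemma_bs_to_rbs} requires $g_1, g_2$ non-constant, which holds since $f$ (hence every $f^{(k)}$) is non-constant as it is neither monotone nor anti-monotone.

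The main obstacle is making sure the chain of applications of Lemmas \ref{lemma_bs_to_rbs}, \ref{lemma_rbs_supadd}, and \ref{lemma_bs0vs1} fits together cleanly with the right choice of the distinguished bit $b$ at each step — in particular, that after lower-bounding $bs_b(f^{(p+q)})$ by $M \cdot bs_b(f^{(p)})$ one is still free to pick $b$ achieving $bs(f^{(p)})$, which works because Lemma \ref{lemma_bs_to_rbs} holds for \emph{both} values of $b$. Everything else is a routine invocation of Fekete's lemma in the "almost superadditive" form already set up in Corollary \ref{corollary_fekete}.
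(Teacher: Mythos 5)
Your proof is correct and follows essentially the same route as the paper: derive almost-superadditivity of $\log bs(f^{(k)})$ via Lemmas \ref{lemma_bs_to_rbs}, \ref{lemma_rbs_supadd} and \ref{lemma_bs0vs1}, apply Corollary \ref{corollary_fekete}, and get finiteness from the trivial bound $bs(f^{(k)})\leq n^k$. You are slightly more explicit than the paper in isolating the step $bs^{M}_b(f^{(p)})\geq M\cdot bs_b(f^{(p)})$ as an invocation of Lemma \ref{lemma_rbs_supadd} (the paper folds this into its citation of Lemma \ref{lemma_bs_to_rbs}), but the argument is otherwise identical.
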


\begin{proof}
It clearly suffices to show that $\lim_{k\rightarrow\infty} \log(bs(f^{(k)}))/k$ exists and is finite. Finiteness is trivial, since $1\leq bs(f^{(k)})\leq n^k$ and hence the sequence $\log(bs(f^{(k)}))/k$ is bounded. To show that the limit exists, we use Corollary \ref{corollary_fekete}. To show that $\{\log(bs(f^{(k)}))\}_k$ satisfies the hypothesis of Corollary \ref{corollary_fekete}, it suffices to show that $bs(f^{(k+\ell)}) = \Omega(bs(f^{(k)})bs(f^{(\ell)}))$, where the constant in the $\Omega(\cdot)$ is independent of $k$ (but may depend on $n$). 
But by Lemma \ref{lemma_bs_to_rbs}, we have $bs(f^{(k+\ell)})\geq bs(f^{(k)}) \cdot M$, where $M = \min\{bs_0(f^{(\ell)}),bs_1(f^{(\ell)})\}$. By Lemma \ref{lemma_bs0vs1}, we have $M\geq bs(f^{(\ell)})/n$ and thus, it follows that $bs(f^{(k+\ell)}) \geq bs(f^{(k)})bs(f^{(\ell)})/n$ and therefore, by Corollary \ref{corollary_fekete}, we are done.
\end{proof}

Lemma \ref{lemma_lim_exists} is useful since we can now analyze the limit of an arbitrary subsequence of the sequence $\{bs(f^{(k)})^{1/k}\}_k$ that we are actually interested in. 

We now proceed to the proof of Theorem \ref{thm_bs_tensor}. We will need that $\min\{bs_0(f^{(k)}),bs_1(f^{(k)})\}\rightarrow\infty$ as $k\rightarrow\infty$. By Corollary \ref{corollary_bs_infty}, this holds whenever $\min\{bs_0(f),bs_1(f)\}\geq 2$. We now look at what happens when this is not the case. Without loss of generality assume that $bs_0(f)=1$ (since $f$ is non-monotone and hence non-constant, we have $\min\{bs_0(f),bs_1(f)\}\geq 1$). It can be checked that this happens if and only if $f$ is a conjunction of literals. Since $f$ is neither monotone nor anti-monotone, there must be at least one positive and one negative literal. In this case, it can be checked that $\min\{bs_0(f^{(2)}),bs_1(f^{(2)})\}\geq 2$. Thus, by Corollary \ref{corollary_bs_infty}, we see that $\min\{bs_0(f^{(2k)}),bs_1(f^{(2k)})\}\geq 2^k$ and by Lemma \ref{lemma_bs0vs1}, we have $\min\{bs_0(f^{(2k+1)}),bs_1(f^{(2k+1)})\}\geq bs(f^{(2k)})\geq 2^k$. It follows that $\min\{bs_0(f^{(k)}),bs_1(f^{(k)})\}\rightarrow \infty$ as $k\rightarrow\infty$. 

Let $L$ denote $\lim_{k\rightarrow \infty} bs^\ast(f^{(k)})^{1/k}$. As $bs(f^{(k)})\leq bs^\ast(f^{(k)})$ for each $k\geq 1$, we have $\lim_{k\rightarrow\infty}bs(f^{(k)})^{1/k}\leq L$. We now show that for any $\varepsilon\in (0,1)$, it is the case that $\lim_{k\rightarrow\infty}bs(f^{(k)})^{1/k}\geq L(1-\varepsilon)$. 

Fix any $\varepsilon\in (0,1)$. Let $\ell_0\in\naturals$ be chosen large enough so that $F = f^{(\ell_0)}$ satisfies the following conditions:
\begin{itemize}
\item $bs^{\ast}(F) \geq (L(1-\varepsilon/4))^{\ell_0}$, 
\item $n^{-1/\ell_0}\geq (1-\varepsilon/2)$.
\end{itemize}

We will show that $\lim_{k\rightarrow\infty} bs(F^{(k)})^{1/k\ell_0}\geq L(1-\varepsilon)$. Since $\lim_{k\rightarrow\infty}bs(f^{(k)})^{1/k}=\lim_{k\rightarrow\infty} bs(F^{(k)})^{1/k\ell_0}$, this will conclude the proof of Theorem \ref{thm_bs_tensor}.

Recall from Section \ref{sec:bs_def} that for any assignment $z$ to the variables of $F$, 
$bs^{\ast}_z(F) = \lim_{M\rightarrow\infty} bs^M_z(F)/M$. Thus, there exists an $m$ such that for any $M\geq m$, $bs^M(F)\geq M(L(1-\varepsilon/2))^{\ell_0}$. Since $\min\{bs_0(F^{(k)}),bs_1(F^{(k)})\}\rightarrow\infty$ as $k\rightarrow\infty$, there exists $k_0\in\naturals$ s.t. $\min\{bs_0(F^{(k)}),bs_1(F^{(k)})\}\geq m$ for each $k\geq k_0$. 

By Lemma \ref{lemma_bs_to_rbs}, for $k\geq k_0$, we have $bs(F^{(k+1)}) \geq bs^M(F)$, where $M = \min\{bs_0(F^{(k)}),bs_1(F^{(k)})\}$. Since $M\geq m$ by our choice of $k_0$ we know that $bs^M(F)\geq M(L(1 - \varepsilon/2))^{\ell_0}$. Moreover, by Lemma \ref{lemma_bs0vs1}, we know that $\min\{bs_0(F^{(k)}),bs_1(F^{(k)})\} = \min\{bs_0(f^{(\ell_0k)}),bs_1(f^{(\ell_0k)})\} \geq bs(F^{(k)})/n$. Thus, we have for $k\geq k_0$, $bs(F^{(k+1)}) \geq (L(1-\varepsilon/2))^{\ell_0}\cdot bs(F^{(k)})/n$. Iterating this inequality we obtain for any $k\geq k_0$,
\begin{align*}
bs(F^{(k)}) &\geq (L(1-\varepsilon/2))^{\ell_0(k-k_0)}\cdot bs(F^{(k_0)})/n^{k-k_0}\\
&\geq \frac{(L(1-\varepsilon/2))^{\ell_0k}}{C\cdot n^k}
\end{align*}
where $C>0$ is some quantity that is independent of $k$. Thus, we have
\begin{align*}
\lim_{k\rightarrow\infty} bs(F^{(k)})^{1/k\ell_0} 
&\geq \frac{(L(1-\varepsilon/2))}{n^{1/\ell_0}}\\
&\geq L(1-\varepsilon/2)^2 \geq L(1-\varepsilon)
\end{align*}
The second inequality above follows since $n^{-1/\ell_0}\geq (1-\varepsilon/2)$. Thus, we have shown that $\lim_{k\rightarrow\infty} bs(f^{(k)})^{1/k}=\lim_{k\rightarrow\infty} bs(F^{(k)})^{1/k\ell_0}\geq L(1-\varepsilon)$. Since $\varepsilon>0$ can be made arbitrarily small, this shows that $\lim_{k\rightarrow\infty} bs(f^{(k)})^{1/k}\geq L$ and concludes the proof of Theorem \ref{thm_bs_tensor}.

\subsection{Correcting a previous separation result}\label{sec:scottremark}

We use Theorem \ref{thm_bs_tensor} to correct and clarify a couple of remarks from Aaronson's paper \cite[Section 5]{aaronsonqcc}. 

Aaronson considers a function $f:\{0,1\}^6\rightarrow \{0,1\}$ due to Bublitz et al. \cite{bublitz} for the purposes of creating some separating examples. A short description of the function follows (the function is defined slightly differently by Bublitz et al.). The function $f(x_1,\ldots,x_6)$ is defined as the following depth 2 decision tree with parity gates: First compute $x_1 \oplus x_2 \oplus x_3 \oplus x_4$, if $0$ then output $x_1 \oplus x_2 \oplus x_5$, else output $x_1 \oplus x_3 \oplus x_6$.
%
It can be checked that $f$ has the following property (the proof of which is omitted): 

\begin{lemma}
\label{lemma_bublitzfn_props}
For every $z\in\{0,1\}^6$, $bs_z(f)=4$, $bs^\ast_z(f) = C^\ast_z(f)= 4.5$, and $C_z(f) = 5$.
\end{lemma}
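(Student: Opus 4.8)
The plan is to use the affine structure of $f$ to reduce this statement about all $64$ inputs to just two representative inputs. Write $\ell_1 = x_1\oplus x_2\oplus x_3\oplus x_4$, $\ell_2 = x_1\oplus x_2\oplus x_5$, $\ell_3 = x_1\oplus x_3\oplus x_6$, with supports $S_1 = \{1,2,3,4\}$, $S_2 = \{1,2,5\}$, $S_3 = \{1,3,6\}$, so that $f$ equals $\ell_2$ when $\ell_1 = 0$ and equals $\ell_3$ when $\ell_1 = 1$. First I would establish the following reduction. For any input $z$ and any $B\subseteq [6]$, since $\ell_i(z\oplus B) = \ell_i(z)\oplus p_i$ with $p_i := |B\cap S_i|\bmod 2$, a case split on $\ell_1(z)$ and on $p_1$ shows that $B$ is a block of $f$ at $z$ iff a condition holds that depends only on $(p_1,p_2,p_3)$ together with the pair $(\ell_1(z),\,\ell_2(z)\oplus\ell_3(z))\in\{0,1\}^2$: concretely, when $\ell_1(z) = 0$, $B$ is a block iff ($p_1 = 0$ and $p_2 = 1$) or ($p_1 = 1$ and $p_3 = 1\oplus(\ell_2(z)\oplus\ell_3(z))$), and the case $\ell_1(z)=1$ is identical with $(S_2,p_2)$ and $(S_3,p_3)$ interchanged. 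Hence $\mc{B}_z(f)$, and therefore each of $bs_z(f)$, $bs^\ast_z(f)$, $C^\ast_z(f)$, $C_z(f)$, depends on $z$ only through $(\ell_1(z),\ell_2(z)\oplus\ell_3(z))$. Moreover the coordinate permutation $\pi = (2\,3)(5\,6)$ swaps $S_2$ with $S_3$ and fixes $S_1$ setwise, and one checks directly that $x\mapsto \pi(x)\oplus e_4$ is an automorphism of $f$; it therefore gives a hypergraph isomorphism between $\mc{B}_z(f)$ for inputs with $\ell_1(z)=0$ and those with $\ell_1(z)=1$ (for the same value of $\ell_2(z)\oplus\ell_3(z)$). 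This leaves exactly two cases to verify.

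For the case $\ell_2(z)\oplus\ell_3(z)=0$ I would take $z$ to be the all-zeros input. A short computation of the one- and two-element blocks shows that the blocks at $z$ include the three singletons $\{1\},\{3\},\{5\}$ and the ``triangle'' $\{2,4\},\{2,6\},\{4,6\}$, and that none of $\{2\},\{4\},\{6\}$ is a block. Then: (i) $\{1\},\{3\},\{5\},\{2,4\}$ is a packing of size $4$, while a packing of size $5$ would consist of five pairwise disjoint non-empty blocks with sizes summing to at most $6$, forcing at least four of them to be singletons -- impossible since there are only three singleton blocks -- so $bs_z(f)=4$; (ii) giving weight $1$ to each of $\{1\},\{3\},\{5\}$ and weight $1/2$ to each triangle edge is a fractional packing of weight $4.5$ that obeys every vertex capacity, and the weight function equal to $1$ on $\{1,3,5\}$ and $1/2$ on $\{2,4,6\}$ is a fractional hitting set of weight $4.5$ (every block meets $\{1,3,5\}$ or contains at least two of $2,4,6$), so since $\nu^\ast = \tau^\ast$ we get $bs^\ast_z(f)=C^\ast_z(f)=4.5$; (iii) fixing coordinates $\{1,2,3,4,5\}$ to their values in $z$ is a certificate (it forces $\ell_1=0$ and determines $\ell_2$), while any hitting set must contain $1,3,5$ and at least two of $2,4,6$, so $C_z(f)=5$.

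The case $\ell_2(z)\oplus\ell_3(z)=1$ is handled identically with $z$ the input having a single $1$, in coordinate $5$: there the singleton blocks are $\{2\},\{4\},\{5\}$ and the triangle lives on $\{1,3,6\}$, and the same three arguments give $bs_z(f)=4$, $bs^\ast_z(f)=C^\ast_z(f)=4.5$, and $C_z(f)=5$. Together with the reduction of the first paragraph this covers every input. The conceptual content is that reduction; each base case is then a finite check on a hypergraph on $6$ vertices. The main obstacle is the bookkeeping in those two base cases: the \emph{lower} bounds ($bs^\ast_z \ge 4.5$ and especially $C_z \ge 5$, i.e. that every hitting set is forced to have size $5$) depend on having correctly enumerated the relevant blocks, so one must carry out the block enumeration carefully and confirm that the blocks listed are all the ones that bear on the argument.
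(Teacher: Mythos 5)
Your proposal is correct. Note that the paper supplies no proof of Lemma~\ref{lemma_bublitzfn_props} at all (it says ``the proof of which is omitted''), so there is nothing to compare your argument against; I simply checked it directly. The reduction is sound: writing $p_i = |B\cap S_i|\bmod 2$, the condition ``$B$ is a block at $z$'' does depend only on $(\ell_1(z),\,\ell_2(z)\oplus\ell_3(z))$, and the map $\phi(x)=\pi(x)\oplus e_4$ with $\pi=(2\,3)(5\,6)$ satisfies $\ell_1\circ\phi=\ell_1\oplus 1$, $\ell_2\circ\phi=\ell_3$, $\ell_3\circ\phi=\ell_2$, hence $f\circ\phi=f$ and $\mc{B}_{\phi(z)}(f)=\pi\bigl(\mc{B}_z(f)\bigr)$, giving the claimed hypergraph isomorphism and collapsing the four $(\ell_1,\ell_2\oplus\ell_3)$-classes to two. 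In the base case $z=0^6$ I confirmed the block list (singletons $\{1\},\{3\},\{5\}$; the triangle $\{2,4\},\{2,6\},\{4,6\}$; none of $\{2\},\{4\},\{6\}$ is a block), and your three subsequent counting arguments for $bs_z$, $bs^\ast_z=C^\ast_z$ (via the explicit fractional packing and fractional cover of weight $4.5$), and $C_z$ all go through; the case $z=e_5$ is the mirror image with singletons $\{2\},\{4\},\{5\}$ and a triangle on $\{1,3,6\}$. The only point worth flagging is that you state the two-case symmetry both via the invariant $(\ell_1,\ell_2\oplus\ell_3)$ and via the automorphism; either alone suffices (the invariant already shows the hypergraph is literally identical for inputs in the same class, and the automorphism then identifies the $\ell_1=0$ and $\ell_1=1$ classes), so you could streamline by keeping just one of the two mechanisms.
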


\begin{enumerate}
\item It is claimed that $bs(f^{(k)}) = 4^k$ and $C(f^{(k)}) = 5^k$ and thus $C(f^{(k)}) = bs(f^{(k)})^{\log_4 5}$ for every 
$k\in\naturals$. However, it follows from Theorem \ref{thm_bs_tensor} that for a boolean function $g$, $\lim_{k\rightarrow\infty} (bs(g^{(k)}))^{1/k}=\lim_{k\rightarrow\infty}(bs^\ast(g^{(k)}))^{1/k}$ as $k\rightarrow \infty$, which may in general be significantly larger than $bs(f)^k$. In this case, by Lemma \ref{lemma_bublitzfn_props} and Theorem \ref{thm:m^lim}, it follows that $(bs^\ast)^{\lim}(f)= 4.5$ and hence, by Theorem \ref{thm_bs_tensor}, for any $\varepsilon > 0$ and large enough $k\in\naturals$ depending on $\varepsilon$, $bs(f^{(k)}) \geq (4.5 - \varepsilon)^k$. In particular, this example only yields $\mathrm{crit}(C,bs)\geq \log_{4.5} 5$, which is smaller than the $\log_4 5$ separation claimed.

\item It is also claimed that the family $f^{(k)}$ yields polynomial separations between the block sensitivity $bs(\cdot)$ and $RC(\cdot)$, where $RC(F)$ for any boolean function $F$ is the \emph{randomized certificate complexity of $f$} (see Section \ref{sec:CstarvsRC}). However, by Theorem \ref{thm_bs_tensor}, it follows that such an approach (irrespective of the base function $f$) can never yield a polynomial gap between $bs(\cdot)$ and $RC(\cdot)$, since
\[
bs^{lim}(f) = (C^\ast)^{\lim}(f) = \lim_{k\rightarrow \infty} (RC(f^{(k)}))^{1/k}
\] 
where the last equality follows from Claim \ref{claim_RCvsCfrac}.
\end{enumerate}

\section{Separating examples} \label{sec:examples}

In this section we prove a tight lower bound of 2 on the critical exponent for $C(f)$ and $bs^*(f)$ (and the
same tight lower bound holds for the critical exponent for $C(f)$ and $bs^*(f)$.)  We exhibit two
different families of boolean functions that attain this separation.
We also exhibit a family of boolean functions that proves a lower bound of 3/2 on the critical exponent of $bs^*(f)$ and $bs(f)$.

One of our examples uses iterated composition.  The other two examples
are obtained by composing the $n$-bit OR function $OR_n$ with a suitable function $g$.  We will need the
following simple fact:

\begin{proposition}
\label{OR composition}
Let $g$ be a non-constant boolean function and $f=OR_n \circ g$.  Then for complexity measure $m \in \{C,bs,bs^*\}$
we have:

\begin{eqnarray*}
m_1(f)&=&m_1(g)\\
m_0(f)& = & n \cdot m_0(g).
\end{eqnarray*}

\end{proposition}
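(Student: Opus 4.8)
The plan is to reduce both equalities to a description of the minimal blocks of $f = OR_n \composedwith g$ at an input and then read off $C$, $bs$, and $bs^\ast$ from it. Write an input $x$ to $f$ as $x=(x^1,\ldots,x^n)$, with $x^i$ the input to the $i$-th copy of $g$, and set $y_i = g(x^i)$, so $f(x) = OR_n(y_1,\ldots,y_n)$. Applying Claim B to the depth-$2$ composition $OR_n\composedwith g$ (or checking directly), together with the facts that the minimal blocks of $OR_n$ at $\vec 0$ are the singletons $\{i\}$ and that the unique minimal block of $OR_n$ at a point with support $S\neq\emptyset$ is $S$ itself, one gets: if $f(x)=0$ then $\partial\mc{B}_x(f)$ is the family of sets $B$ that lie inside a single block $i$ and whose restriction to that block is a minimal block of $g$ at $x^i$; if $f(x)=1$, with $S := \{i : g(x^i)=1\}$, then $\partial\mc{B}_x(f)$ is the family of sets $\bigcup_{i\in S} B_i$ where each $B_i$ is a minimal block of $g$ at $x^i$ inside block $i$. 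I will also use the elementary fact that distinct blocks involve disjoint variable sets, so any block of $f$ at $x$ restricts, inside each block $i$ with $g(x^i)=1$, to a block of $g$ at $x^i$.

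For the $0$-side the minimal blocks decouple across the $n$ copies of $g$: a set hits $\partial\mc{B}_x(f)$ (respectively fractionally hits it, respectively restricts to a block packing) iff its restriction to each block $i$ is a witness (respectively fractional witness, respectively block packing) for $g$ at $x^i$, and the relevant weight/size is additive over $i$. Hence $m_x(f) = \sum_{i=1}^n m_{x^i}(g)$ for each $m\in\{C, bs^\ast, bs\}$ (using $bs^\ast_z = C^\ast_z$, and for $bs$ first shrinking each block in a packing to a minimal block). Maximizing over $x\in f^{-1}(0)$ amounts to maximizing each $x^i$ independently over $g^{-1}(0)$ (any such tuple gives $f(x)=0$), so $m_0(f) = n\cdot m_0(g)$.

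For the $1$-side fix $x$ with $f(x)=1$ and $S=\{i: g(x^i)=1\}\neq\emptyset$. For $m=C$: a set $W$ hits $\partial\mc{B}_x(f)$ iff for some $i\in S$ its restriction to block $i$ is a witness for $g$ at $x^i$ (if $W$ missed a minimal block of $g$ at $x^i$ in \emph{every} $i\in S$, we could assemble a minimal block of $f$ avoiding $W$), so the least size is $\min_{i\in S} C_{x^i}(g)$. For $m=bs$: by the elementary fact above, disjoint blocks of $f$ restrict to disjoint blocks of $g$ at $x^i$ for each $i\in S$, giving $bs_x(f)\le \min_{i\in S} bs_{x^i}(g)$, and lifting a maximum block packing of $g$ in the cheapest block $i\in S$ gives equality. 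For $m=bs^\ast=C^\ast$: since the $B_i$ are supported on disjoint variable sets, a fractional witness $\beta$ must satisfy $\sum_{i\in S}\beta(B_i)\ge 1$ for all choices of minimal blocks $B_i$, equivalently $\sum_{i\in S} c_i\ge 1$ where $c_i$ is the minimum of $\beta(B)$ over minimal blocks $B$ of $g$ at $x^i$ inside block $i$; the least weight inside block $i$ achieving coverage $c_i$ is $c_i\cdot C^\ast_{x^i}(g)$ (scale an optimal fractional witness, optimality by scaling back down), so $C^\ast_x(f) = \min\{\sum_{i\in S} c_i C^\ast_{x^i}(g) : c_i\ge 0,\ \sum_{i\in S} c_i\ge 1\} = \min_{i\in S} C^\ast_{x^i}(g)$. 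In all three cases $m_x(f) = \min_{i\in S} m_{x^i}(g)\le m_1(g)$, and taking $x=(y,z,\ldots,z)$ with $g(y)=1$ attaining $m_1(g)$ and $g(z)=0$ (possible since $g$ is non-constant) makes $S=\{1\}$ and $m_x(f)=m_1(g)$; hence $m_1(f)=m_1(g)$.

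The only step that is not purely routine is the $1$-side computation of $bs^\ast = C^\ast$: it is tempting to assume $C^\ast_x(f)=\min_{i\in S} C^\ast_{x^i}(g)$ by analogy with $C$ and $bs$, but establishing it requires the little linear-programming argument above, namely that splitting the single required "unit of coverage" among several copies of $g$ never beats putting it all on the cheapest copy, which itself uses the scaling behaviour of fractional witnesses. Everything else is bookkeeping with the minimal-block description.
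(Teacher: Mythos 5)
Your proof is correct and follows the same route as the paper's, though you spell out the minimal-block characterization and the case analyses that the paper dismisses as ``easy to check''; in particular, the short linear-programming argument showing $C^\ast_x(f)=\min_{i\in S}C^\ast_{x^i}(g)$ on the $1$-side (that the single required unit of coverage is best concentrated on the cheapest copy, using the scaling-and-capping behaviour of fractional witnesses) is a genuine detail the paper omits, and the $0$-side additivity falls out cleanly from the decomposition of $\partial\mc{B}_x(f)$ rather than being asserted.

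One small imprecision in the $1$-side $bs$ computation: ``lifting a maximum block packing of $g$ in the cheapest block $i\in S$'' does not produce blocks of $f$ when $|S|>1$, since any block of $f$ at such an $x$ must intersect \emph{every} $J_i$ with $i\in S$ in a block of $g$ at $x^i$; to realize the equality $bs_x(f)=\min_{i\in S}bs_{x^i}(g)$ one should take $k:=\min_{i\in S}bs_{x^i}(g)$ pairwise disjoint blocks of $g$ at $x^i$ for each $i\in S$ (possible by definition of $k$) and form the $k$ unions across $S$. This does not affect your conclusion, because only the upper bound $bs_x(f)\le\min_{i\in S}bs_{x^i}(g)\le bs_1(g)$ is used for general $x$, and the specific $x=(y,z,\ldots,z)$ you use to get $m_1(f)\ge m_1(g)$ has $|S|=1$, where the single-copy lifting is valid.
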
 

\begin{proof}
Let $I$ be the index set for the variables of $g$, so $J=[n] \times I$ is the index set for
the variables of $f$.  For $i \in [n]$, write $J_i$ for the index subset $\{i\} \times I$.

First we show $m_1(f) = m_1(g)$.
The function
$g$ is a subfunction of $f$ (i.e., can be obtained from $f$ by restricting some variables)
so $m_1(f) \geq m_1(g)$ for each of the above complexity measures $m$.  
For the reverse inequality, we argue that $C_1(f) \leq C_1(g)$, the
argument for the other two measures is similar.  Let $\alpha \in g^{-1}(1)$ be an input
for which $C_{\alpha}(g)$ is maximum.  Construct an input $\beta$ for $f$
by fixing the variables in $J_n$ according to $\alpha$ and for each $i \in [n-1]$
fix the variables in $J_i$ to some  input $y$ for $g$ such that $g(y)=0$.
It is easy to check that  $C_1(f) \leq C_{\beta}(f) = C_{\alpha}(g) =C_1(g)$.

Next we show that $m_0(f) = n \cdot m_0(g)$.   For this, write an assignment to the variables of  $f$
as $\alpha^1,\ldots,\alpha^n$ where each $\alpha^i$ is an assignment to the variables of $g$.  
We have $f(\alpha^1,\ldots,\alpha^n)=0$
if and only if $g(\alpha^1)=\cdots = g(\alpha^n)=0$.  It is easy to check that
for each of the measures $m$ under consideration, if $g(\alpha^1)= \cdots = g(\alpha^n)=0$ then
$m_{\alpha^1,\ldots,\alpha^n}(f)=m_{\alpha^1}(g) + \cdots + m_{\alpha^n}(g)$.  Thus
an input in $f^{-1}(0)$ that maximizes $m_{\alpha^1,\ldots,\alpha^n}(f)$ is one
for which $\alpha^1=\cdots = \alpha^n=\alpha$, where $\alpha$ satisfies $m_0(g)=m_{\alpha}(g)$.
This gives $m_0(f)=n \cdot m_0(g)$. 
 
\end{proof} 
    
\subsection{Achieving quadratic separation between $C(f)$ and $bs(f)$}
  \subsubsection{A Probabilistic Construction}\label{sec:example1}

In this section we construct a sequence of $n$-variate functions $g_n$ (for $n$ sufficiently large) such that $C_0(g_n)=\Omega(n)$
and $bs_0(g_n)=O(1)$.    We then define $f_n=OR_n \circ g_n$.  By Proposition
\ref{OR composition}, we have $C(f_n) \geq C_0(f_n) = n \cdot C_0(g_n) = \Omega(n^2)$,
while $bs(f_n) \leq bs^*(f_n) \leq \max(bs^*_0(f_n), bs^*_1(f_n)) \leq \max(n bs^*_0(g_n),bs^*_1(g_n)) = O(n)$.
   
This will prove:

   \begin{theorem}
  \label{thm_ex1}
    For every $n\in\naturals$ sufficiently large, there is a function $f:\{0,1\}^{n^2}\rightarrow\{0,1\}$ such that $bs(f) \leq bs^\ast(f) = O(n)$ and $C(f) = \Omega(n^2)$.
  \end{theorem}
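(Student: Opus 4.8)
The plan is to take $f_n = OR_n \circ g_n$ for a suitable $n$-variate function $g_n$, so that $f_n$ has $n^2$ variables. By Proposition \ref{OR composition}, for $m \in \{C, bs^*\}$ we have $m_0(f_n) = n\cdot m_0(g_n)$ and $m_1(f_n) = m_1(g_n) \le n$. Hence it suffices to construct $g_n$ with
\[
C_0(g_n) = \Omega(n)\qquad\text{and}\qquad bs^*_0(g_n) = O(1),
\]
since then $C(f_n) \ge C_0(f_n) = n\cdot C_0(g_n) = \Omega(n^2)$, while $bs^*(f_n) = \max\{n\cdot bs^*_0(g_n),\, bs^*_1(g_n)\} = O(n)$ (using $bs^*_1(g_n)\le n$ and $bs^*_0(g_n)=O(1)$), and finally $bs(f_n) \le bs^*(f_n) = O(n)$ by the chain $bs \le bs^* = C^* \le C$. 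Recalling that $C_0(g) = \tau(\partial\mc{B}_{x_0}(g))$ and $bs^*_0(g) = C^*_0(g) = \tau^*(\partial\mc{B}_{x_0}(g))$ for a $0$-input $x_0$, everything is governed by a single object: the min-block hypergraph of $g_n$ at $x_0$.

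The key observation I would use is that this hypergraph is essentially unconstrained. Fix $x_0$ to be the all-zeros input; for any $g$ with $g(x_0)=0$ the blocks of $g$ at $x_0$ are precisely the sets $B$ with $g(x_0\oplus B)=1$, i.e.\ the support sets of the inputs in $g^{-1}(1)$. Therefore, starting from any antichain $\mc{H}=\mc{H}_n$ of nonempty subsets of $[n]$ and defining $g_n(y) := 1$ iff $\mathrm{supp}(y)\in\mc{H}_n$, we obtain a function with $\partial\mc{B}_{x_0}(g_n) = \mc{H}_n$ exactly; the crucial feature is that we put only the \emph{minimal} witnessing sets into $g_n^{-1}(1)$, which makes $g_n$ non-monotone (a monotone $g_n$ would have $bs=C$ and be useless here). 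Thus the theorem reduces to a purely combinatorial statement: for all large $n$ there is an antichain $\mc{H}_n$ on $[n]$ with $\tau(\mc{H}_n) = \Omega(n)$ and $\tau^*(\mc{H}_n) = O(1)$.

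Such hypergraphs are easy to produce. The cleanest choice is $\mc{H}_n := \binom{[n]}{\lceil n/3\rceil}$, all $\lceil n/3\rceil$-subsets of $[n]$. This is an antichain; every edge has size $\ge n/3$, so the uniform weight function $w\equiv 3/n$ (which is $[0,1]$-valued once $n\ge 3$) is a fractional hitting set of total weight $3$, giving $\tau^*(\mc{H}_n)\le 3$ and $bs^*_0(g_n) = C^*_0(g_n) \le 3$. Conversely a hitting set $S$ must meet every $\lceil n/3\rceil$-subset, so $[n]\setminus S$ contains no edge and $|S| \ge n - \lceil n/3\rceil + 1$; hence $C_0(g_n) = \tau(\mc{H}_n) = \Omega(n)$. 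In keeping with the section heading one may instead take $\mc{H}_n$ to be a random family of $M = n\cdot 2^{\Theta(n)}$ subsets of size $\lceil n/3\rceil$: the bound $\tau^*\le 3$ is unchanged, and a union bound over the $\le 2^n$ candidate sets $S$ of size $\le n/3$ — each disjoint from a fixed random edge with probability at least $2^{-\Theta(n)}$ — shows that with positive probability every such $S$ misses some edge, so again $\tau(\mc{H}_n) > n/3$.

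I do not anticipate a genuine obstacle. Given Proposition \ref{OR composition} and the inequality chain $bs\le bs^* = C^* \le C$, the entire argument is the observation that the min-block hypergraph at a $0$-input can be taken to be an arbitrary antichain, followed by exhibiting an antichain with a linear gap between $\tau$ and $\tau^*$, which is elementary. The only fussy points are checking that $g_n$ is non-constant so that Proposition \ref{OR composition} applies, and keeping track of the $[0,1]$-valued requirement in the definition of a fractional witness (dispatched by the observation that $3/n \le 1$).
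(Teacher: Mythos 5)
There is a genuine gap in the reduction. Recall from Section~\ref{sec:complexmeasures} that $bs^*_0(g) = \max_{x\in g^{-1}(0)} bs^*_x(g)$ is a maximum over \emph{all} $0$-inputs, not the value at a single chosen one. To \emph{lower}-bound $C_0(g_n)$ it is indeed enough to exhibit one hard $0$-input, and the all-zeros input works; but to \emph{upper}-bound $bs^*_0(g_n)$ you must control the block hypergraph $\partial\mc{B}_x(g_n)$ at every $x\in g_n^{-1}(0)$. Your reduction to ``a single object: the min-block hypergraph of $g_n$ at $x_0$'' quietly drops this requirement, and the subsequent combinatorial statement about $\tau$ and $\tau^*$ of a single antichain is therefore not equivalent to what the theorem needs.

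The explicit construction makes the gap concrete. With $\mc{H}_n=\binom{[n]}{\lceil n/3\rceil}$ you get $g_n(y)=1$ iff $|y|=\lceil n/3\rceil$. Now take the $0$-input $x$ with $|x|=\lceil n/3\rceil-1$. Every singleton $\{i\}$ with $x_i=0$ is a block of $g_n$ at $x$ (flipping it lands on a $1$-input), and there are $n-\lceil n/3\rceil+1=\Omega(n)$ such pairwise-disjoint singletons, so $bs_x(g_n)=\Omega(n)$ and a fortiori $bs^*_0(g_n)=\Omega(n)$, directly contradicting the bound $bs^*_0(g_n)\leq 3$ you claim. (Indeed even the sensitivity $s_x(g_n)$ is already $\Omega(n)$.) The random version of your construction has the same unaddressed issue: your union-bound argument only certifies $\tau(\mc{H}_n)$ and $\tau^*(\mc{H}_n)$, i.e.\ quantities at the all-zeros input; nothing in the argument controls blocks at other $0$-inputs.

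The paper's construction does control every $0$-input, and the mechanism for doing so is exactly the ingredient your proposal is missing: the $1$-inputs $x_1,\ldots,x_N$ are chosen so that every pair has Hamming distance at least $n/100$. Then at any $0$-input $x$, two distinct blocks $B,B'$ point to two distinct $1$-inputs at distance $\geq n/100$, forcing $|B|+|B'|\geq n/100$; hence at most one block can be small, and both $bs_x$ and $bs^*_x$ are $O(1)$ for all $x\in g^{-1}(0)$. Meanwhile there are enough $1$-inputs ($N=2^{n/50}$) to hit every subcube of codimension $n/100$, giving $C_0(g)=\Omega(n)$. If you want to salvage your version, you would have to add a pairwise-distance condition on the chosen $\lceil n/3\rceil$-sets and re-derive the block-sensitivity bound at every $0$-input; at that point it is essentially the paper's probabilistic construction restricted to one weight level, not a genuinely different argument.
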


Let us write  $\delta(x,y)$ to denote the Hamming distance between $x,y \in \{0,1\}^n$.   
   We define $g=g_n:\{0,1\}^n \rightarrow \{0,1\}$ as follows 
(we view $n$ as being sufficiently large). Choose $x_1,\dots,x_N \in \{0,1\}^n$ uniformly at random (with replacement) with $N = 2^{n/50}$. We set $g(x_i) = 1$ for each $i$, and $g(x) = 0$ otherwise.
   \begin{claim}
    With high probability, for all $i,j$ distinct $\delta(x_i,x_j) \geq \frac{n}{100}$.
   \end{claim}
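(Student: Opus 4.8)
The plan is a straightforward union bound over pairs combined with a Chernoff-type tail estimate. First I would fix a pair of distinct indices $i,j$ and observe that since $x_i$ and $x_j$ are chosen independently and uniformly from $\{0,1\}^n$, the bitwise XOR $x_i \oplus x_j$ is itself uniformly distributed on $\{0,1\}^n$; hence $\delta(x_i,x_j)$ has the same distribution as the number of ones in a uniform random $n$-bit string, i.e.\ a $\mathrm{Binomial}(n,1/2)$ random variable.

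Next I would bound the probability that this binomial is small. The number of points of $\{0,1\}^n$ within Hamming distance $n/100$ of a fixed point is $\sum_{k \le n/100}\binom{n}{k} \le 2^{H(1/100)\,n}$, where $H(p) = -p\log_2 p - (1-p)\log_2(1-p)$ is the binary entropy function and we use the standard estimate $\sum_{k \le \alpha n}\binom{n}{k}\le 2^{H(\alpha)n}$ valid for $\alpha \le 1/2$. Since $H(1/100) < 1/10$, this yields $\prob{}{\delta(x_i,x_j) < n/100} \le 2^{-(1 - H(1/100))\,n} \le 2^{-9n/10}$ for $n$ sufficiently large. (Equivalently, one can invoke a Chernoff bound for $\mathrm{Bin}(n,1/2)$ directly to get a bound of the form $2^{-cn}$ for a fixed $c>0$; the exact constant is immaterial.)

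Finally I would take a union bound over all $\binom{N}{2} < N^2 = 2^{n/25}$ unordered pairs. The probability that some pair $i \ne j$ has $\delta(x_i,x_j) < n/100$ is then at most $2^{n/25}\cdot 2^{-9n/10} = 2^{-(9/10 - 1/25)\,n}$, which tends to $0$ as $n \to \infty$. Hence with high probability every pair satisfies $\delta(x_i,x_j)\ge n/100$, as claimed.

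There is no real obstacle here; the only point to verify is that the exponent $n/25$ coming from the number of pairs is dominated, with room to spare, by the exponent $\approx 9n/10$ from the single-pair tail bound. This slack is exactly why $N$ was chosen to be $2^{n/50}$ rather than something larger, and the argument would go through for any $N = 2^{cn}$ with $c$ a small enough constant.
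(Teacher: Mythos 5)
Your argument is correct and matches the paper's proof in all essentials: both fix a pair, bound the probability that $\delta(x_i,x_j)<n/100$ by estimating the size of a Hamming ball of radius $n/100$ relative to $2^n$, and then union-bound over the $\binom{N}{2}$ pairs. The only cosmetic difference is that you use the binary-entropy bound $\sum_{k\le \alpha n}\binom{n}{k}\le 2^{H(\alpha)n}$ where the paper uses the cruder estimate $2\binom{n}{n/100}\le 2(100e)^{n/100}$; both yield the same $2^{-9n/10}$ single-pair bound.
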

   \begin{proof}
    Let $A_{i,j}$ denote the event $\delta(x_i,x_j) < \frac{n}{100}$. Let $x$ be a fixed point in $\{0,1\}^n$ and $B(x,r)$ denote the Hamming ball of radius $r$ and center $x$. Then $|B(x, r)| = \sum\limits_{i=0}^{r} \binom{n}{i}.$ Thus we have
    \[B\left(x,\frac{n}{100}\right) < 2\binom{n}{n/100} \leq 2{(100e)}^{n/100} < 2^{n/10}.\]
    These inequalities imply that
    \[\mathbf{P}(A_{i,j}) = \frac{B\left(x,\frac{n}{100}\right)}{2^{n}} < 2^{-9n/10}.\]
    By the union bound the hypothesis fails with probability at most
    \[2^{-9n/10} \binom{N}{2} = o(1).\]
   \end{proof}
   If the hypothesis of the claim holds and $g(x) = 0$, then all but possibly one of the blocks for $g$ at $x$ will have size at least $\frac{n}{200}$. Thus, at most $200$ blocks can be packed and $bs_0(g) \leq 200$. Likewise, this bound on the size of blocks implies that $bs^{\ast}_0(g) \leq 200$.
   
   We now argue that all sufficiently large subcubes of $\{0,1\}^n$ will contain a $1$ of $g$ almost surely. 
   
   \begin{claim}
   With high probability, $C_0(g) \geq \frac{n}{100}$
   \end{claim}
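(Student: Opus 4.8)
## Proof Proposal for the Claim that $C_0(g) \geq n/100$ with high probability

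The plan is to show that a certificate for $g$ at a $0$-input $x$ must be large because, unless $x$ is close to one of the random points $x_i$, the subcube obtained by fixing few coordinates of $x$ is large enough to contain one of the $x_i$ with high probability. Recall that a certificate for $g$ at $x$ is a set $S$ of coordinates such that every $y$ agreeing with $x$ on $S$ also has $g(y) = g(x) = 0$; equivalently, the subcube $\{y : y|_S = x|_S\}$ contains no $x_i$. So it suffices to show that with high probability, for \emph{every} $x \in g^{-1}(0)$ and every coordinate set $S$ with $|S| < n/100$, the subcube fixing $x$ on $S$ contains some $x_i$.

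First I would bound the number of relevant (subcube, input) pairs by a union bound. A subcube is determined by a set $S$ of at most $n/100 - 1$ coordinates together with an assignment to those coordinates, so there are at most $\binom{n}{n/100}2^{n/100} \le (100e)^{n/100}2^{n/100} < 2^{n/10}$ such subcubes — call this count $Q$. (One must also observe that if the claim fails, it fails at a \emph{maximal} such subcube, i.e.\ one of dimension exactly $n - |S|$ with $|S|$ as large as allowed; fixing fewer coordinates only makes the subcube larger, so it is enough to handle $|S|$ exactly $\lceil n/100 \rceil - 1$, and in fact any subcube of codimension $< n/100$ contains one of this maximal size, so the union bound over the $Q$ maximal ones suffices.) Each such subcube $V$ has dimension at least $n - n/100 = 99n/100$, hence $|V| \ge 2^{99n/100}$, so the probability that a single uniformly random point $x_i$ misses $V$ is $1 - |V|/2^n \le 1 - 2^{-n/100}$. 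Since the $x_i$ are independent, the probability that \emph{all} $N = 2^{n/50}$ of them miss $V$ is at most $(1 - 2^{-n/100})^N \le \exp(-N \cdot 2^{-n/100}) = \exp(-2^{n/50 - n/100}) = \exp(-2^{n/100})$.

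Then I would combine these by the union bound: the probability that \emph{some} maximal subcube of codimension $< n/100$ is missed by all the $x_i$ is at most
\[
Q \cdot \exp(-2^{n/100}) \le 2^{n/10}\exp(-2^{n/100}) = o(1),
\]
since the doubly-exponential decay $\exp(-2^{n/100})$ overwhelms the single-exponential $2^{n/10}$. On the complementary event (which has probability $1 - o(1)$), every subcube of codimension $< n/100$ contains some $x_i$, i.e.\ a point with $g$-value $1$. Hence no set $S$ with $|S| < n/100$ can be a certificate for $g$ at any $0$-input $x$, so $C_0(g) \ge n/100$, as claimed. (I would also note that this high-probability event can be intersected with the one from the previous claim — pairwise distance $\ge n/100$ — so that a single random $g$ simultaneously has $C_0(g) \ge n/100$ and $bs_0(g), bs^*_0(g) \le 200$.)

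I do not anticipate a serious obstacle here; the argument is a standard first-moment / union-bound calculation, and the only point requiring a little care is the bookkeeping that reduces "every subcube of codimension $< n/100$" to a union bound over a controlled number ($< 2^{n/10}$) of maximal subcubes, together with checking that the doubly-exponential term really does dominate. One should double-check the constants: with $N = 2^{n/50}$ and codimension threshold $n/100$, we need $N \gg 2^{n/100}$ (true, since $n/50 > n/100$) so that $(1 - 2^{-n/100})^N$ is doubly-exponentially small, and we need that doubly-exponential smallness to beat the count $2^{n/10}$ of subcubes (true). These slacks are comfortable, so the stated constants work.
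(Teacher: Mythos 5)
Your argument is correct and is essentially identical to the paper's: both reduce the statement to showing that, with high probability, every subcube of codimension about $n/100$ contains some $x_i$, bound the miss probability for a single subcube by $(1-2^{-n/100})^N \le \exp(-2^{n/100})$, and finish with a union bound over the at most $\binom{n}{n/100}2^{n/100}$ such subcubes. The only cosmetic differences are that the paper unions over subcubes of codimension exactly $n/100$ and uses the cruder count $2^{2n}$, while you union over the maximal-codimension subcubes with fixed set of size $\lceil n/100\rceil - 1$ and use the sharper count $2^{n/10}$; both are comfortably dominated by the doubly-exponential decay.
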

   \begin{proof}
   Its enough to show that every subcube of co-dimension $\frac{n}{100}$ will contain a $y$ such that $g(y) = 1$. For each $S$ which is a subcube of co-dimension $\frac{n}{100}$, denote $A_S$ as the event $g(x) = 0$ for all $x \in S$. Then
    \[ \mathbf{P} ( A_S) \leq (1 - 2^{-n/100})^{N} < \text{exp}(-\frac{N}{2^{n/100}}) = \text{exp}(-2^{n/100}) \]
    There are $\binom{n}{n/100} 2^{n/100} < 2^{2n}$ subcubes of co-dimension $\frac{n}{100}$. Thus by union bound the hypothesis fails with probability at most \[ \text{exp}(-2^{n/100}) 2^{2n} = o(1).\]
    
    \end{proof}
    We have shown, for sufficiently large $n$, that with high probability a random function $g$ satisfies  $bs_0^{\ast}(g) \leq 200$ and $C_0(g) \geq \frac{n}{100}$. Thus for each $n$ sufficiently large, there exists a function $g_n$ with this property.

  \subsubsection{A Construction Using Iterated Composition}\label{sec:example2}

    In this section we construct a function $f$ on $n$ variables for which $C^{lim}(f) \geq \frac{n}{2}$ and $(C^*)^{\lim}(f) \leq 4 \sqrt{n}$. For any $\epsilon >0$, we may choose $n$ large enough to conclude that crit$(C^*,C) \geq 2-\epsilon$.
    
   Let $d,k,n$ be positive integers such that $n \geq k \geq d$, $d \mid k$, and $k \mid n$. We define $f:\{0,1\}^n \rightarrow \{0,1\}$ to be the following boolean function on $n$ variables:
    
View the $n$ indices of the input $x$ as being divided into $\frac{n}{k}$ disjoint groups, with each group containing $k$ indices. $f$ accepts if and only if $|x| \geq d$ and all the 1's in $x$ can be found in a single group. Note that $f(x) = 1$ implies $|x| \leq k$. 

Although $f$ shows no separation between $bs(f)$ and $C(f)$, the key is that both the zero and one certificate complexity for $f$ are large, while the zero block sensitivity is small. Also, any $1$-assignment for $f$ contains many 0 indices.

 In the following analysis, we assume $n$ is an even perfect square and set $k:= 2\sqrt{n}$ and $d:=\sqrt{n}$. We wish to bound $C^{lim}(f)$ and $(C^*)^{\lim}(f)$. By Theorem \ref{thm:m^lim}, it is enough to bound $\widehat{C}(f)$ and $\widehat{C^*}(f)$ instead.
 \begin{claim} \label{claim2}
 For the boolean function $f:\{0,1\}^n \rightarrow \{0,1\}$ defined above we have:
 \[\widehat{C^*}(f) \leq 4 \sqrt{n}.\]
 \end{claim}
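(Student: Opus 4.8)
The plan is to apply the eigenvalue characterization of Theorem~\ref{thm:m^lim}. Since the fractional witness assemblage $\mc{W}^*$ is well behaved (Lemma~\ref{assemblage comp}), $\widehat{C^*}(f)=\max_{\vec{\alpha}}\min_{\vec{w}}\rho(M_{\vec{\alpha},\vec{w}})$, the maximum over $f$-compatible selectors $\vec{\alpha}=(\alpha^0,\alpha^1)$ and the minimum over $\vec{w}=(w^0,w^1)$ with $w^b\in\mc{W}^*_{\alpha^b}(f)$. So it is enough to produce, for each $\vec{\alpha}$, such a $\vec{w}$ with $\rho(M_{\vec{\alpha},\vec{w}})\le 4\sqrt{n}$. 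I would use the formula $\rho\big(\left[\begin{smallmatrix} a & b \\ c & d\end{smallmatrix}\right]\big)=\tfrac{(a+d)+\sqrt{(a-d)^2+4bc}}{2}$ for a nonnegative $2\times 2$ matrix; writing $M_{st}$ for the $(s,t)$-entry of $M_{\vec{\alpha},\vec{w}}$ (so $M_{s0}$ is the $w^s$-weight on the $0$-coordinates of $\alpha^s$ and $M_{s1}$ the $w^s$-weight on its $1$-coordinates), it then suffices to arrange $M_{00}\le\sqrt{n}+1$, $M_{11}\le\sqrt{n}$, $M_{01}\le 2$, $M_{10}\le n$: these give $(M_{00}-M_{11})^2\le n$ and $4M_{01}M_{10}\le 8n$, hence $\rho\le\tfrac{(2\sqrt n+1)+3\sqrt n}{2}<4\sqrt n$. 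The $1$-side is easy: let $S\subseteq G_1$ be the support of $\alpha^1$ (contained in one group $G_1$) with $s=|S|\in[d,k]$; the minimal blocks of $f$ at $\alpha^1$ are the $(s-d+1)$-subsets of $S$ (dropping the support below $d$) together with all singletons $\{j\}$, $j\notin G_1$ (splitting the support across two groups), so $w^1:=1$ on $[n]\setminus G_1$, $\tfrac{1}{s-d+1}$ on $S$, $0$ on $G_1\setminus S$ is a fractional witness with $M_{10}=n-k<n$ and $M_{11}=\tfrac{s}{s-d+1}\le d=\sqrt n$ (as $(d-1)(s-d)\ge0$).

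For the $0$-side, let $R$ be the support of $\alpha^0$, let $m^*:=|R\cap G^*|$ for a group $G^*$, and $m_{\max}:=\max_{G^*}m^*$. I would first work out that the blocks of $f$ at $\alpha^0$ are exactly those obtained by choosing a target group $G^*$, deleting all $1$'s outside $G^*$, and (if $m^*<d$) adding $d-m^*$ new $1$'s inside $G^*$; in particular the minimum block size is $\beta^*:=\min_{G^*}\big((|R|-m^*)+\max(0,d-m^*)\big)$. Then I would split into three cases. (i) If $m_{\max}\ge d$: since $f(\alpha^0)=0$ there is a $1$ outside $G_{\max}$, so pick $a\in R\setminus G_{\max}$ and $b\in R\cap G_{\max}$ and let $w^0$ be $1$ on $\{a,b\}$ and $0$ elsewhere. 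Every minimal block (consolidating to some $G^*$) contains $a$ when $G^*$ is not the group of $a$, and contains $b$ otherwise, so $w^0$ is a witness with $M_{00}=0$, $M_{01}=2$. (ii) If $m_{\max}<d$ and $\beta^*\ge\sqrt n$: take $w^0\equiv 1/\beta^*$; every block has size $\ge\beta^*$, and $\beta^*=|R|+d-2m_{\max}\ge d$ forces $|R|<2\beta^*$, so $M_{00}\le n/\beta^*\le\sqrt n$ and $M_{01}=|R|/\beta^*<2$. (iii) If $m_{\max}<d$ and $\beta^*<\sqrt n=d$: then $|R|+d-2m_{\max}<d$, so $m_{\max}>|R|/2$ and $G_{\max}$ is the unique group holding more than half the $1$'s; let $w^0$ put weight $\tfrac{1}{d-m_{\max}}$ on each index of $G_{\max}\setminus R$ and $\tfrac{1}{m_{\max}}$ on each index of $R\cap G_{\max}$. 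The first part hits all blocks targeting $G_{\max}$; for any other target $G^*$ we have $R\cap G_{\max}\subseteq R\setminus G^*$, so $w^0(R\setminus G^*)\ge1$ and the block is hit; hence $w^0$ is a witness with $M_{01}=1$ and $M_{00}=\tfrac{k-m_{\max}}{d-m_{\max}}\le k-d+1=\sqrt n+1$. In all three cases the four entry bounds from the first paragraph hold, giving $\rho(M_{\vec{\alpha},\vec{w}})<4\sqrt n$ and hence the claim.

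I expect the main obstacle to be case (iii) of the $0$-side, together with pinning down exactly which of the candidate blocks at a $0$-input are minimal. The conceptual point is that a small minimum block forces almost all $1$'s of $\alpha^0$ into a single dominant group, after which the trick is to spend only an $O(1)$ amount of $1$-coordinate weight (spread uniformly over that group's $1$'s) which simultaneously certifies against the blocks of every other group — this keeps $M_{01}=O(1)$ even though the underlying fractional certificate complexity $C^*_{\alpha^0}(f)$ itself can be $\Omega(\sqrt n)$, and it is precisely this gap between $\rho$ and the row sums that the statement exploits.

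The remaining work — verifying in each case that $w^0$ really hits every minimal block and that all assigned weights lie in $[0,1]$, and checking the arithmetic leading to the $4\sqrt n$ bound — is routine but needs to be done carefully.
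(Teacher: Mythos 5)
Your proposal is correct and follows the same overall framework as the paper's proof: invoke Theorem~\ref{thm:m^lim}, reduce to showing that for every $f$-compatible selector $\vec{\alpha}$ some $\vec{\alpha}$-compatible $\vec{w}$ has $\rho(M_{\vec{\alpha},\vec{w}})\le 4\sqrt{n}$, and exhibit explicit fractional witnesses. On the $1$-side the two proofs are essentially identical (the paper uses the integer witness putting weight $1$ on exactly $d$ ones of $\alpha^1$, giving row $(n-k,\,d)$; your fractional $1/(s-d+1)$ spread gives a row dominated by that). On the $0$-side, however, your case split is genuinely different from the paper's. The paper cases directly on the structure of the support $R$ of $\alpha^0$ — empty, contained in one group, or spread over two groups — obtaining rows $(n/d,\,0)$, $(k,\,1)$, $(0,\,2)$ respectively, and then just checks the eigenvalues of the three resulting matrices. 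You instead case on $m_{\max}$ versus $d$ and on the minimum block size $\beta^*$ versus $\sqrt n$, using a uniform witness when $\beta^*$ is large and a ``dominant-group'' fractional witness when $\beta^*$ is small. Both routes land on the same bound; the paper's case split is shorter and avoids the minimum-block-size bookkeeping, while yours gives a slightly tighter $M_{01}\le 2$ uniformly and isolates the phenomenon you mention (that $\rho$ can be much smaller than the row sum). One small slip to fix: your intermediate claim that $M_{00}\le\sqrt n+1$ and $M_{11}\le\sqrt n$ imply $(M_{00}-M_{11})^2\le n$ is false (e.g.\ $M_{11}=0$ gives $(\sqrt n+1)^2>n$); what you actually get is $(M_{00}-M_{11})^2\le(\sqrt n+1)^2$, so $\sqrt{(M_{00}-M_{11})^2+4M_{01}M_{10}}\le\sqrt{9n+2\sqrt n+1}\le 3\sqrt n+1$ and $\rho\le(5\sqrt n+2)/2<4\sqrt n$; the conclusion survives, but the displayed chain needs this correction.
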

 \begin{proof}
 We proceed by showing that for any assignment selector $\vec{\alpha} = (\alpha^0,\alpha^1)$, we can find a pair of hitting sets $(w^0,w^1)$ such that the corresponding profile matrix has all eigenvalues less than $4 \sqrt{n}$. We look at the 0 assignments first, and for each possible $\alpha^0$ we exhibit a small fractional hitting set $w_0$.
    
 \textbf{Case 1}, $\alpha^0 = (0,0,\ldots,0)$:
 
 Here we choose $w_0:=(\frac{1}{d},\frac{1}{d},\cdots, \frac{1}{d})$. It follows that, $w_0$ is a fractional hitting set as each block for this assignment has size at least $d$. For this hitting set the profile vector 
 $p_{\alpha^0}(w_0) = (\frac{n}{d},0)$.
 
 \textbf{Case 2}, $|\alpha^0| = j$, and all 1's in $\alpha^0$ appear in the same group:
 
 Note this means that $j < d$ as $\alpha^0$ is a 0 assignment. Let $X_1$ be the set of indices for $\alpha^0$ which are 1's, let $G_1$ be the group which contains $X_1$. Pick an $s \in X_1$, we define a fractional hitting set $w_0$ to assign weight $1$ to $s$,  weight $1$ to all indices in $G_1 \setminus X_1$, and weight 0 otherwise. To see that $w_0$ is indeed a hitting set, note that if $B$ is a block for $\alpha^0$, then either $B \subseteq G_1$ or $X_1 \subset B$. If $X_1 \subset B$, then $s \in B$ and it has been assigned weight 1. If $B \subseteq G_1$ then $B$ must contain a 0 index in $G_1$ as $|\alpha^0| <d$, this index was assigned weight 1 by $w_0$. Thus $w_0$ is a hitting set and the profile vector $p_{\alpha^0}(w_0) = (k-j,1) \leq (k, 1)$.

 \textbf{Case 3}, At least two different groups in $\alpha^0$ contain 1's:
 
 Let $G_1, G_2$ be two distinct groups containing 1's. Let $X_1, X_2$ be the set of indices which are assigned 1 by $\alpha^0$ in $G_1, G_2$ respectively. Then if $B$ is a block for $\alpha^0$, either $X_1 \subseteq B$ or $X_2 \subseteq B$. We define $w_0$ to assign weight 1 to an index in $X_1$ and in index in $X_2$. This will be a hitting set, and the profile vector $p_{\alpha^0}(w_0) = (0,2)$. This concludes the analysis of each possible 0 assignment.

 \textbf{The 1 assignments $\alpha^1$:}

 If $\alpha^1$ is a 1 assignment then $|\alpha^1| \geq d$ and all the 1's appear in a single group, call it $G_1$. In this case we define $w_1$ to assign weight 1 to all indices outside $G_1$, and weight 1 to $d$ indices in $G_1$ which are assigned 1 by $\alpha^1$. This will be a hitting set as any block must contain a 0 index outside of $G_1$ or leave less than $d$ 1's inside of $G_1$ after flipping the indices in $B$. Here the profile vector $p_{\alpha^1}(w_1) = (n-k,d)$.
 
  If $M, M'$ are $2 \times 2$ matrices with nonnegative entries, and $M \leq M'$ entry by entry, then $\rho(M) \leq \rho(M')$. Considering this along with the 3 cases of 0 assignments above, bounding $\widehat{C^*}(f)$ reduces to bounding the largest eigenvalues of the following matrices:
  \[  \begin{bmatrix}
      \frac{n}{d} & 0 \\
      n-k & d
    \end{bmatrix} \ \ \ \
     \begin{bmatrix}
      k & 1 \\
      n-k & d
    \end{bmatrix} \ \ \ \
     \begin{bmatrix}
      0 & 2 \\
      n-k & d
    \end{bmatrix}
    \]
    Here the second matrix has the largest eigenvalue of the three. It is easy to check that $k= 2\sqrt{n}$, $d = \sqrt{n}$ implies its largest eigenvalue is less than $4 \sqrt{n}$.
 \end{proof}
 
 \begin{claim} 
 \[\widehat{C}(f) \geq \frac{n}{2}.\]
 \end{claim}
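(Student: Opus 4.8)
The plan is to produce a single $f$-compatible assignment selector $\vec{\alpha}=(\alpha^0,\alpha^1)$ for which \emph{every} profile matrix in $\mc{M}_{\vec{\alpha}}(f)$ has spectral radius at least $n/2$; this gives $\widehat{C}_{\vec{\alpha}}(f)\geq n/2$ and hence $\widehat{C}(f)\geq n/2$ by the definition of $\widehat{C}(f)$ as a maximum over selectors. The crucial choice is to take $\alpha^0 = (0,0,\ldots,0)$; since $|\alpha^0| = 0 < d$ we have $f(\alpha^0)=0$, and we may take $\alpha^1$ to be the input whose $1$'s are exactly the $k$ coordinates of one group, so that $f(\alpha^1)=1$ (it has $k\geq d$ ones, all in one group) and $\vec{\alpha}$ is $f$-compatible.

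Now observe that because every coordinate of $\alpha^0$ is $0$, for any weight function $w^0$ the profile vector $p_{\alpha^0}(w^0)$ is $(|w^0|,0)$: all the weight falls in column $0$. Hence for any $(f,\mc{W})$-compatible weight-function selector $\vec{w}=(w^0,w^1)$, the profile matrix $M_{\vec{\alpha},\vec{w}}$ has first row $(|w^0|,0)$ and is therefore lower triangular. A lower-triangular nonnegative $2\times 2$ matrix has its diagonal entries as its eigenvalues, so
\[
\rho(M_{\vec{\alpha},\vec{w}}) \;\geq\; |w^0| \;\geq\; C_{\alpha^0}(f),
\]
the last inequality because $w^0$ ranges over witnesses (hitting sets for $\mc{B}_{\alpha^0}(f)$) and $C_{\alpha^0}(f)$ is by definition the minimum size of such a witness.

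It remains to lower bound $C_{0^n}(f)$. A block of $f$ at $0^n$ is a set $B$ with $f(0^n\oplus B)=1$, i.e.\ a set of at least $d$ coordinates all lying inside a single group; the minimal blocks are exactly the $d$-subsets of a single group. So a witness $H$ must, inside each group $G$ (of size $k$), hit every $d$-subset of $G$, which forces $|G\setminus H|\leq d-1$, i.e.\ $|H\cap G|\geq k-d+1$. Summing over the $n/k$ pairwise disjoint groups gives $C_{0^n}(f)=\tau(\mc{B}_{0^n}(f))\geq \frac{n}{k}(k-d+1)$, and with $k=2\sqrt n$ and $d=\sqrt n$ this equals $\frac{\sqrt n}{2}(\sqrt n+1)=\frac{n+\sqrt n}{2}\geq \frac n2$. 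Combining with the previous paragraph, $\rho(M_{\vec{\alpha},\vec{w}})\geq n/2$ for every admissible $\vec{w}$, hence $\widehat{C}_{\vec{\alpha}}(f)\geq n/2$ and $\widehat{C}(f)\geq n/2$.

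I do not expect a genuine obstacle here: the only point requiring care is recognizing that choosing $\alpha^0$ to be the all-zeros input forces the relevant profile matrices to be triangular, which makes the eigenvalue lower bound immediate and reduces everything to the elementary hitting-set count for the hypergraph of $d$-subsets of the groups. (If one wants a cleaner statement, note this same argument actually shows $\widehat{C}(f)\geq C_{0^n}(f)=\frac{n+\sqrt n}{2}$.)
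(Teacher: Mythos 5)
Your proof is correct and takes essentially the same approach as the paper: both choose $\alpha^0 = 0^n$ (which forces the profile matrix to be lower triangular, with $(0,0)$-entry at least $C_{0^n}(f)$) and then bound $C_{0^n}(f) \geq \frac{n}{k}(k-d+1)$ by the per-group hitting-set count. The only difference is cosmetic: the paper also writes down the second row of the profile matrix (using an $\alpha^1$ with exactly $d$ ones) and then reads off the eigenvalue of the resulting lower-triangular matrix, whereas you observe directly that triangularity makes the second row irrelevant, which is a marginally cleaner presentation of the same argument.
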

 \begin{proof}
   To prove this we choose an assignment selector $\vec{\alpha}$ for which all profile matrices $A \in \mc{M}_{\vec{\alpha}}(f)$ have an eigenvalue larger than $\frac{n}{2}$. We set $\alpha^0 := (0,0,\cdots,0)$ and $\alpha^1$ to have exactly $d$ 1's in the first group, and be identically 0 in every other group.
   
   Any certificate for $\alpha^0$ must fix $k-d+1$ indices in each group, thus must fix $\frac{n}{k}(k-d+1)$ in total. It follows that any minimum certificate $w_0$ (viewed as a boolean valued weight function) yields the profile vector $p_{\alpha^0}(w_0) = (\frac{n}{k}(k-d+1),0)$.
   
   Likewise, any certificate for $\alpha^1$ must fix all 1 indices (there are $d$ of them), and fix all the 0 indices outside the unique group containing the 1's. Thus any minimal profile vector $p_{\alpha^1}(w_1) = (d, n-k)$. The claim then reduces to looking at the maximum eigenvalue of the matrix 
   \[A = \begin{bmatrix}
       \frac{n}{k}(k-d+1) & 0 \\
       n-k & d        
     \end{bmatrix}.\]
    When $k = 2\sqrt{n}$ and $d = \sqrt{n}$ this matrix has an eigenvalue larger than $\frac{n}{2}$.

   \end{proof}

\subsection{A separation between fractional block sensitivity and block sensitivity}\label{sec:example3}

\begin{theorem}
\label{thm_bssep}
For infinitely many natural numbers $n$, there is an $n^2$-variate 
function $f_n:\{0,1\}^{n^2}\rightarrow \{0,1\}$ s.t. $bs(f_n) = O(n)$ and $bs^{\ast}(f_n) = \Omega(n^{3/2})$.
Therefore $\crit(bs^*,bs) \geq 3/2$.
\end{theorem}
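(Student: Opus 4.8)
The plan is to apply Proposition~\ref{OR composition}. Fix $k$ large and set $n:=\binom{k}{2}$ (already an infinite set of values of $n$). Let $g=g_n$ be the $n$-variate function whose variable set we identify with the edge set $E(K_k)$ of the complete graph on $[k]$, defined by: $g(x)=1$ iff $x$, viewed as a subset of $E(K_k)$, is a \emph{star} $E_a:=\{e\in E(K_k): a\in e\}$ for some vertex $a\in[k]$. (This is a slice-type function; in particular it is not monotone.) Put $f_n:=OR_n\circ g$, a function on $n^2$ variables. By Proposition~\ref{OR composition}, for $m\in\{bs,bs^{\ast}\}$ we have $m_0(f_n)=n\cdot m_0(g)$ and $m_1(f_n)=m_1(g)\le n$; hence $bs(f_n)=\max\{n\cdot bs_0(g),\,bs_1(g)\}$ and $bs^{\ast}(f_n)\ge n\cdot bs^{\ast}_0(g)$. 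So it suffices to prove (i) $bs_0(g)=O(1)$ and (ii) $bs^{\ast}_0(g)=\Omega(\sqrt n)$, and then invoke Proposition~\ref{prop:crit LB}.

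For (ii), I would look at the $0$-input $0^n$. A block of $g$ at $0^n$ is a set $B$ with $g(0^n\oplus B)=1$, i.e.\ any star; the stars are pairwise incomparable, so $\partial\mc{B}_{0^n}(g)=\{E_a:a\in[k]\}$, and therefore $bs^{\ast}_{0^n}(g)=\nu^{\ast}(\{E_a\})=\tau^{\ast}(\{E_a\})$. Now a fractional hitting set for the family $\{E_a:a\in[k]\}$ is exactly a nonnegative edge-weighting $\beta$ of $K_k$ in which every vertex has $\beta$-weighted degree at least $1$, and its total weight equals $\tfrac12\sum_{a\in[k]}\deg_\beta(a)\ge k/2$, with equality for the uniform weighting $\beta\equiv 1/(k-1)$. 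Hence $bs^{\ast}_0(g)\ge bs^{\ast}_{0^n}(g)=k/2=\Theta(\sqrt n)$.

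For (i) — this is the delicate part — observe that at any $0$-input $T\subseteq E(K_k)$ (so $T$ is not a star) every block of $g$ has the form $T\triangle E_a$, and $T\triangle E_a$ is disjoint from $T\triangle E_b$ iff $E_a\cap E_b\subseteq T\subseteq E_a\cup E_b$; equivalently, the edge $\{a,b\}$ lies in $T$ and $T\subseteq E_a\cup E_b$. If $T\triangle E_{a_1},\dots,T\triangle E_{a_4}$ were four pairwise-disjoint blocks, then all $\binom{4}{2}$ edges $\{a_i,a_j\}$ would lie in $T$, while $T\subseteq\bigcap_{i<j}(E_{a_i}\cup E_{a_j})$; but the edge $\{a_1,a_2\}\in T$ is disjoint from $\{a_3,a_4\}$, so $\{a_1,a_2\}\notin E_{a_3}\cup E_{a_4}$, a contradiction. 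Hence $bs_x(g)\le 3$ at every $0$-input, i.e.\ $bs_0(g)\le 3$. The point to watch is exactly this ``collapse'': no $T$ can simultaneously contain all pairwise intersections of four stars while lying inside all their pairwise unions, which is what keeps block sensitivity bounded at \emph{every} $0$-input. (A projective-plane-based gadget fails here: the analogous ``line'' configuration leaks into a $0$-input carrying $\Omega(\sqrt n)$ pairwise disjoint blocks, so that construction would not give $bs(f_n)=O(n)$.)

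Combining (i) and (ii) gives $bs(f_n)=O(n)$ while $bs^{\ast}(f_n)=\Omega(n^{3/2})$. Since $bs(f_n)\to\infty$ and $\log bs^{\ast}(f_n)/\log bs(f_n)\to 3/2$, Proposition~\ref{prop:crit LB} yields $\crit(bs^{\ast},bs)\ge 3/2$, completing the proof.
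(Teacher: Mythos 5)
Your proposal is correct and follows essentially the same route as the paper: the same star-indicator function $g$ on the edges of $K_k$, the same composition $f_n=OR_n\circ g$ via Proposition~\ref{OR composition}, and the same combinatorial argument that four pairwise-disjoint blocks $T\triangle E_{a_1},\dots,T\triangle E_{a_4}$ would force the edge $\{a_1,a_2\}$ both into and out of $T$. The only cosmetic difference is in bounding $bs^{\ast}_{0^n}(g)$: the paper directly exhibits the fractional packing assigning weight $1/2$ to each star, while you pass through LP duality $\nu^{\ast}=\tau^{\ast}$ and lower-bound the fractional cover; both give $k/2$.
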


To construct $f_n$, we build an $n$-variate function $g=g_n:\{0,1\}^{n}\rightarrow\{0,1\}$ 
satisfying $bs_0(g) = O(1)$ and $bs_0^*(g) = \Omega(\sqrt{n})$. 
We then define $f_n=OR_n \circ g$.
Using Proposition \ref{OR composition}  we conclude that $bs(f_n) = O(n)$ and $bs^*(g) = \Omega(n^{3/2})$.
(In a previous version of this paper our construction for the function $g$ was random, and gave a  weaker bound of 
$\Omega(\sqrt{n/\log n})$ for $bs_0^*(g)$, which was still enough to show $crit(bs^*,bs)  \geq 3/2$.
Avishay Tal (personal communication) gave an alternate explicit construction
which gave the bound of the theorem, which is what we present here.)
 
The function $g=g_n$ is defined for any $n$ of the form $\binom{s}{2}$ for an integer $s$.  Identify the input
bits of $g$ with the edges of the complete graph $K_s$.  An assignment $\alpha$ to the variables of $g$ can be viewed as an
undirected graph $G_{\alpha}$ consisting of those edges assigned 1 by $\alpha$. 
We denote by $H_i$ the star centered at vertex $i$ and by $x^i$ the corresponding input in $\{0,1\}^{I_1}$. The function $g(x)$ is defined to be $1$ iff $x=x^i$ for some $i\in [n]$. 

We now show that $g$ satisfies:

\begin{enumerate}[(a)]
\item $g(0^{n}) = 0$,
\item $bs^\ast_{0^{n}}(g) \geq s/2 = \Theta(\sqrt{n})$,
\item $bs_0(g)  \leq 3$.
\end{enumerate}

Property (a)  is immediate. For property (b),  note that the blocks for $g$ at $0^s$ are the stars $H_i$, and each edge appears
in exactly 2 of these stars, so putting weight $1/2$ on each of these stars gives a fractional
packing of blocks of total weight $s/2$.

We now prove property (c). Fix any assignment $a\in g^{-1}(0)$ and let $G_a$ denote the corresponding graph. We show that $bs_a(g) \leq 3$. Assume, for the sake of contradiction, that $bs_a(g) \geq 4$. Then, there exists four edge-disjoint
graphs $J_1, J_2, J_3$, and $J_4$ such that starting from $a$ and flipping all the bits indexed by $J_\ell$ (for any $\ell\in [4]$) produces one
of the graphs $H_i$.  
By renaming input bits if necessary, we may assume that the star graphs thus produced are $H_1, H_2, H_3$, and $H_4$ respectively.   Thus $J_i = G_a \oplus H_i$, where $\oplus$ denotes symmetric difference.  Since edge $\{1,2\}$ belongs to $H_1$ and $H_2$ we must have $\{1,2\} \in G_a$
so that $J_1$ and $J_2$ are disjoint.  But then $\{1,2\} \in J_3 \cap J_4$, contradicting their disjointness.

\section{Acknowledgements}
We would like to thank Avishay Tal for his permission to include his separating example used
in the proof of Theorem \ref{thm_bssep}.

\bibliographystyle{alpha}
\bibliography{tensor}

\appendix

\section{Fractional Certificate complexity vs. Randomized Certificate complexity}
\label{sec:CstarvsRC}
In \cite{aaronsonqcc}, Aaronson introduced the notion of the \emph{Randomized Certificate complexity} of a boolean function $g:\{0,1\}^n\rightarrow\{0,1\}$. 

For $g$ as above and an input $z\in \{0,1\}^n$, a \emph{Randomized Verifier} for $z$ is a non-adaptive randomized query algorithm that expects an $n$-bit input $z'$ and behaves as follows.\footnote{Strictly speaking, this corresponds to the definition of \emph{non-adaptive} Randomized Certificate complexity from Aaronson's paper. However, by Lemma $2.1$ of \cite{aaronsonqcc}, it follows that this is within a fixed universal constant of the Randomized Certificate complexity of $f$.} The query algorithm queries each bit $z'_i$ of its input independently with some fixed probability $\lambda_i\in [0,1]$ and accepts iff it finds no disagreement between $z'$ and $z$. Moreover, the query algorithm satisfies the following soundness property: given any $z'$ s.t. $g(z')\neq g(z)$, it rejects $z'$ with probability at least $1/2$.

The cost of such a verifier is the expected number of bits of $z'$ that are queried, which is $\sum_{i\in [n]} \lambda_i$. The Randomized Certificate complexity of $f$ at $z$ is defined to be $RC^z(f) :=\min\setcond{c}{\text{There is a cost $c$ verifier for $z$}}$. The Randomized Certificate complexity of $f$ is defined to be $RC(f) := \max_{z\in \{0,1\}^n}RC^z(f)$.

The following relation between $RC(g)$ and $C^\ast(g)$ can be proved.

\begin{claim}
\label{claim_RCvsCfrac}
Fix any boolean function $g:\{0,1\}^n\rightarrow \{0,1\}$ and any $z\in \{0,1\}^n$. Then, $RC^z(g) = \Theta(C^\ast_z(g))$. That is, the quantities $RC^z(g)$ and $C^\ast_z(g)$ are within a fixed universal constant factor for any $g,z$ as above. In particular, $RC(g) = \Theta(C^\ast(g))$.
\end{claim}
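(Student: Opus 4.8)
The plan is to set up a dictionary between randomized verifiers and fractional hitting sets for the block hypergraph, and then compare the two optimization problems by two scalar inequalities. Fix $z\in\{0,1\}^n$. The inputs $z'$ with $g(z')\neq g(z)$ are exactly those of the form $z\oplus B$ with $B$ a block of $g$ at $z$, and since a verifier accepts $z\oplus B$ precisely when it queries no coordinate of $B$, the soundness requirement says: $\prod_{i\in B}(1-\lambda_i)\le 1/2$ for every block $B$. As $0\le 1-\lambda_i\le 1$, it suffices to impose this over min-blocks, so soundness is equivalent to $\prod_{i\in B}(1-\lambda_i)\le 1/2$ for all $B\in\partial\mc{B}_z(g)$. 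Taking $\log_2$ and substituting $w_i:=\log_2\frac{1}{1-\lambda_i}\in[0,\infty]$ (with $w_i=\infty$ iff $\lambda_i=1$), soundness becomes exactly the condition that $w$ is a fractional hitting set for $\partial\mc{B}_z(g)$; and the cost $\sum_i\lambda_i$ equals $\sum_i(1-2^{-w_i})$. Conversely, any nonnegative fractional hitting set $w$ yields a sound verifier via $\lambda_i:=1-2^{-w_i}$. Hence
\[
RC^z(g)=\min\Big\{\sum_i(1-2^{-w_i})\ :\ w\ge 0,\ w(B)\ge 1\ \text{for all } B\in\partial\mc{B}_z(g)\Big\}.
\]

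Next I would observe that $C^*_z(g)=\min\{|w|:w(B)\ge 1\ \forall B\in\partial\mc{B}_z(g),\ 0\le w\le 1\}$ is unchanged if the constraint $w\le 1$ is dropped: capping any feasible $w$ at $1$ coordinatewise keeps every block-constraint satisfied (a block whose sum dropped still contains a coordinate of value $1$) and does not increase $|w|$. Thus $RC^z(g)$ and $C^*_z(g)$ are minima of two different objectives over the \emph{same} feasible set. For one direction, plugging the $C^*_z$-optimal $w$ (which lies in $[0,1]^n$ and is feasible) into the $RC^z$ expression and using $1-2^{-x}\le x$ for $x\ge 0$ gives $RC^z(g)\le C^*_z(g)$. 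For the other, take the $RC^z$-optimal $w^*$ and set $w_i:=\min(w^*_i,1)$, which is feasible for the capped $C^*_z$ program; the elementary inequality $\min(x,1)\le 2(1-2^{-x})$ (valid for all $x\ge 0$) yields $C^*_z(g)\le\sum_i\min(w^*_i,1)\le 2\sum_i(1-2^{-w^*_i})=2\,RC^z(g)$. Combining, $\frac{1}{2}C^*_z(g)\le RC^z(g)\le C^*_z(g)$, and taking the maximum over $z$ gives $RC(g)=\Theta(C^*(g))$. The footnote's reduction (Lemma 2.1 of \cite{aaronsonqcc}) then transfers this from the non-adaptive variant to $RC$ itself.

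This argument is essentially routine; the single point of insight is the substitution $w_i=\log_2\frac{1}{1-\lambda_i}$, which linearizes the multiplicative soundness constraint into the additive fractional-hitting-set constraint, after which only the two one-variable inequalities and the standard capping observation are needed. The only mild care points I anticipate are the degenerate case $\lambda_i=1$ (handled by permitting $w_i=\infty$, which never arises on the $C^*$ side since there $w\le 1$) and verifying that the $[0,1]$ cap built into the definition of $\mc{W}^*$ is without loss.
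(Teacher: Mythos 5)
Your proof is correct and reaches the same constants $RC^z(g)\le C^*_z(g)\le 2\,RC^z(g)$ as the paper's, via the same underlying dictionary between non-adaptive verifiers (the probabilities $\lambda_i$) and fractional hitting sets for $\partial\mc{B}_z(g)$. The presentational difference is that the paper proves the two directions directly — for $C^*\le 2RC$ it uses the union bound to pass from $\prod_{i\in B}(1-\lambda_i)\le 1/2$ to $\sum_{i\in B}\lambda_i\ge 1/2$ and then caps $\sigma_i=\min\{2\lambda_i,1\}$, and for $RC\le C^*$ it bounds the acceptance probability by $\prod_{i\in B}(1-\sigma(i))\le e^{-\sum_{i\in B}\sigma(i)}\le e^{-1}<1/2$ — whereas your explicit change of variables $w_i=\log_2\frac{1}{1-\lambda_i}$ repackages both as a comparison of two objectives over the same feasible set via the scalar inequalities $1-2^{-x}\le x$ and $\min(x,1)\le 2(1-2^{-x})$; this is a clean reframing (and your preliminary observation that the cap $w\le 1$ in the definition of $\mc{W}^*$ is without loss is correct and worth noting), but the essential content coincides with the paper's argument.
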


\begin{proof}
We first show that $C^{\ast}_z(g) = O(RC^z(g))$. Fix an optimal verifier for $z$ and let $\lambda_i\in [0,1]$ be the probability that it queries the $i$th bit of its input $z'$. Fix any block $B\in\mc{B}_z(g)$ and consider the input $z'$ obtained by starting with $z$ and flipping the bits indexed by $B$. Since $g(z')\neq g(z)$, the soundness property of the verifier implies that the verifier probes a bit in $B$ is at least $1/2$; in particular, by the union bound, $\sum_{i\in B}\lambda_i \geq 1/2$. 

Consider the function $\sigma:[n]\rightarrow [0,1]$ defined by $\sigma(i):= \min\{2\lambda_i, 1\}$. The above immediately implies that for any block $B\in\mc{B}_z(g)$, we have $\sum_{i\in B}\sigma(i)\geq 1$ and hence $\sigma\in\mc{W}^*_z(g)$. Moreover, $|\sigma|\leq 2\sum_i \lambda_i = O(RC^z(g))$ since we considered an optimal verifier for $z$. Thus, $C^\ast_z(g) \leq |\sigma|= O(RC^z(g))$.

We now show that $RC^z(g) = O(C^\ast_z(g))$. Fix an optimal fractional certificate $\sigma\in\mc{W}^\ast_z(g)$. Consider the randomized query algorithm that queries each bit of its $n$-bit input $z'$ with probability $\sigma(i)$ and rejects on finding any disagreement with $z$. To show that this gives us a verifier for $z$, we need to verify the soundness property. Given any input $z'$ s.t. $g(z')\neq g(z)$, the set of indices $B\subseteq [n]$ where $z$ and $z'$ differ is a block of $g$ at $z$ and hence, we must have $\sum_{i\in B} \sigma(i) \geq 1$. 

Thus, the probability that the verifier accepts $z'$ is equal to $\prod_{i\in B}(1-\sigma(i))\leq \exp\{-\sum_{i\in B}\sigma(i)\} \leq e^{-1}< 1/2$. This proves the soundness property of the verifier. Note that the expected number of queries made by the verifier is exactly $|\sigma| = C^{\ast}_z(g)$ and hence, $RC^z(g) \leq C^{\ast}_z(g)$.
\end{proof}



%
%
%

\end{document}